\tikzset{%
  every neuron/.style={
    circle,
    draw,
    minimum size=0.8cm
  },
  neuron missing/.style={
    draw=none, 
    scale=3,
    text height=0.333cm,
    execute at begin node=\color{black}$\vdots$
  },
}
\tikzset{
	>=stealth',
	dot/.style={
		radius=0.05,
		fill=black,
	},
	gnode/.style={
          draw,
          shape=circle,
          inner sep=4pt
	},
	dummynode/.style={
		  gnode
		, fill=black
	},
	edgeto/.style={
		->,
		shorten <=2pt,
		shorten >=2pt,
		bend right=15,},
	edgeboth/.style={
		<->,
		shorten <=2pt,
		shorten >=2pt,
		},
	edgefrom/.style={
		<-,
		shorten <=2pt,
		shorten >=2pt,
		bend left=15,}
}
\theoremstyle{remark}
\theoremstyle{plain}
\newtheorem{theorem}{Theorem}
\newtheorem{lemma}[theorem]{Lemma}
\newtheorem{corollary}[theorem]{Corollary}
\newtheorem{construction}[theorem]{Construction}
\newtheorem{rrule}{Reduction Rule}[section]
\theoremstyle{definition}
\newtheorem{definition}[theorem]{Definition}
\crefname{ineq}{inequality}{inequalities}
\crefname{cond}{Condition}{Conditions}
\crefname{algorithm}{Algorithm}{Algorithms}
\crefname{line}{Line}{Lines}
\crefname{rrule}{Reduction Rule}{Reduction Rules}
\crefname{chapter}{Chapter}{Chapters}
\crefname{section}{Section}{Sections}
\crefname{subsection}{Section}{Sections}
\crefname{theorem}{Theorem}{Theorems}
\crefname{obs}{Observation}{Observations}
\crefname{proposition}{Proposition}{Propositions}
\crefname{corollary}{Corollary}{Corollaries}
\crefname{lemma}{Lemma}{Lemmas}
\crefname{figure}{Figure}{Figures}
\crefname{construction}{Construction}{Constructions}
\crefname{definition}{Definition}{Definitions}
\newcommand{\NCEA}{\textsc{NCE}\xspace}
\newcommand{\numO}{\textsc{\#}}
\newcommand{\numOlong}{\textsc{Numbers Only}\xspace}
\newcommand{\DDCSClong}{\textsc{Digraph Degree Constraint Sequence Completion}\xspace}
\newcommand{\DDCSC}{\textsc{DDConSeqC}\xspace}
\newcommand{\nDDCSClong}{\numOlong \DDCSClong} 
\newcommand{\nDDCSC}{\numO\DDCSC}
\newcommand{\DDCClong}{\textsc{Digraph Degree Constraint Completion}\xspace}
\newcommand{\DDCC}{\textsc{DDConC}\xspace}
\newcommand{\nDDCClong}{\numOlong \DDCClong}
\newcommand{\nDDCC}{\numO\DDCC}
\newcommand{\DDSClong}{\textsc{Digraph Degree Sequence Completion}\xspace}
\newcommand{\DDSC}{\textsc{DDSeqC}\xspace}
\newcommand{\nDDSClong}{\numOlong \DDSClong} %
\newcommand{\nDDSC}{\numO\DDSC} %
\newcommand{\DAlong}{\textsc{Digraph Degree Anonymity}\xspace}
\newcommand{\DA}{\textsc{DDA}\xspace}
\newcommand{\nDAlong}{\numOlong \DAlong} %
\newcommand{\nDA}{\numO\DA} %
\DeclareMathOperator{\diff}{diff}
\newcommand{\N}{\ensuremath{\mathds{N}}}
\newcommand{\degpara}{\ensuremath{\Delta^*}\xspace}
\newcommand{\problemtitle}[1]{\gdef\@problemtitle{#1}}
\newcommand{\probleminput}[1]{\gdef\@probleminput{#1}}
\newcommand{\problemquestion}[1]{\gdef\@problemquestion{#1}}
  \par\addvspace{.5\baselineskip}
    \normalsize\textbf{Input:} & \normalsize\@probleminput \\
    \normalsize\textbf{Question:} & \normalsize\@problemquestion
  \par\addvspace{.5\baselineskip}
\begin{document}

\title{A Parameterized Algorithmics Framework for Degree Sequence Completion Problems in Directed Graphs}

\author[]{Robert Bredereck}
\author[]{Vincent Froese}
\author[]{Marcel Koseler}
\author[]{Marcelo Garlet Millani}
\author[]{André~Nichterlein}
\author[]{Rolf Niedermeier}

\affil[]{Institut f\"ur Softwaretechnik und Theoretische Informatik,  TU Berlin, Germany,
 \texttt{\{robert.bredereck, vincent.froese, m.garletmillani, andre.nichterlein, rolf.niedermeier\}@tu-berlin.de}}
\date{}

\maketitle

\thispagestyle{scrheadings}
\cfoot{}
\ohead{}
\ifoot{}

\begin{abstract}
There has been intensive work on the parameterized complexity of
the typically NP-hard task to 
edit undirected graphs into graphs fulfilling certain given vertex degree 
constraints. In this work, we lift the investigations to the case of directed graphs; herein, we focus on arc insertions.
To this end, we develop a general two-stage framework which consists of efficiently solving a problem-specific number problem and transferring its solution to a solution for the graph problem by applying flow computations. In this way, we obtain fixed-parameter tractability and polynomial kernelizability
results, with the central parameter being the maximum vertex in- or outdegree of the output digraph.
Although there are certain similarities with the much better studied undirected case,
the flow computation used in the directed case seems not to 
work for the undirected case while $f$-factor computations as used 
in the undirected case seem not to work for the directed case.
\end{abstract}

\section{Introduction}
Modeling real-world networks (e.g., communication, ecological, social) 
often requests \emph{directed} graphs (digraphs for short). 
We study a class of specific ``network design'' (in the sense of constructing
a specific network topology) or ``graph realization''
problems. Here, our focus is on inserting arcs into a given digraph in order to fulfill certain vertex degree constraints. These problems are 
typically NP-hard, so we choose parameterized algorithm design 
for identifying relevant tractable special cases. The main parameter 
we work with is the maximum in- or outdegree of the newly constructed digraph.
We deal with the following three problems:
First, the problem \DDCClong (\DDCC), which asks to insert a minimum number of arcs
such that each vertex ends up with a degree as specified by an individual list of target degrees (see \Cref{fig:example DDCC}).
Second, the \DDSClong (\DDSC) problem, where the goal is to insert arcs in such a way that
the resulting digraph has a specific degree sequence (\Cref{fig:example DDSC}).
Third, \DAlong (\DA), which asks to ``$k$-anonymize'' a given digraph, that is, after inserting a minimum number of arcs, each combination of in- and outdegree occurs either zero or at least $k$~times (\Cref{fig:example DA}).
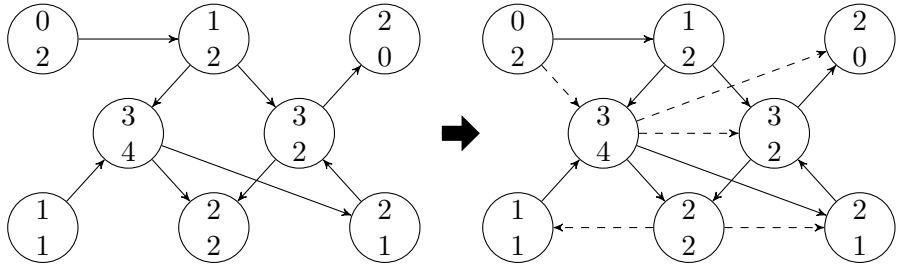
\begin{figure}[t]
	\centering
	\begin{tikzpicture}
	\node[gnode, inner sep=2.0pt, radius=3pt, minimum size=16pt,  draw=black] (path4136-5) at (0.0, 0.0) [align=left]{$1$\\$1$};
\node[gnode, inner sep=2.0pt, radius=3pt, minimum size=16pt,  draw=black] (path4136-62) at (2.25, 0.0) [align=left]{$2$\\$2$};
\node[gnode, inner sep=2.0pt, radius=3pt, minimum size=16pt,  draw=black] (path4136-9) at (4.5, 0.0) [align=left]{$2$\\$1$};
\node[gnode, inner sep=2.0pt, radius=3pt, minimum size=16pt,  draw=black] (path4136-7) at (1.125, 1.25) [align=left]{$3$\\$4$};
\node[gnode, inner sep=2.0pt, radius=3pt, minimum size=16pt,  draw=black] (path4136-35) at (3.375, 1.25) [align=left]{$3$\\$2$};
\node[gnode, inner sep=2.0pt, radius=3pt, minimum size=16pt,  draw=black] (path4136) at (0.0, 2.5) [align=left]{$0$\\$2$};
\node[gnode, inner sep=2.0pt, radius=3pt, minimum size=16pt,  draw=black] (path4136-3) at (2.25, 2.5) [align=left]{$1$\\$2$};
\node[gnode, inner sep=2.0pt, radius=3pt, minimum size=16pt,  draw=black] (path4136-6) at (4.5, 2.5) [align=left]{$2$\\$0$};
\draw[draw=black, ->] (path4136-5) edge (path4136-7);
\draw[draw=black, ->] (path4136-3) edge (path4136-35);
\draw[draw=black, ->] (path4136-35) edge (path4136-6);
\draw[draw=black, ->] (path4136-9) edge (path4136-35);
\draw[draw=black, ->] (path4136-35) edge (path4136-62);
\draw[draw=black, ->] (path4136-7) edge (path4136-62);
\draw[draw=black, ->] (path4136-7) edge (path4136-9);
\draw[draw=black, ->] (path4136) edge (path4136-3);
\draw[draw=black, ->] (path4136-3) edge (path4136-7);

\draw[
        -triangle 90,
        line width=0.5mm,
        postaction={draw, line width=2mm, shorten >=2mm, -}
    ] (5.25,1.25) -- (5.75,1.25);

	\begin{scope}[shift={(6.25,0)}]
	\node[gnode, inner sep=2.0pt, radius=3pt, minimum size=16pt,  draw=black] (path4136-5) at (0.0, 0.0) [align=left]{$1$\\$1$};
\node[gnode, inner sep=2.0pt, radius=3pt, minimum size=16pt,  draw=black] (path4136-62) at (2.25, 0.0) [align=left]{$2$\\$2$};
\node[gnode, inner sep=2.0pt, radius=3pt, minimum size=16pt,  draw=black] (path4136-9) at (4.5, 0.0) [align=left]{$2$\\$1$};
\node[gnode, inner sep=2.0pt, radius=3pt, minimum size=16pt,  draw=black] (path4136-7) at (1.125, 1.25) [align=left]{$3$\\$4$};
\node[gnode, inner sep=2.0pt, radius=3pt, minimum size=16pt,  draw=black] (path4136-35) at (3.375, 1.25) [align=left]{$3$\\$2$};
\node[gnode, inner sep=2.0pt, radius=3pt, minimum size=16pt,  draw=black] (path4136) at (0.0, 2.5) [align=left]{$0$\\$2$};
\node[gnode, inner sep=2.0pt, radius=3pt, minimum size=16pt,  draw=black] (path4136-3) at (2.25, 2.5) [align=left]{$1$\\$2$};
\node[gnode, inner sep=2.0pt, radius=3pt, minimum size=16pt,  draw=black] (path4136-6) at (4.5, 2.5) [align=left]{$2$\\$0$};
\draw[draw=black, ->] (path4136-5) edge (path4136-7);
\draw[draw=black, dashed, ->] (path4136) edge (path4136-7);
\draw[draw=black, ->] (path4136-3) edge (path4136-35);
\draw[draw=black, ->] (path4136-35) edge (path4136-6);
\draw[draw=black, dashed, ->] (path4136-7) edge (path4136-35);
\draw[draw=black, dashed, ->] (path4136-62) edge (path4136-9);
\draw[draw=black, ->] (path4136-9) edge (path4136-35);
\draw[draw=black, ->] (path4136-35) edge (path4136-62);
\draw[draw=black, ->] (path4136-7) edge (path4136-62);
\draw[draw=black, dashed, ->] (path4136-7) edge (path4136-6);
\draw[draw=black, ->] (path4136-7) edge (path4136-9);
\draw[draw=black, ->] (path4136) edge (path4136-3);
\draw[draw=black, ->] (path4136-3) edge (path4136-7);
\draw[draw=black, dashed, ->] (path4136-62) edge (path4136-5);
\end{scope}
	\end{tikzpicture}
	\caption{Example instance of \DDCClong.
	The numbers inside the vertices represent the desired degrees (indegree on top, outdegree on bottom).
	We can satisfy all demands by adding the dashed arcs.
	Note that, in general, the desired degrees of a vertex do not have to be unique.}
	\label{fig:example DDCC}
\end{figure}

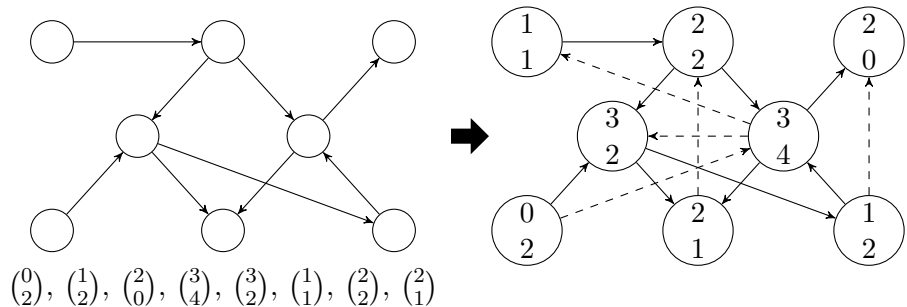
\begin{figure}[t]
	\centering
	\begin{tikzpicture}[xscale=1.5]
		
		\node[gnode, inner sep=2.0pt, radius=3pt, minimum size=16pt, fill=white, draw=black] (path4136-5) at (0.0, 0.0) [align=left]{};
		\node[gnode, inner sep=2.0pt, radius=3pt, minimum size=16pt, fill=white, draw=black] (path4136-62) at (1.5, 0.0) [align=left]{};
		\node[gnode, inner sep=2.0pt, radius=3pt, minimum size=16pt, fill=white, draw=black] (path4136-9) at (3.0, 0.0) [align=left]{};
		\node[gnode, inner sep=2.0pt, radius=3pt, minimum size=16pt, fill=white, draw=black] (path4136-7) at (0.75, 1.25) [align=left]{};
		\node[gnode, inner sep=2.0pt, radius=3pt, minimum size=16pt, fill=white, draw=black] (path4136-35) at (2.25, 1.25) [align=left]{};
		\node[gnode, inner sep=2.0pt, radius=3pt, minimum size=16pt, draw=black] (path4136) at (0.0, 2.5) [align=left]{};
		\node[gnode, inner sep=2.0pt, radius=3pt, minimum size=16pt, draw=black] (path4136-3) at (1.5, 2.5) [align=left]{};
		\node[gnode, inner sep=2.0pt, radius=3pt, minimum size=16pt, draw=black] (path4136-6) at (3.0, 2.5) [align=left]{};
		\node[rectangle, thick, dashed] (rect5360) at (1.5, -0.75) [align=left]{\normalsize$0 \choose 2$, $1 \choose 2$, $2 \choose 0$, $3 \choose 4$, $3 \choose 2$, $1 \choose 1$, $2 \choose 2$, $2 \choose 1$};
		\draw[draw=black, ->] (path4136-5) edge (path4136-7);
		\draw[draw=black, ->] (path4136-3) edge (path4136-35);
		\draw[draw=black, ->] (path4136-35) edge (path4136-6);
		\draw[draw=black, ->] (path4136-9) edge (path4136-35);
		\draw[draw=black, ->] (path4136-35) edge (path4136-62);
		\draw[draw=black, ->] (path4136-7) edge (path4136-62);
		\draw[draw=black, ->] (path4136-7) edge (path4136-9);
		\draw[draw=black, ->] (path4136) edge (path4136-3);
		\draw[draw=black, ->] (path4136-3) edge (path4136-7);
		
		\draw[
        -triangle 90,
        line width=0.5mm,
        postaction={draw, line width=2mm, shorten >=2mm, -}
    ] (3.5,1.25) -- (3.83333,1.25);
		
		\begin{scope}[shift={(4.16666,0)}]
			\node[gnode, inner sep=2.0pt, radius=3pt, minimum size=16pt, fill=white, draw=black] (path4136-5) at (0.0, 0.0) [align=left]{$0$\\$2$};
		\node[gnode, inner sep=2.0pt, radius=3pt, minimum size=16pt, fill=white, draw=black] (path4136-62) at (1.5, 0.0) [align=left]{$2$\\$1$};
		\node[gnode, inner sep=2.0pt, radius=3pt, minimum size=16pt, fill=white, draw=black] (path4136-9) at (3.0, 0.0) [align=left]{$1$\\$2$};
		\node[gnode, inner sep=2.0pt, radius=3pt, minimum size=16pt, fill=white, draw=black] (path4136-7) at (0.75, 1.25) [align=left]{$3$\\$2$};
		\node[gnode, inner sep=2.0pt, radius=3pt, minimum size=16pt, fill=white, draw=black] (path4136-35) at (2.25, 1.25) [align=left]{$3$\\$4$};
		\node[gnode, inner sep=2.0pt, radius=3pt, minimum size=16pt, draw=black] (path4136) at (0.0, 2.5) [align=left]{$1$\\$1$};
		\node[gnode, inner sep=2.0pt, radius=3pt, minimum size=16pt, draw=black] (path4136-3) at (1.5, 2.5) [align=left]{$2$\\$2$};
		\node[gnode, inner sep=2.0pt, radius=3pt, minimum size=16pt, draw=black] (path4136-6) at (3.0, 2.5) [align=left]{$2$\\$0$};
		\draw[draw=black, ->] (path4136-5) edge (path4136-7);
		\draw[draw=black, ->] (path4136-3) edge (path4136-35);
		\draw[draw=black, ->] (path4136-35) edge (path4136-6);
		\draw[draw=black, ->] (path4136-9) edge (path4136-35);
		\draw[draw=black, ->] (path4136-35) edge (path4136-62);
		\draw[draw=black, ->] (path4136-7) edge (path4136-62);
		\draw[draw=black, ->] (path4136-7) edge (path4136-9);
		\draw[draw=black, ->] (path4136) edge (path4136-3);
		\draw[draw=black, ->] (path4136-3) edge (path4136-7);
		\draw[draw=black, dashed, ->] (path4136-5) edge (path4136-35);
		\draw[draw=black, dashed, ->] (path4136-9) edge (path4136-6);
		\draw[draw=black, dashed, ->] (path4136-35) edge (path4136-7);
		\draw[draw=black, dashed, ->] (path4136-62) edge (path4136-3);
		\draw[draw=black, dashed, ->] (path4136-35) edge (path4136);
		\end{scope}

	\end{tikzpicture}
	\caption{Example instance of \DDSClong.
	The target sequence is given below the digraph.
	Adding the dashed arcs produces a digraph with the desired degree sequence.
	}
	\label{fig:example DDSC}
\end{figure}

\begin{figure}[t]
	\centering
	\begin{tikzpicture}
	\begin{scope}
		\node (rect6685-1) at (2.25, -0.25) [align=center]{};
		\node[gnode, inner sep=2.0pt, radius=3pt, minimum size=16pt, fill=white, draw=black] (path4136-5) at (0.0, 0.0) [align=left]{};
		\node[gnode, inner sep=2.0pt, radius=3pt, minimum size=16pt, fill=white, draw=black] (path4136-62) at (2.25, 0.0) [align=left]{};
		\node[gnode, inner sep=2.0pt, radius=3pt, minimum size=16pt, fill=white, draw=black] (path4136-9) at (4.5, 0.0) [align=left]{};
		\node[gnode, inner sep=2.0pt, radius=3pt, minimum size=16pt, fill=white, draw=black] (path4136-7) at (1.125, 1.25) [align=left]{};
		\node[gnode, inner sep=2.0pt, radius=3pt, minimum size=16pt, fill=white, draw=black] (path4136-35) at (3.375, 1.25) [align=left]{};
		\node[gnode, inner sep=2.0pt, radius=3pt, minimum size=16pt, fill=white, draw=black] (path4136) at (0.0, 2.5) [align=left]{};
		\node[gnode, inner sep=2.0pt, radius=3pt, minimum size=16pt, fill=white, draw=black] (path4136-3) at (2.25, 2.5) [align=left]{};
		\node[gnode, inner sep=2.0pt, radius=3pt, minimum size=16pt, fill=white, draw=black] (path4136-6) at (4.5, 2.5) [align=left]{};
		\node (rect6685) at (2.25, 3.75) [align=center]{\normalsize 1-anonymous degree sequence:\\\normalsize $0\choose 1$, $1 \choose 2$, $1 \choose 0$, $2 \choose 2$, $2 \choose 2$, $0 \choose 1$, $2 \choose 0$, $1 \choose 1$};
		\draw[draw=black, ->] (path4136-5) edge (path4136-7);
		\draw[draw=black, ->] (path4136-3) edge (path4136-35);
		\draw[draw=black, ->] (path4136-35) edge (path4136-6);
		\draw[draw=black, ->] (path4136-9) edge (path4136-35);
		\draw[draw=black, ->] (path4136-35) edge (path4136-62);
		\draw[draw=black, ->] (path4136-7) edge (path4136-62);
		\draw[draw=black, ->] (path4136-7) edge (path4136-9);
		\draw[draw=black, ->] (path4136) edge (path4136-3);
		\draw[draw=black, ->] (path4136-3) edge (path4136-7);
	\end{scope}	
	\draw[
        -triangle 90,
        line width=0.5mm,
        postaction={draw, line width=2mm, shorten >=2mm, -}
    ] (5.25,1.25) -- (5.75,1.25);
	
	\begin{scope}[shift={(6.5,0)}]
		\node (rect6685-1) at (2.25, -0.25) [align=center]{};
	\node[gnode, inner sep=2.0pt, radius=3pt, minimum size=16pt, fill=white, draw=black] (path4136-5) at (0.0, 0.0) [align=left]{$1$\\$1$};
	\node[gnode, inner sep=2.0pt, radius=3pt, minimum size=16pt, fill=white, draw=black] (path4136-62) at (2.25, 0.0) [align=left]{$2$\\$2$};
	\node[gnode, inner sep=2.0pt, radius=3pt, minimum size=16pt, fill=white, draw=black] (path4136-9) at (4.5, 0.0) [align=left]{$1$\\$1$};
	\node[gnode, inner sep=2.0pt, radius=3pt, minimum size=16pt, fill=white, draw=black] (path4136-7) at (1.125, 1.25) [align=left]{$2$\\$2$};
	\node[gnode, inner sep=2.0pt, radius=3pt, minimum size=16pt, fill=white, draw=black] (path4136-35) at (3.375, 1.25) [align=left]{$2$\\$2$};
	\node[gnode, inner sep=2.0pt, radius=3pt, minimum size=16pt, fill=white, draw=black] (path4136) at (0.0, 2.5) [align=left]{$1$\\$1$};
	\node[gnode, inner sep=2.0pt, radius=3pt, minimum size=16pt, fill=white, draw=black] (path4136-3) at (2.25, 2.5) [align=left]{$2$\\$2$};
	\node[gnode, inner sep=2.0pt, radius=3pt, minimum size=16pt, fill=white, draw=black] (path4136-6) at (4.5, 2.5) [align=left]{$1$\\$1$};
	\node (rect6685) at (2.25, 3.75) [align=center]{\normalsize 4-anonymous degree sequence:\\\normalsize $1\choose 1$, $2 \choose 2$, $1 \choose 1$, $2 \choose 2$, $2 \choose 2$, $1 \choose 1$, $2 \choose 2$, $1 \choose 1$};
	\draw[draw=black, ->] (path4136-5) edge (path4136-7);
	\draw[draw=black, ->] (path4136-3) edge (path4136-35);
	\draw[draw=black, ->] (path4136-35) edge (path4136-6);
	\draw[draw=black, ->] (path4136-9) edge (path4136-35);
	\draw[draw=black, ->] (path4136-35) edge (path4136-62);
	\draw[draw=black, ->] (path4136-7) edge (path4136-62);
	\draw[draw=black, ->] (path4136-7) edge (path4136-9);
	\draw[draw=black, ->] (path4136) edge (path4136-3);
	\draw[draw=black, ->] (path4136-3) edge (path4136-7);
	\draw[draw=black, dashed, ->] (path4136-62) edge (path4136-3);
	\draw[draw=black, dashed, ->] (path4136-62) edge (path4136-5);
	\draw[draw=black, dashed, ->, bend right=20] (path4136-6) edge (path4136);
\end{scope}
	\end{tikzpicture}
	\caption{Example instance for \DAlong.
	The input digraph is $1$-anonymous since there is only one vertex with indegree 2 and outdegree 0.
	By adding the dashed arcs we obtain a digraph which is $4$-anonymous.}
	\label{fig:example DA}
\end{figure}
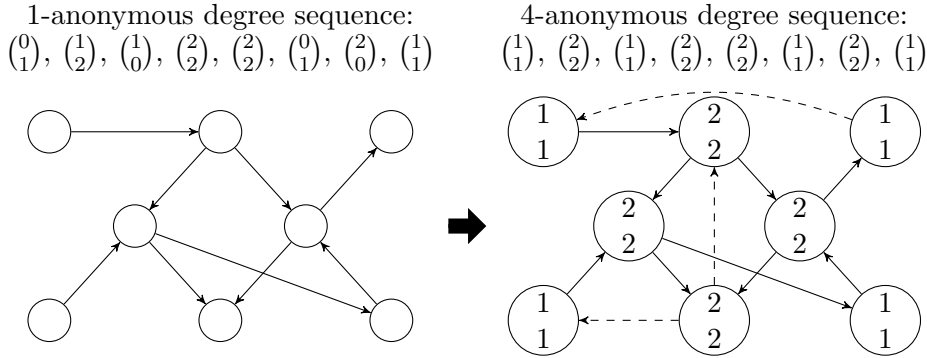

%
%

All three problems are NP-hard.
Based on a general framework presented in Section~\ref{sec:GeneralSetting},
we derive several fixed-parameter tractability results for them,
mainly exploiting the parameter ``maximum vertex degree'' in the output digraph.
Moreover, the three problems above are special cases of the \DDCSClong problem which we will define next. 
Before doing so, however, 
we go into a little more detail concerning the roots 
of the underlying graph-theoretic  problems studied here.
Since early computer science and algorithmic graph theory
days, studies on graph realizability of degree 
sequences (that is, multisets of positive integers or integer pairs) 
have played 
a prominent role, being performed both for undirected graphs~\cite{EG60,Hak62} 
as well as digraphs~\cite{Che66,Ful60,KW73,HN15}. 
Lately, the graph modification view gained more and more 
attention: 
given a graph, can it be changed by a minimum number of graph modifications such that the resulting graph adheres to 
specific constraints for its degree sequence?

In the most basic variant a degree sequence is a sequence of positive integers specifying (requested) vertex degrees for a fixed ordering of the vertices.
Typically, the corresponding computational problems are NP-hard.
In recent years, research in this direction focused on undirected graphs~\cite{FNN16,Gol15,GM17,HNNS15,MS12,MT09}.
In this work, we investigate parameterized algorithms on digraphs.
As \citet{GY08} observed, much less is known about the structure of digraphs than that of undirected graphs, making the design of parameterized algorithms for digraphs more challenging.
In particular, we present a general framework for a class of degree sequence modification problems, focusing on the case of arc insertions (that is, completion problems).

The most general degree completion problem for digraphs we consider in this work is as follows.

\begin{problemdef}
        \problemtitle{\DDCSClong(\DDCSC)}
        \probleminput{A digraph~$D=(V,A)$, a non-negative integer~$s$, a ``degree list function'' $\tau\colon V \rightarrow 2^{\{0,\ldots,r\}^2}$, and a ``sequence property''~$\Pi$.}
        \problemquestion{Is it possible to obtain a digraph $D'$ by inserting at most~$s$ arcs into~$D$ such that the degree sequence of~$D'$ fulfills~$\Pi$ and $\deg_{D'}(v)\in\tau(v)$ for all~$v\in V$?}
\end{problemdef}

We emphasize that there are two types of constraints---one (specified by the function~$\tau$, which gives us the in- and outdegrees) for the individual vertices and one (specified by~$\Pi$) for the whole list of degree tuples. For instance, 
a common~$\Pi$ as occurring in the context of data privacy applications 
is to request that the list is \emph{$k$-anonymous}, that is, 
every combination of in- and outdegree that occurs in the list occurs at least $k$~times (see also \Cref{fig:example DA}).

Since \DDCSC{} and its special cases as studied here all turn out 
to be NP-hard~\cite{Mil15,Kos15}, a parameterized complexity analysis
seems the most natural fit for understanding the computational
complexity landscape of these kinds of problems---this has also 
been observed in the above mentioned studies for the undirected case.
Our main findings are mostly on the positive side. That is,
although seemingly more intricate to deal with due to the 
existence of in- and outdegrees, many positive algorithmic results 
which hold for undirected graphs can also be achieved for 
digraphs (albeit using different techniques).
In particular, we present a maximum-flow-based framework that,
together with the identification and solution of certain number problems,
helps to derive several fixed-parameter tractability results with 
respect to the parameter maximum possible in- or outdegree~\degpara in any solution digraph.
Notably, the corresponding result in the undirected case was 
based on $f$-factor computations~\cite{FNN16} which do not transfer 
to the directed case, and, vice versa, the flow computation 
approach we present for the directed case seemingly does not transfer to the 
undirected case. 
For special cases of \DDCSC{}, we can move further and even derive 
some polynomial-size problem kernels, again for the parameter~\degpara.

We consider the parameter~\degpara for the following reasons.
First, it is always at most~$r$, a natural parameter in the input.
Second, in combination with~$\Pi$, we might get an even smaller upper bound for~\degpara.
Third, bounded-degree graphs are well studied and our work extends this since we only require~\degpara to be small, not to be constant.
Fourth, in practice, the maximum degree is often significantly smaller than the number of vertices:
\citet{LH08} studied a huge instant-messaging network (180 million vertices) with maximum degree~600.
Furthermore, in the context of anonymization it can empirically be observed that the maximum degree will not increase during the anonymization process~\cite{HHN14}. 
Thus, the parameter~\degpara is interesting when studying kernelization as we do.

\subsection{Related Work}

Most of the work on graph modification
problems for realizing degree constraints has focused on undirected 
graphs~\cite{FNN16,Gol15,GM17,HNNS15,MS12,MT09}. Closest to our work
is the framework for deriving polynomial-size problem kernels
for undirected degree sequence completion problems~\cite{FNN16},
which we complement by our results for digraphs. 
Generally speaking, we can derive similar results, but the technical
details differ and the landscape of problems is richer 
in the directed case.
As to digraph modification problems in general, we are aware of surprisingly 
little work. 
We mention work studying arc insertion for 
making a digraph transitive~\cite{WKNU12} or 
for making a graph 
Eulerian~\cite{DMNW13}, both employing the toolbox of parameterized
complexity analysis. 
Somewhat related is also work about the insertion of edges into a mixed graph to satisfy local edge-connectivity constraints~\cite{FJ95}
or about orienting edges in a partially oriented graph to make it an 
oriented graph~\cite{BHZ18}.

\subsection{Our Results}
In \cref{sec:GeneralSetting}, we present our general framework for \DDCSC{}, which is a two-stage approach based on flow computations.
To this end, we identify a specific pure number problem arising from the degree constraints.
We show that, if this number problem is fixed-parameter tractable with respect to the largest possible integer in the output, then \DDCSC{} is fixed-parameter tractable with respect to~\degpara.
Next, presenting applications of the framework, in \cref{sec:DDCC}, we show that if there is no constraint~$\Pi$ concerning the degree sequence (that is, we deal with the \DDCClong problem), then we not only obtain fixed-parameter 
tractability but also a polynomial-size problem kernel for parameter~\degpara.
Then, in \cref{sec:DDSC} we show an analogous result if there is exactly one specified degree sequence to be realized (\DDSClong).
Finally, in \cref{sec:DA}, we show that if we request the degree
sequence to be $k$-anonymous (that is, \DAlong), then we can at least derive a polynomial-size
problem kernel for the combined parameter~$(s,\Delta_D)$, where~$\Delta_D$ denotes the maximum in- or outdegree of the input digraph~$D$.
Also, we take a first step outlining the limitations of our framework for digraphs.
In contrast to the undirected case (which is polynomial-time solvable~\cite{LT08}), the corresponding number problem of \DAlong surprisingly is weakly NP-hard and, thus, presumably not polynomial-time solvable.
A summary of our results is provided in \cref{tab:summary}.

\begin{table}
  \centering
  \caption{Summary of our results for the three problems we studied (indicated by (digraph)) and the corresponding number problems (indicated by (number)). The parameters are defined as follows:
  $n$ is the number of vertices, $s$ is the maximum number of added arcs, $r$ is the maximum target in- or outdegree of a vertex, \degpara is the maximum in- or outdegree in any solution digraph,
  $\Delta_D$ is the maximum in- or outdegree in the input digraph, $\xi$ is the largest possible integer in the output sequence, and $k$ is the level of anonymity.}
  \label{tab:summary}
  \begin{tabularx}{\textwidth}{lp{1.5cm}Xr}
	\toprule
	Problem & & Result & Reference\\
	\midrule
	\multirow{3}{*}{\DDCC} & (number)  & $O(n(sr)^2)$-time solvable & \Cref{lem:nDDCC-poly}\\
                & (digraph) & $O(s(\degpara)^3)$-vertex kernel in $O(m+|\tau|+r^2)$ time & \Cref{thm:DDCCkrKernel}\\
                &  & $O((\degpara)^5)$-vertex kernel in $O(m+ns^3r^2)$ time & \Cref{cor:DDCCkernel}\\
	\midrule
	\multirow{3}{*}{\DDSC} & (number) & $O(n^{2.5})$-time solvable & \Cref{lem:TSC-poly}\\
                & (digraph) & $O(s(\degpara)^3)$-vertex kernel in $O(n+m+(\degpara)^2)$ time & \Cref{thm:DDSC_kDelta-kernel}\\
                & & $O((\degpara)^5)$-vertex kernel in $O(sn^{2.5})$ time & \Cref{cor:DDSCkernel}\\
	\midrule
	\multirow{5}{*}{\DA} & (number)  & weakly NP-hard & \Cref{thm:nDAhard}\\
				& &FPT w.r.t.\ $\xi$ & \Cref{thm:nDA-FPT}\\
                & (digraph) & FPT w.r.t.\ \degpara & \Cref{cor:DDAfpt}\\
                & & $O(s\Delta_D^5)$-vertex kernel in $O(\Delta_D^{10}s^2+\Delta_D^3sn)$ time & \Cref{thm:DirDegAnon-kernel-sDelta}\\
                & & FPT w.r.t.\ $(k,\Delta_D)$ & \Cref{cor:DA-FPT-deltastern}\\
	\bottomrule
\end{tabularx}
\end{table}

\section{Preliminaries} \label{notation}

\paragraph{Notation.}
We consider \emph{digraphs} (without multiarcs or loops) $D=(V,A)$ with $n \coloneqq |V|$ and $m \coloneqq |A|$.
For a vertex~$v\in V$, $\deg_D^-(v)$ denotes the \emph{indegree} of~$v$, that is, the number of incoming arcs of~$v$. Correspondingly, $\deg_D^+(v)$ denotes the \emph{outdegree}, that is, the number of outgoing arcs of~$v$.
We define the \emph{degree}~$\deg_D(v):=(\deg_D^-(v),\deg_D^+(v))$.
The set $V(A') \coloneqq \{ v \in V  \mid ((v,w) \in A' \vee (w,v) \in A') \wedge w \in V \}$ contains all vertices incident to an arc in $A'\subseteq V^2$.
For a set of arcs $A'\subseteq V^2$, $D+A'$ denotes the digraph~$(V,A\cup A')$, while $D[A']$ denotes the  subdigraph~$(V(A'), A')$.
Analogously, for a set of vertices $V' \subseteq V$, $D[V']$ denotes the induced subdigraph~$(V',A\cap (V')^2)$ which only contains the vertices~$V'$ and the arcs between vertices from~$V'$.
The set $N_{D}^+(v) \coloneqq \{ w \in V \mid (v,w) \in A\}$ denotes the set of \emph{outneighbors} of~$v$. Analogously, $N_{D}^-(v) \coloneqq \{ w \in V \mid (w,v) \in A\}$ denotes the set of \emph{inneighbors}.
Furthermore, we define the maximum indegree $\Delta^-_D\coloneqq \max_{v \in V} \deg_D^-(v)$, the maximum outdegree $\Delta^+_D\coloneqq \max_{v \in V} \deg_D^+(v)$, and~$\Delta_D:=\max\{\Delta_D^+,\Delta_D^-\}$.

A \emph{digraph degree sequence} $\sigma=\{(d_1^-,d_1^+),\ldots,(d_n^-,d_n^+)\}$ is a multiset of nonnegative integer tuples, where $d_i^-,d_i^+\in\{0,\ldots,n-1\}$ for all~$i\in\{1,\ldots,n\}$. We define
\begin{align*}
  \Delta^-_\sigma&:=\max\{d_1^-,\ldots,d_n^-\},\\
  \Delta^+_\sigma&:=\max\{d_1^+,\ldots,d_n^+\}, \text{ and}\\
  \Delta_\sigma&:=\max\{\Delta^-_\sigma,\Delta^+_\sigma\}.
\end{align*}
For a digraph~$D=(\{v_1,\ldots,v_n\},A)$ we denote by $\sigma(D):=\{\deg_D(v_1)$, $\ldots$, $\deg_D(v_n)\}$, the digraph degree sequence of~$D$.
Let~$d = (d^-,d^+)$ be a nonnegative integer tuple.
For a digraph~$D$, the \emph{block}~$B_D(d)$ \emph{of degree~$d$} is the set of all vertices having degree~$d$, formally~$B_D(d) := \{v \in V \mid \deg_D(v) = d\}$.
We define~$\lambda_D(d)$ as the number of vertices in~$D$ with degree~$d$, that is, $\lambda_D(d) := |B_D(d)|$.
Similarly, we define~$B_\sigma(t)$ as the multiset of all tuples equal to~$t$
and $\lambda_\sigma(t)$ as the number of occurrences of the tuple~$t$ in the multiset~$\sigma$.
For two integer tuples~$(x_1,y_1)$, $(x_2,y_2)$, we define the sum~$(x_1,y_1)+(x_2,y_2):=(x_1+x_2,y_1+y_2)$.

\paragraph{Parameterized Algorithmics. \cite{Cyg15,DF13,FG06,Nie06}} We assume the reader to be familiar with classical complexity theory concepts such as polynomial-time reductions and (weak) NP-hardness~\cite{GJ79, AB09}.
An instance~$(I,k)$ of a parameterized problem~$L\subseteq \Sigma^*\times\mathbb{N}$ consists
of the classical input~$I$ and a \emph{parameter}~$k$. A parameterized problem~$L$ is called \emph{fixed-parameter tractable} (fpt) with respect to the parameter~$k$ if it can be solved in~$f(k)\cdot |I|^{O(1)}$ time, where~$f$ is a function only depending on~$k$ and~$|I|$ denotes the size of the input~$I$.
Accordingly, for a \emph{combined parameter}~$(k_1,k_2,\ldots)$, a parameterized problem is fpt if it can be solved in~$f(k_1,k_2,\ldots)\cdot |I|^{O(1)}$ time.

A \emph{kernelization} is a polynomial-time algorithm transforming a given instance~$(I,k)$ into an equivalent instance~$(I',k')$ with~$|I'|\le g(k)$ and~$k'\le h(k)$ for some functions~$g$ and~$h$, that is, $(I,k)$ is a yes-instance if and only if~$(I',k')$ is a yes-instance.
The instance~$(I',k')$ is called the \emph{problem kernel} and~$g$ denotes its size.
If~$g$ is a polynomial, then we have a \emph{polynomial(-size)} problem kernel.
It can be shown that a parameterized problem is fpt if and only if it has a problem kernel.

\section{The Framework} \label{sec:GeneralSetting}

      Our goal is to develop a framework for deriving fixed-parameter tractability for a general
      class of completion problems in directed graphs.
      To this end, recall our general setting for \DDCSC{} which is as follows.
      We are given a digraph and want to insert at most~$s$ arcs such that the vertices satisfy certain degree constraints~$\tau$,
      and, additionally, the degree sequence of the digraph fulfills a certain property~$\Pi$.
Formally, the sequence property~$\Pi$ is given as a function that maps a digraph degree sequence to~$1$ if the sequence fulfills the property and otherwise to~$0$.
We restrict ourselves to properties where the corresponding function can be encoded with only polynomially many bits in the number of vertices of the input digraph
and can be decided efficiently.\footnote{All specific properties in this work can be easily decided in polynomial time.
Indeed, in many cases even fixed-parameter tractability with respect to the maximum integer in the sequence would suffice.}
We remark that it is not always the case that there are both vertex degree constraints (as defined by~$\tau$) and degree sequence constraints~(as defined by~$\Pi$) requested.
This can be handled by either setting~$\tau$ to the trivial degree list function with~$\tau(v)=\{0,\ldots,n-1\}^2$ for all $v\in V$ or setting~$\Pi$ to allow all possible degree sequences.

In this section, we show how to derive (under certain conditions) fixed-parameter tractability with respect to the maximum possible in- or outdegree~\degpara of the \emph{output} digraph for \DDCSC.
Note that~\degpara in general is not known in advance. 
In practice, we might therefore instead consider upper bounds for~\degpara which depend on the given input. 
For example, we know that $\degpara \le \min\{r,\Delta_D+s\}$ for any yes-instance since we insert at most~$s$ arcs into~$D$.
Clearly, \degpara might also be upper-bounded depending on~$\Pi$ (or even depending on~$r$, $s$, $\Delta_D$, and~$\Pi$) in some cases.
Our generic framework consists of two main steps: First, we prove fixed-parameter tractability with respect to the combined parameter~$(s,\Delta_D)$ in \cref{sec:GeneralFPTsr}.
This step generalizes ideas for the undirected case~\cite{FNN16}.
Note that~$\Delta_D\le \degpara$ trivially holds for yes-instances.
Second, we show in \cref{sec:GeneralFPTr} how to upper-bound the number~$s$ of arc insertions polynomially in~$\degpara$ by solving a certain problem specific numerical problem.
For this step, we develop a new key argument based on a maximum flow computation (the undirected case was based on $f$-factor arguments).

\subsection{Fixed-parameter tractability with respect to~$(s,\Delta_D)$}
\label{sec:GeneralFPTsr}
We show that \DDCSC is fixed-parameter tractable with respect to the combination of
the maximum number~$s$ of arcs to insert and the maximum in- or outdegree~$\Delta_D$ of the input digraph~$D$.
The basic idea underlying this result is that two vertices~$v$ and~$w$ with~$\deg_D(v)=\deg_D(w)$ and~$\tau(v) = \tau(w)$ are interchangeable.
Accordingly, we will show that it suffices to consider only a bounded number of vertices with the same ``degree properties''.
In particular, if there is a solution, then there is also a solution that only inserts arcs between a properly chosen subset of vertices of bounded size.
To formalize this idea, we introduce the notion of an \emph{$\alpha$-block-type set} for some positive integer~$\alpha$.

To start with, we define the types of a vertex via the numbers of arcs that~$\tau$ allows to add to this vertex.
Let~$(D,s,\tau,\Pi)$ be a \DDCSC instance.
A vertex $v$ is of \emph{type} $t \in \{0,\ldots,\degpara\}^2$ if $\deg_D(v)+t \in \tau(v)$.
Observe that one vertex can be of several types.
The subset of~$V(D)$ containing all vertices of type~$t$ is denoted by~$T_{D,\tau}(t)$.
A vertex~$v$ of type~$(0,0)$ (that is, $\deg_D(v)\in\tau(v)$) is called \emph{satisfied}.
A vertex which is not satisfied is called \emph{unsatisfied}.
We next define our notion of $\alpha$-block-type sets and its variants.

\begin{definition} \label[definition]{def:alpha-block-type-set}
Let~$\alpha$ be a positive integer and let~$U\subseteq V(D)$ denote the set of all unsatisfied vertices in~$D$. A vertex subset~$C \subseteq V(D)$ with~$U\subseteq C$ is called
\begin{itemize}
	\item \emph{$\alpha$-type set} if, for each type~$t \neq (0,0)$, $C$ contains exactly $\min \{|T_{D,\tau}(t)\setminus U|, \alpha\}$ satisfied vertices of type~$t$;
	\item \emph{$\alpha$-block set} if, for each degree~$d \in \sigma(D)$, $C$ contains exactly $\min \{|B_D(d)\setminus U|, \alpha\}$ satisfied vertices with degree~$d$;
	\item \emph{$\alpha$-block-type set} if, for each degree~$d \in \sigma(D)$ and each type~$t \neq (0,0)$, $C$~contains exactly $\min \{|(B_D(d) \cap T_{D,\tau}(t))\setminus U|, \alpha\}$ satisfied vertices of degree~$d$ and type~$t$.
\end{itemize}
\end{definition}

As a first step, we prove that these sets defined above can be computed efficiently.

\begin{lemma} \label[lemma]{lem:alphaSetLinTime}
	An $\alpha$-type/$\alpha$-block/$\alpha$-block-type set~$C$ as described in \cref{def:alpha-block-type-set} can be computed in~$O(m + |\tau| + r^2)$ / $O(m + n + \Delta_D^2)$ / $O(m + |\tau| + \Delta_D^2 r^2)$ time. 
\end{lemma}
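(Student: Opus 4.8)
The plan is to make a single pass over the graph to obtain all degrees, a single pass over~$\tau$ to classify the vertices, and then to collect a bounded number of representatives into buckets keyed by the relevant quantity. First I would compute $\deg_D^-(v)$ and $\deg_D^+(v)$ for every~$v$ by scanning the arcs, costing $O(m+n)$ time, and I would insert all unsatisfied vertices into~$C$ (they are mandatory by \cref{def:alpha-block-type-set}). Since~$\tau$ is specified for each of the~$n$ vertices we have $n=O(|\tau|)$, so wherever the target bound contains~$|\tau|$ the additive $O(n)$ is absorbed. What remains is, for each bucket, to pick $\min\{\cdot,\alpha\}$ satisfied representatives, where a bucket is a type~$t$, a degree~$d$, or a degree--type pair~$(d,t)$ in the three respective variants.

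For the $\alpha$-block set the buckets are the degrees occurring in~$\sigma(D)$. Each such degree~$d=(d^-,d^+)$ satisfies $d^-,d^+\le\Delta_D$, so there are at most $(\Delta_D+1)^2=O(\Delta_D^2)$ of them; I would allocate a table indexed by degree, test satisfaction of each vertex by a single membership query $\deg_D(v)\in\tau(v)$ (assuming~$\tau$ is stored so that such a query costs $O(1)$), and append each satisfied vertex to the bucket of its unique degree, stopping that bucket at~$\alpha$. This gives $O(m+n+\Delta_D^2)$ time, and because distinct degrees partition the vertices the per-bucket counts come out exactly as required.

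For the $\alpha$-type and $\alpha$-block-type sets I would instead scan, for every vertex~$v$, its list~$\tau(v)$: each entry $d'=(a,b)\in\tau(v)$ produces the candidate $t=(a-\deg_D^-(v),b-\deg_D^+(v))$, which is a genuine type exactly when $t\in\{0,\dots,\degpara\}^2$, the case $t=(0,0)$ simultaneously certifying that~$v$ is satisfied. This scan costs $O(|\tau|)$ overall and enumerates, for each satisfied vertex, all of its nonzero types, which are then appended to the corresponding buckets up to the cap~$\alpha$. Types range over $\{0,\dots,\degpara\}^2\subseteq\{0,\dots,r\}^2$, so there are $O(r^2)$ type buckets and $O(\Delta_D^2 r^2)$ degree--type buckets; allocating and capping them yields the remaining $r^2$ and $\Delta_D^2 r^2$ terms, for totals of $O(m+|\tau|+r^2)$ and $O(m+|\tau|+\Delta_D^2 r^2)$.

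The step needing the most care is the collection for the two type-based variants, since a single satisfied vertex may be of several types and hence contribute to several buckets at once; this is exactly why reading~$\tau$ (the $|\tau|$~term) is unavoidable there, and why per-bucket counters must be maintained so that collection halts at $\min\{\cdot,\alpha\}$. The block variant escapes this because its degree buckets genuinely partition the vertices, which is the structural reason it is the only bound without a~$|\tau|$ term.
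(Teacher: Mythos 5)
Your proposal is correct and follows essentially the same approach as the paper: a single pass over the vertices (and their lists $\tau(v)$) that adds all unsatisfied vertices and collects satisfied vertices into per-degree/per-type/per-degree-type buckets with counters capped at~$\alpha$, yielding the three stated time bounds. You in fact spell out the block-set and type-set cases (including the $O(1)$ membership-query assumption and the derivation of types from entries of~$\tau(v)$) in more detail than the paper, which dispatches them with ``can be done in a similar fashion.''
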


\begin{proof}
  To compute an $\alpha$-block-type set, we start with an empty set~$C:=\emptyset$ and
  for each possible vertex degree~$d$ and each possible vertex type~$t$, we initialize a counter~$x(d,t):=0$.
  We then iterate over all vertices~$v\in V(D)$.
  If~$v$ is unsatisfied, that is~$\deg_D(v)\not\in\tau(v)$, then we add~$v$ to~$C$.
  If~$v$ is satisfied, then for each type~$t\neq (0,0)$ with~$\deg_D(v) + t \in \tau(v)$, we increase the counter~$x(\deg_D(v),t)$ by one and add~$v$ to~$C$ if $x(\deg_D(v),t)<\alpha$.
	This can be done in~$O(m+|\tau| + \Delta_D^2 r^2)$ time.
	The other two cases of computing an $\alpha$-block set or an $\alpha$-type set can be done in a similar fashion.\qed
\end{proof}

We move on to the crucial lemma stating that a solution (that is, a set of arcs), if existing, can always be found between vertices of an $\alpha$-block-type set~$C$ given that~$C$ contains ``enough'' vertices of each degree and type.
Here, enough means $\alpha:=2s(\Delta_D+1)$.
\begin{lemma} \label[lemma]{lem:AlphaSet}
	Let~$(D,s,\tau,\Pi)$ be a \DDCSC instance and let~$C\subseteq V(D)$ be a $2s(\Delta_D+1)$-block-type set.
	If~$(D,s,\tau,\Pi)$ is a yes-instance, then there exists a solution~$A^*\subseteq C^2$ for~$(D,s,\tau,\Pi)$, that is, $|A^*|\le s$, $\sigma(D+A^*)$ fulfills~$\Pi$, and~$\deg_{D+A^*}(v) \in \tau(v)$ for all~$v \in V(D)$.
\end{lemma}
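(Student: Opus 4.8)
The plan is to argue by an exchange argument: take an arbitrary solution and ``relocate'' all inserted arcs into~$C$ without changing the degree sequence. Concretely, assuming~$(D,s,\tau,\Pi)$ is a yes-instance, fix any solution~$A'$ with~$|A'|\le s$ and let~$W := V(A')$ be the set of at most~$2s$ endpoints of inserted arcs. For each~$w\in W$ write~$d_w := \deg_D(w)$ and~$t_w := \deg_{D+A'}(w)-\deg_D(w)$; since only arcs are added and~$D[A']$ is loopless, $t_w \in \{0,\dots,\degpara\}^2\setminus\{(0,0)\}$ and~$w$ is of type~$t_w$. Every unsatisfied vertex must be incident to an inserted arc, hence~$U\subseteq W$, and by definition of an $\alpha$-block-type set~$U\subseteq C$; thus~$U\subseteq W\cap C$.

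The core of the proof is to construct an injective map~$\phi\colon W\to C$ that is the identity on~$W\cap C$ and sends each~$w\in W\setminus C$ to a vertex~$\phi(w)\in C$ with~$\deg_D(\phi(w))=d_w$ that is of type~$t_w$, so that~$A^\ast:=\{(\phi(u),\phi(v))\mid (u,v)\in A'\}$ contains neither a loop, a multi-arc, nor an arc already present in~$D$. Given such a~$\phi$, I would check that~$A^\ast\subseteq C^2$ is a solution. Injectivity gives~$|A^\ast|=|A'|\le s$ and rules out loops and multi-arcs, and the relocated arcs incident to~$\phi(w)$ are exactly the images of the arcs incident to~$w$, so the degree change at~$\phi(w)$ is precisely~$t_w$. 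Consequently both~$\sigma(D+A')$ and~$\sigma(D+A^\ast)$ arise from~$\sigma(D)$ by deleting the multiset~$\{d_w\mid w\in W\}$ and inserting~$\{d_w+t_w\mid w\in W\}$ (using~$\deg_D(\phi(w))=d_w$), whence~$\sigma(D+A^\ast)=\sigma(D+A')$ and~$\Pi$ is still satisfied. Moreover every vertex is satisfied: each~$\phi(w)$ ends with degree~$d_w+t_w\in\tau(\phi(w))$ because~$\phi(w)$ has type~$t_w$, while vertices outside~$\phi(W)$ keep their $D$-degree and were already satisfied (the unsatisfied ones lie in~$W\cap C$ and are fixed points of~$\phi$).

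It then remains to build~$\phi$, which I would do greedily, processing the vertices of~$W\setminus C$ one at a time. The key observation is that for~$w\in W\setminus C$ the candidate pool~$(B_D(d_w)\cap T_{D,\tau}(t_w))\cap C$ of satisfied vertices is as large as possible: $w$ itself is a satisfied vertex of degree~$d_w$ and type~$t_w$ with~$w\notin C$, so~$C$ cannot contain all such vertices, and hence by the~$\min$ in \cref{def:alpha-block-type-set} it contains exactly~$\alpha=2s(\Delta_D+1)$ of them. Against this pool I would count the forbidden candidates for~$\phi(w)$: at most~$|W|-1\le 2s-1$ vertices are excluded to keep~$\phi$ injective (the images already assigned together with the fixed points of~$W\cap C$), and the requirement that no relocated arc coincide with an existing arc of~$D$ excludes at most~$\Delta_D$ vertices per arc of~$A'$ incident to~$w$ whose other endpoint is already assigned, hence at most~$s\,\Delta_D$ in total (the number of arcs of~$A'$ incident to~$w$ being at most~$|A'|\le s$). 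Summing, strictly fewer than~$2s(\Delta_D+1)=\alpha$ candidates are forbidden, so a valid~$\phi(w)$ always exists.

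The step I expect to be most delicate is this conflict count and its interaction with the processing order. An inserted arc~$(u,v)$ with~$u,v\in W\setminus C$ yields the constraint~$(\phi(u),\phi(v))\notin A$, which cannot be evaluated while assigning the first of~$u,v$; I would handle this by charging each such constraint to the later of the two assignment steps, so that when~$\phi(w)$ is chosen only arcs toward already-assigned neighbors restrict it, each forbidding the at most~$\Delta_D$ in- or out-neighbors in~$D$ of that neighbor's fixed image. With this accounting the forbidden set stays below~$\alpha$ for every~$w$ (the slack is~$s\Delta_D+1$), which is exactly what pins down the constant~$2s(\Delta_D+1)$ in \cref{def:alpha-block-type-set} and makes the entire exchange go through.
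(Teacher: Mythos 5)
Your proof is correct and takes essentially the same route as the paper: both are exchange arguments that replace each solution-arc endpoint outside~$C$ by a satisfied vertex of~$C$ with the same degree and type, relying on the identical counting of forbidden candidates (adjacency conflicts, at most~$s\Delta_D$ in your accounting versus the paper's coarser~$2s\Delta_D$, plus at most~$2s-1$ already-used vertices) against a pool of exactly~$2s(\Delta_D+1)$ candidates guaranteed by \cref{def:alpha-block-type-set}. The only difference is bookkeeping: the paper performs one swap at a time starting from a solution minimizing~$|V(A')\setminus C|$ and concludes by contradiction with minimality, whereas you relocate all of~$V(A')\setminus C$ in a single greedy pass via an injective map, charging each arc conflict to the later-processed endpoint; both variants are sound.
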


\begin{proof}
	Let~$A' \subseteq V(D)^2 \setminus A(D)$ be a solution for $(D,s,\tau,\Pi)$ that minimizes the number of vertices not in~$C$, that is, $|V(A') \setminus C|$ is minimum. 
	The solution~$A'$ exists since~$(D,s,\tau,\Pi)$ is a yes-instance.
	If~$V(A')\subseteq C$, then we are done.
	Hence, we assume that there exists a vertex~$v$ in~$V(D) \setminus C$ which is incident to at least one arc in~$A'$.
	Let~$V^-_v := \{ u \mid (u,v)\in A'\}$ and let $V^+_v := \{w \mid  (v,w)\in A'\}$ be the set of in- respectively outneighbors of~$v$ in~$A'$.
	Furthermore, let~$d := \deg_D(v)$ and~$t := (|V^-_v|,|V^+_v|)$.
	Thus, $v$ has degree~$d$ and is of type~$t$.
	By definition of~$C$, it follows that~$|B_D(d) \cap T_{D,\tau}(t)| > 2s(\Delta_D+1)$.

	Now, we claim that there is a vertex~$v^* \in (B_D(d) \cap T_{D,\tau}(t) \cap C) \setminus V(A')$ such that we can replace~$v$ with~$v^*$ in the solution.
	More precisely, in all arcs of~$A'$ we want to replace $v$ by~$v^*$, that is, we obtain a new arc set
	$A^*:= \{(u,w) \in A' \mid u \neq v \wedge w \neq v \} \cup \{(u,v^*)\mid u \in V^-_v\} \cup \{(v^*,w)\mid w\in V^+_v\}.$
	Since we cannot insert arcs that already exist in the input digraph~$D$, we need that~$N_D^-(v^*) \cap V^-_v = \emptyset$ and~$N_D^+(v^*) \cap V^+_v = \emptyset$.
	Observe that such a vertex~$v^*$ exists: 
	Since each of the at most~$s$ vertices in~$V^+_v \cup V^-_v$ has at most~$\Delta_D$ incoming and~$\Delta_D$ outgoing arcs, it follows that at most~$s \cdot 2 \Delta_D$ vertices in~$B_D(d) \cap T_{D,\tau}(t) \cap C$ can have an arc from or to a vertex in~$V^+_v \cup V^-_v$.
	Furthermore, since~$|A'| \le s$, it follows that at most~$2s - 1$ vertices in~$B_D(d) \cap T_{D,\tau}(t) \cap C$ are incident to an arc in~$A'$ (the minus one comes from the fact that~$v$ is incident to at least one arc in~$A'$). 
	By definition of~$C$, it follows that~$|B_D(d) \cap T_{D,\tau}(t) \cap C| \ge 2s(\Delta_D + 1) > s \cdot 2 \Delta_D + 2s -1$.
	Hence, there is at least one vertex~$v^* \in B_D(d) \cap T_{D,\tau}(t) \cap C$ that is not adjacent to any vertex in~$V^+_v \cup V^-_v$ and not incident to any arc in~$A'$.
	Thus, we can replace~$v$ by~$v^*$.
	
	We now show that~$A^*$ is still a solution:
	First, observe that $\sigma(D+A') = \sigma(D+A^*)$ and, hence, $\sigma(D+A^*)$ fulfills $\Pi$.
	Second, observe that~$\deg_{D+A^*}(v) \in \tau(v)$ since~$v \notin C$, which implies that~$v$ was satisfied.
	Further, $\deg_{D+A^*}(v^*) \in \tau(v^*)$ since~$v^*$ is of type~$t$.
	Hence, $A^*$ is a solution and~$|V(A') \setminus C| > |V(A^*) \setminus C|$, a contradiction to the assumption that~$A'$ was a solution minimizing this value.\qed
\end{proof}

If there are no restrictions on the resulting degree sequence (as it is the case for the \DDCClong problem (\DDCC) in \cref{sec:DDCC}), then we can replace the $2s(\Delta_D+1)$-block-type set in \cref{lem:AlphaSet} by a $2s(\Delta_D+1)$-type set:
\begin{lemma}\label[lemma]{lem:DDCC-type-set}
	Let~$(D,s,\tau)$ be a \DDCC instance and let~$C\subseteq V(D)$ be a $2s(\Delta_D+1)$-type set.
	If~$(D,s,\tau)$ is a yes-instance, then there exists a solution~$A^*\subseteq C^2$ for~$(D,s,\tau)$, that is, $|A^*|\le s$ and~$\deg_{D+A^*}(v) \in \tau(v)$ for all~$v \in V(D)$.
\end{lemma}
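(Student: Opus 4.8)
The plan is to follow the argument of \cref{lem:AlphaSet} almost verbatim, adapting it to the fact that here there is no sequence property~$\Pi$ to preserve, so we may work with the weaker $\alpha$-type set instead of an $\alpha$-block-type set. Concretely, I would again start from a solution~$A'$ for~$(D,s,\tau)$ that minimizes~$|V(A')\setminus C|$; such a solution exists since the instance is a yes-instance. If~$V(A')\subseteq C$ we are done, so assume there is a vertex~$v\in V(D)\setminus C$ incident to some arc of~$A'$. As before, set~$V_v^-:=\{u\mid (u,v)\in A'\}$, $V_v^+:=\{w\mid (v,w)\in A'\}$, and~$t:=(|V_v^-|,|V_v^+|)$; since~$v$ is incident to at least one arc we have~$t\neq(0,0)$. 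Because~$U\subseteq C$ and~$v\notin C$, the vertex~$v$ is satisfied and of type~$t$, so~$v\in T_{D,\tau}(t)\setminus U$.

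The first key step is to extract a large pool of candidate replacements from the type-set condition. Since~$v$ is a satisfied vertex of type~$t$ lying outside~$C$, the defining property of an $\alpha$-type set forces~$\min\{|T_{D,\tau}(t)\setminus U|,\alpha\}=\alpha$ with~$\alpha=2s(\Delta_D+1)$ (otherwise~$C$ would contain \emph{all} satisfied type-$t$ vertices, including~$v$); hence~$C$ contains exactly~$\alpha$ satisfied vertices of type~$t$. Note the crucial simplification compared to \cref{lem:AlphaSet}: I do \emph{not} require these candidates to share~$v$'s degree~$d$, only its type~$t$, because without~$\Pi$ the resulting degree sequence need not be preserved.

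The second step is the same counting argument as in \cref{lem:AlphaSet}, now applied to~$T_{D,\tau}(t)\cap C$ in place of~$B_D(d)\cap T_{D,\tau}(t)\cap C$. Among the~$\alpha$ candidates, at most~$s\cdot 2\Delta_D$ are adjacent in~$D$ to a vertex of~$V_v^-\cup V_v^+$ (there are at most~$s$ such vertices, each incident to at most~$2\Delta_D$ arcs), and at most~$2s-1$ are incident to an arc of~$A'$ (at most~$2s$ endpoints of~$A'$, minus~$v$ itself, which lies outside~$C$). Since~$\alpha=2s\Delta_D+2s>2s\Delta_D+(2s-1)$, some candidate~$v^*\in T_{D,\tau}(t)\cap C$ is neither adjacent to~$V_v^-\cup V_v^+$ in~$D$ nor incident to any arc of~$A'$.

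Finally I would rewire~$A'$ by replacing~$v$ with~$v^*$, defining~$A^*$ exactly as in \cref{lem:AlphaSet}, and verify it is a solution: the choice of~$v^*$ guarantees that no arc of~$A^*$ already exists in~$D$ and that~$|A^*|=|A'|\le s$; every vertex other than~$v$ and~$v^*$ keeps its degree and hence stays satisfied; $v$ returns to its original (satisfied) degree~$\deg_D(v)\in\tau(v)$, while~$v^*$ attains degree~$\deg_D(v^*)+t\in\tau(v^*)$ since~$v^*$ is of type~$t$. As there is no~$\Pi$ to check, this already shows~$A^*$ is a solution with~$|V(A^*)\setminus C|<|V(A')\setminus C|$, contradicting minimality. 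I expect no real obstacle here; the only point requiring care is the bookkeeping that dropping the degree constraint on~$v^*$ is harmless precisely because~$\Pi$ is absent, which is exactly what lets the $\alpha$-type set replace the $\alpha$-block-type set.
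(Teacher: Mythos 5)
Your proof is correct and is exactly the argument the paper intends: the paper states \cref{lem:DDCC-type-set} without proof as an immediate adaptation of \cref{lem:AlphaSet}, and your write-up carries out that adaptation faithfully, correctly identifying that dropping~$\Pi$ is what allows the degree condition on the replacement vertex~$v^*$ to be discarded, so that an $\alpha$-type set suffices in place of an $\alpha$-block-type set. The exchange argument, the counting bound~$2s(\Delta_D+1) > 2s\Delta_D + 2s - 1$, and the verification that~$A^*$ remains a solution all match the paper's reasoning.
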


Similarly, if there are no restrictions on the individual vertex degrees, that is, $\tau$ is the degree list function~$\tau(v)=\{0,\ldots,n-1\}^2$ for all~$v\in V(D)$,
then we can replace the~$2s(\Delta_D+1)$-block-type set by a $2s(\Delta_D+1)$-block set.

\begin{lemma}\label[lemma]{lem:PiDDSC-block-set}
	Let~$(D,s,\tau,\Pi)$ be a \DDCSC instance with~$\tau(v) = \{0,\ldots,n-1\}^2$ for all~$v\in V(D)$ and let~$C\subseteq V(D)$ be a $2s(\Delta_D+1)$-block set.
	If~$(D,s,\tau,\Pi)$ is a yes-instance, then there exists a solution~$A^*\subseteq C^2$ for~$(D,s,\tau,\Pi)$, that is, $|A^*|\le s$ and $\sigma(D+A^*)$ fulfills~$\Pi$.
\end{lemma}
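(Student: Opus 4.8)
The plan is to reuse the exchange argument from the proof of \Cref{lem:AlphaSet}, simplified to exploit that~$\tau$ imposes no individual constraints. Since~$\tau(v)=\{0,\ldots,n-1\}^2$ for every~$v$, every vertex is satisfied (so the set~$U$ of unsatisfied vertices is empty) and the type of a vertex becomes irrelevant: the only conditions that must be maintained are~$|A^*|\le s$ and that~$\sigma(D+A^*)$ fulfills~$\Pi$. In particular, \emph{any} replacement of one vertex by another vertex of the same input degree automatically respects the (trivial)~$\tau$-constraints, so it suffices to track degrees rather than degree-type pairs; this is precisely why an $\alpha$-block set (instead of an $\alpha$-block-type set) is enough here.

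First I would fix a solution~$A'\subseteq V(D)^2\setminus A(D)$ minimizing~$|V(A')\setminus C|$, which exists because the instance is a yes-instance. If~$V(A')\subseteq C$, there is nothing to prove, so assume some vertex~$v\in V(D)\setminus C$ is incident to an arc of~$A'$. Writing~$V^-_v$ and~$V^+_v$ for the in- and outneighbors of~$v$ in~$A'$ and setting~$d:=\deg_D(v)$, the fact that the satisfied vertex~$v$ of degree~$d$ lies outside the $2s(\Delta_D+1)$-block set~$C$ forces~$|B_D(d)|>2s(\Delta_D+1)$ and hence~$|B_D(d)\cap C|=2s(\Delta_D+1)$. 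I would then argue, by the same counting as in \Cref{lem:AlphaSet}, that within~$B_D(d)\cap C$ one can pick a vertex~$v^*\notin V(A')$ that is nonadjacent in~$D$ to all of~$V^-_v\cup V^+_v$: at most~$2s\Delta_D$ candidates are excluded by existing arcs to or from the at most~$s$ vertices in~$V^-_v\cup V^+_v$, and at most~$2s-1$ further candidates are incident to arcs of~$A'$, leaving~$2s(\Delta_D+1)>2s\Delta_D+2s-1$ vertices, so a valid~$v^*$ remains.

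The swap itself redirects every~$A'$-arc incident to~$v$ onto~$v^*$, producing
\[
A^*:=\{(u,w)\in A' \mid u\ne v \wedge w\ne v\}\cup\{(u,v^*)\mid u\in V^-_v\}\cup\{(v^*,w)\mid w\in V^+_v\}.
\]
Because~$v$ and~$v^*$ share the same input degree~$d$ and~$v^*$ avoids every existing arc to its new neighbours as well as every arc already in~$A'$, the set~$A^*$ consists of genuinely new arcs with~$|A^*|=|A'|\le s$ and~$\sigma(D+A^*)=\sigma(D+A')$; in particular~$\sigma(D+A^*)$ still fulfills~$\Pi$, and the trivial~$\tau$-constraints hold automatically. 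Since~$A^*$ uses strictly fewer vertices outside~$C$, this contradicts the minimality of~$|V(A')\setminus C|$. I expect the only delicate point to be the bookkeeping in the counting step that guarantees the existence of~$v^*$; everything else follows from the degree-preserving nature of the exchange, which is exactly what keeps~$\Pi$ invariant.
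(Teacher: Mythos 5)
Your proof is correct and matches the paper's intended argument: the paper states this lemma without a separate proof, presenting it as the specialization of the exchange argument from \Cref{lem:AlphaSet} where the trivial~$\tau$ makes every vertex satisfied, so only degrees (blocks) rather than degree--type pairs need to be tracked. Your counting ($2s\Delta_D$ vertices excluded by existing arcs, $2s-1$ by incidence with~$A'$, against $|B_D(d)\cap C|=2s(\Delta_D+1)$) and the degree-preserving swap are exactly the paper's argument carried over verbatim.
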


\cref{lem:AlphaSet} implies a fixed-parameter algorithm by providing a bounded search space for possible solutions, namely any $2s(\Delta_D + 1)$-block-type set~$C$.

\begin{theorem}\label{thm:PiEAfptDeltak}
  If deciding $\Pi$ is fixed-parameter tractable with respect to the maximum integer in the input sequence, then \DDCSC is fixed-parameter tractable with respect to~$(s,\Delta_D)$.
\end{theorem}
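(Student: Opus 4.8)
The plan is to convert \cref{lem:AlphaSet} into a bounded search space and then brute-force over it. Fix a \DDCSC{} instance $(D,s,\tau,\Pi)$. First I would dispose of an easy rejection case: let $U$ be the set of unsatisfied vertices. Since each of the at most~$s$ inserted arcs is incident to only two vertices, at most~$2s$ vertices can have their degree changed; as every unsatisfied vertex must change its degree to become satisfied, $|U|>2s$ immediately certifies a no-instance. So I may assume $|U|\le 2s$.

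Next I would compute a $2s(\Delta_D+1)$-block-type set~$C$ via \cref{lem:alphaSetLinTime} and bound~$|C|$ by a function of~$(s,\Delta_D)$. The key observations are that only types $t=(t^-,t^+)$ with $t^-,t^+\le s$ are relevant (within the budget~$s$, a single vertex can gain at most~$s$ in-arcs and at most~$s$ out-arcs), giving $O(s^2)$ types; that the number of distinct degrees occurring in~$\sigma(D)$ is at most $(\Delta_D+1)^2$; and that the block-type set uses $\alpha=2s(\Delta_D+1)$ satisfied vertices per (degree, type) pair. Combining these with $|U|\le 2s$ yields $|C|=O(s^3\Delta_D^3)$, which depends only on $(s,\Delta_D)$.

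Now \cref{lem:AlphaSet} guarantees that if $(D,s,\tau,\Pi)$ is a yes-instance, then some solution~$A^*$ satisfies $A^*\subseteq C^2$. I would therefore enumerate all candidate arc sets $A^*\subseteq C^2\setminus A(D)$ with $|A^*|\le s$; there are at most $|C|^{2s}$ of these, again a function of~$(s,\Delta_D)$. For each candidate I would check whether it is a solution: verify $\deg_{D+A^*}(v)\in\tau(v)$ for every~$v$ (only the vertices in $V(A^*)\subseteq C$ change their degree, while all vertices outside~$C$ are already satisfied and hence still satisfied), and verify that $\sigma(D+A^*)$ fulfills~$\Pi$. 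The instance is accepted iff some candidate passes both tests; correctness is exactly the content of \cref{lem:AlphaSet}.

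The single point where the hypothesis on~$\Pi$ enters is the last check. Here I would observe that adding at most~$s$ arcs raises any in- or outdegree by at most~$s$, so the maximum integer appearing in~$\sigma(D+A^*)$ is at most $\Delta_D+s$. By assumption, deciding~$\Pi$ is fixed-parameter tractable in this maximum integer, so each $\Pi$-test runs in $f(\Delta_D+s)\cdot n^{O(1)}$ time. Multiplying by the $|C|^{2s}$ candidates and the polynomial per-candidate degree checks gives an overall running time of the form $g(s,\Delta_D)\cdot n^{O(1)}$, establishing fixed-parameter tractability with respect to~$(s,\Delta_D)$. I do not expect a genuine obstacle: all the structural difficulty is already absorbed into \cref{lem:AlphaSet}, and the remaining work is the two bookkeeping bounds (on~$|C|$ and on the number of candidate arc sets) together with the observation that every reachable degree sequence has maximum integer at most $\Delta_D+s$.
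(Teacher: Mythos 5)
Your proposal is correct and follows essentially the same route as the paper's proof: reject if there are more than~$2s$ unsatisfied vertices, compute a $2s(\Delta_D+1)$-block-type set~$C$ via \cref{lem:alphaSetLinTime}, brute-force over all arc sets in~$C^2$ of size at most~$s$ (justified by \cref{lem:AlphaSet}), and invoke the FPT hypothesis on~$\Pi$ with maximum integer at most~$\Delta_D+s$. The only difference is cosmetic bookkeeping: you bound the number of relevant types by~$O(s^2)$ (components at most~$s$), whereas the paper counts all types up to~$\degpara\le\Delta_D+s$, giving a slightly different but equally valid bound on~$|C|$ as a function of~$(s,\Delta_D)$.
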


\begin{proof}
  Given a \DDCSC instance~$(D,s,\tau,\Pi)$, we first check in polynomial time whether there are more than~$2s$ unsatisfied vertices in~$D$. If this is the case, then we have a no-instance, since we can change the degrees of at most~$2s$ vertices by inserting at most~$s$ arcs.
  Otherwise, we compute a $2s(\Delta_D + 1)$-block-type set~$C$ in polynomial time (\cref{lem:alphaSetLinTime}).
  By~\cref{lem:AlphaSet}, we know that it is sufficient to search for a solution within the vertices of~$C$. Hence, we simply try out all possible arc sets~$A' \subseteq C^2$ of size at most~$s$ and check whether in one of the cases the vertex degrees and the degree sequence of~$D+A'$ satisfy the requirements~$\tau$ and~$\Pi$.
  Since~$C$ contains at most~$2s$ unsatisfied vertices and at most~$2s(\Delta_D + 1)\cdot (\Delta_D+1)^2(\degpara)^2$ satisfied vertices, and since~$\degpara \le \Delta_D + s$, there are at most~$O(2^{(2s+2s(\Delta_D+1)^3(\Delta_D+s)^2)^2})$ possible subsets of arcs to insert.
  Checking whether~$\tau$ is satisfied can be done in polynomial time and deciding whether~$\Pi$ holds for $\sigma(D+A')$ is by assumption fixed-parameter tractable with respect to the largest integer, which is at most~$\Delta_D+s$.
Thus, overall, we obtain fixed-parameter tractability with respect to~$(s,\Delta_D)$.\qed
\end{proof}

\subsection{Bounding the solution size~$s$ polynomially in~\degpara}
\label{sec:GeneralFPTr}
This subsection constitutes the major part of our framework.
The rough overall scheme is analogous to the undirected case as described by \citet{FNN16}.
By dropping the graph structure and solving a simpler problem-specific number problem on the degree sequence of the input digraph, we show how to solve \DDCSC instances with ``large'' solutions provided that we can solve the associated number problem efficiently.
The number problem is defined so as to simulate the insertion of arcs to a digraph on an
integer tuple sequence.
Note that inserting an arc increases the indegree of a vertex by one and increases the outdegree of another vertex by one.
Inserting~$s$ arcs can thus be represented by increasing the tuple entries in the degree sequence by an overall value of~$s$ in each component.
Formally, the corresponding number problem (abbreviated as \nDDCSC) is defined as follows.

\begin{problemdef}
    \problemtitle{\nDDCSClong}
    \probleminput{A sequence $\sigma = (c_1,d_1),\ldots,(c_n,d_n)$ of~$n$ nonnegative integer tuples,
                  a positive integer~$s$,
                  a ``tuple list function'' $\tau\colon \{1,\ldots,n\} \rightarrow 2^{\{0,\ldots,r\}^2}$,
                  and a sequence property~$\Pi$.}
    \problemquestion{Is there a sequence $\sigma'=(c'_1,d'_1),\ldots,(c'_n,d'_n)$
                     such that $\sum_{i=1}^n (c'_i-c_i) = \sum_{i=1}^n (d'_i-d_i) = s$,
                     $c_i \le c'_i$, $d_i \le d'_i$, and $(c'_i, d'_i)\in\tau(i)$ for all $1 \le i \le n$, and
                     $\sigma'$ fulfills~$\Pi$?}
\end{problemdef}

If we plug the degree sequence of a digraph into \nDDCSC, then an integer tuple~$(c_i',d_i' )$
of a solution tells us to add~$x_i:=c_i'-c_i$ incoming arcs and~$y_i:=d_i'-d_i$ outgoing arcs to the vertex~$v_i$.
We call the tuples~$(x_i,y_i)$ \emph{demands}.
Having computed the demands, we can then try to solve our original \DDCSC instance by searching for a set of arcs to insert that exactly fulfills the demands.
Such an arc set, however, might not always exist.
Hence, the remaining problem is to decide whether it is possible to realize the demands in the given digraph. The following lemma shows (using flow computations) that this is in fact always possible if the number~$s$ of arcs to insert is large compared to~\degpara.

\begin{lemma}\label[lemma]{lem:factor}
  Let~$D=(V=\{v_1,\ldots,v_n\},A)$ be a digraph and let~$x_1,\ldots,x_n$, $y_1,\ldots,y_n$, and~\degpara be nonnegative integers such that
  \begin{compactenum}[(I)]
    \item\label[cond]{cond:maxdegbound} $\degpara\le n-1$, 
    \item\label[cond]{cond:indegbound} $\deg_D^-(v_i)+x_i \le \degpara$ for all~$i\in\{1,\ldots, n\}$,
    \item\label[cond]{cond:outdegbound} $\deg_D^+(v_i)+y_i \le \degpara$ for all~$i\in\{1,\ldots, n\}$,   
    \item\label[cond]{cond:balance} $\sum_{i=1}^nx_i=\sum_{i=1}^ny_i=:s$, and
    \item\label[cond]{cond:minsize} $s > 2(\degpara)^2$.
  \end{compactenum}

  Then, there exists an arc set~$A'\subseteq V^2 \setminus A$ of size~$s$ such that for the digraph~$D':=D+A'$ it holds $\deg_{D'}(v_i)=\deg_D(v_i)+(x_i,y_i)$ for all $v_i\in V$.
  Moreover, the set~$A'$ can be computed in~$O(n^3)$ time. 
\end{lemma}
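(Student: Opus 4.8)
The plan is to phrase the existence of~$A'$ as an integral maximum-flow problem and to establish feasibility through the max-flow min-cut theorem. First I would build a network~$N$ with a source~$\mathsf{S}$, a sink~$\mathsf{T}$, and for each~$v_i\in V$ a ``tail'' node~$\ell_i$ and a ``head'' node~$h_i$. I add arcs~$\mathsf{S}\to\ell_i$ of capacity~$y_i$, arcs~$h_j\to\mathsf{T}$ of capacity~$x_j$, and for every \emph{admissible} ordered pair~$(v_i,v_j)$ (meaning $i\neq j$ and $(v_i,v_j)\notin A$) an arc~$\ell_i\to h_j$ of capacity~$1$. By \cref{cond:balance}, any integral flow of value~$s$ saturates every source and sink arc, and collecting the unit middle arcs that carry one unit yields a loopless, multiarc-free arc set~$A'$ disjoint from~$A$ in which each~$v_i$ gains exactly~$x_i$ in-arcs and~$y_i$ out-arcs---precisely the claim. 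Since the cut isolating~$\mathsf{S}$ has capacity exactly~$s$, it suffices to prove that \emph{every} $\mathsf{S}$--$\mathsf{T}$ cut has capacity at least~$s$; the $O(n^3)$ bound then follows from any max-flow algorithm running in $O(|V|^3)$ time (e.g.\ Malhotra--Kumar--Maheshwari), as~$N$ has $2n+2$ nodes.

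To analyse cuts, I would describe a cut by the set~$P$ of tails on the source side and the set~$Q$ of heads on the sink side, and set $p:=|P|$, $q:=|Q|$, $Y(P):=\sum_{\ell_i\in P}y_i$, $X(Q):=\sum_{h_j\in Q}x_j$, and $e(P,Q):=$ the number of admissible middle arcs from~$P$ to~$Q$. A direct accounting shows the cut capacity equals $(s-Y(P))+(s-X(Q))+e(P,Q)$, so the target reduces to the single inequality $e(P,Q)\ge Y(P)+X(Q)-s$. The number of \emph{in}admissible middle arcs inside~$P\times Q$ is $z+a$, where~$z$ counts indices~$i$ with $\ell_i\in P$ and $h_i\in Q$ (hence $z\le\min\{p,q\}$) and~$a$ counts the arcs of~$A$ from a~$P$-vertex to a~$Q$-vertex; thus $e(P,Q)=pq-z-a$.

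The heart of the proof is a three-case argument exploiting \cref{cond:indegbound,cond:outdegbound,cond:minsize}. From $\deg_D^+(v_i)+y_i\le\degpara$ I get $Y(P)\le\degpara\,p-\sum_{\ell_i\in P}\deg_D^+(v_i)\le\degpara\,p-a$, and symmetrically $X(Q)\le\degpara\,q-a$ from \cref{cond:indegbound}. If $q\ge\degpara+1$, then using $X(Q)\le s$ I bound $Y(P)+X(Q)-s\le Y(P)\le\degpara\,p-a$, while $e(P,Q)=pq-z-a\ge p(q-1)-a\ge\degpara\,p-a$ (as $z\le p$), settling this case; the case $p\ge\degpara+1$ is symmetric, using the bound on~$X(Q)$ and $z\le q$. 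In the only remaining case, $p,q\le\degpara$, I bound $Y(P)+X(Q)\le\degpara(p+q)\le 2(\degpara)^2<s$ by \cref{cond:minsize}, so the right-hand side is negative and $e(P,Q)\ge 0$ suffices. Here \cref{cond:maxdegbound} enters to place every cut with $p=n$ or $q=n$---in particular the cut with all tails on the source side and all heads on the sink side, which certifies that there are at least~$s$ admissible pairs in total---into one of the first two cases.

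I expect the main obstacle to be exactly this cut computation: one must arrange the bookkeeping so that the count~$a$ of forbidden existing arcs cancels between~$e(P,Q)$ and the degree-based bound on~$Y(P)$ (resp.~$X(Q)$), and then recognise that the genuinely small cuts, where neither~$p$ nor~$q$ is large, are controlled \emph{only} by the size hypothesis $s>2(\degpara)^2$. Getting the direction of each inequality right and correctly tracking the role of each of \cref{cond:maxdegbound,cond:indegbound,cond:outdegbound,cond:balance,cond:minsize} is where the care lies; once min-cut $\ge s$ is shown, integrality of the maximum flow completes the construction of~$A'$.
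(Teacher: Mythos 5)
Your proof is correct, and it builds exactly the same flow network as the paper (\cref{flowConst}): source arcs of capacity~$y_i$ into tail copies, sink arcs of capacity~$x_j$ out of head copies, and unit-capacity arcs precisely for the admissible pairs. Where you genuinely diverge is in establishing the key claim that the maximum flow value is~$s$. The paper supposes a maximum flow of value less than~$s$ and derives a contradiction in the \emph{residual graph}: it picks a non-saturated tail~$v_i^+$ and a non-saturated head~$v_j^-$, shows via \cref{cond:maxdegbound,cond:indegbound,cond:outdegbound} that their residual neighborhoods~$X$ and~$Y$ have size at least~$n-\degpara\ge 1$, argues that maximality forces all of~$X$ and~$Y$ to be saturated with no flow from~$Y$ into~$X$, and then counts that the flow entering~$X$ comes from at most~$\degpara$ tails and is therefore at most~$(\degpara)^2$, which caps~$s$ at~$2(\degpara)^2$, contradicting \cref{cond:minsize}. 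You instead invoke max-flow min-cut and bound \emph{every} cut from below, using the identity that a cut determined by tails~$P$ on the source side and heads~$Q$ on the sink side has capacity $(s-Y(P))+(s-X(Q))+e(P,Q)$, together with a three-case analysis comparing~$|P|,|Q|$ with~$\degpara$; the cancellation of the count~$a$ of pre-existing arcs between~$e(P,Q)$ and the degree bounds from \cref{cond:indegbound,cond:outdegbound} is the same quantitative mechanism the paper exploits, repackaged on the cut side. Your route is more transparent about which hypothesis handles which cuts (\cref{cond:minsize} alone disposes of the cuts with $|P|,|Q|\le\degpara$), while the paper's residual-graph argument avoids enumerating cut shapes; both extract~$A'$ from an integral maximum flow within the claimed~$O(n^3)$ time. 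One cosmetic inaccuracy that does not affect correctness: your case distinction is exhaustive without \cref{cond:maxdegbound}, so your stated use of it (placing cuts with $|P|=n$ or $|Q|=n$ into the first two cases) is not actually needed---indeed, \cref{cond:indegbound,cond:outdegbound,cond:balance,cond:minsize} together already force~$n>2\degpara$. The paper, by contrast, does rely on \cref{cond:maxdegbound} to guarantee that the residual neighborhoods~$X$ and~$Y$ are nonempty.
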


\begin{proof}
The proof is based on a flow network which we construct such that the corresponding maximum flow (for details of network flow theory, refer to the book by~\citet{AMO93}) yields the set~$A'$ of arcs to be inserted in~$D$ in order to obtain our target digraph~$D'$. 
\begin{construction}\label[construction]{flowConst}
  We build a flow network~$N=(V_N,A_N)$ according to the following steps.
  \begin{itemize}
    \item Add a \emph{source} vertex~$v_s$ and a \emph{sink} vertex~$v_t$ to~$N$;
    \item for each vertex $v_i\in V$, add two vertices~$v_i^+$, $v_i^-$ to~$N$;
    \item for each $i \in \{1,\dots,n\}$, insert the arc $(v_s,v_i^+)$ with capacity $y_i$;
    \item for each $i \in \{1,\dots,n\}$, insert the arc $(v_i^-,v_t)$ with capacity $x_i$;
    \item for each $(v_i,v_j)\in V^2\setminus A$ with~$i\neq j$, insert the arc $(v_i^+,v_j^-)$ with capacity one.
  \end{itemize}
\end{construction}

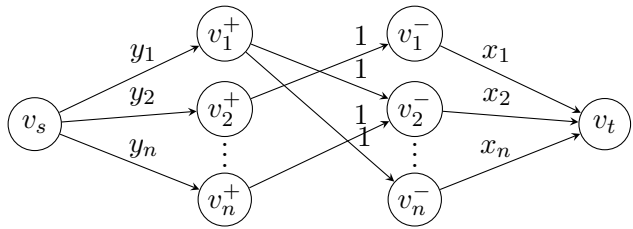
\begin{figure}[t!]
  \centering
  \begin{tikzpicture}[>=stealth,scale=1.0, transform shape]
    \node[draw, shape=circle, inner sep=3pt] (t-1) at (7.5,2.8) {$v_t$};

    \node[draw, shape=circle, inner sep=3pt] (s-1) at (0,2.8) {$v_s$};
    
    \node[draw, shape=circle, inner sep=1pt] (input-1) at (2.5,4) {$v_1^+$};
    \node[draw, shape=circle, inner sep=1pt] (input-2) at (2.5,3) {$v_2^+$};
    \node at (2.5,2.5) {$\vdots$};
    \node[draw, shape=circle, inner sep=1pt] (input-3) at (2.5,1.8) {$v_n^+$};

    \node[draw, shape=circle, inner sep=1pt] (hidden-1) at (5,4) {$v_1^-$};
    \node[draw, shape=circle, inner sep=1pt] (hidden-2) at (5,3) {$v_2^-$};
    \node at (5.0,2.5) {$\vdots$};
    \node[draw, shape=circle, inner sep=1pt] (hidden-3) at (5,1.8) {$v_n^-$};

    \draw [->] (input-1) -- (hidden-3) node [pos=0.8, above] {$1$};
    \draw [->] (input-1) -- (hidden-2) node [pos=0.8, above] {$1$};
    \draw [->] (input-2) -- (hidden-1) node [pos=0.8, above] {$1$};
    \draw [->] (input-3) -- (hidden-2) node [pos=0.8, above] {$1$};
    
    \foreach \l [count=\i] in {1,2,n}
        \draw [->] (s-1) -- (input-\i) node [pos=0.6, above] {$y_\l$}; 
    
    \foreach \l [count=\i] in {1,2,n}
        \draw [->] (hidden-\i) -- (t-1) node [pos=0.4, above] {$x_\l$};
\end{tikzpicture} 
\caption{A flow network as described in \Cref{flowConst}. For each vertex $v_i$ in the digraph~$D$ there are two vertices $v_i^+$ and~$v_i^-$. We connect a vertex~$v_i^+$ to a vertex $v_j^-$ if the arc $(v_i,v_j)$ is not in $D$. Inserting the arc $(v_i,v_j)$ is then represented by setting the flow on the arc $(v_i^+,v_j^-)$ to one.
}
\label{fig:flownw}
\end{figure}
The network~$N$ contains~$|V_N|\in O(n)$ vertices and~$|A_N|\in O(n^2-m)$ arcs (since $m\le n^2 -n$, we also have~$|A_N|\in\Omega(n)$) and can be constructed in~$O(n^2)$ time.
See \Cref{fig:flownw} for an illustration.
Inserting an arc $(v_i,v_j)$ in~$D$ corresponds to sending flow from $v_i^+$ to $v_j^-$. Since, by definition,
each vertex~$v_i^+$ will only receive at most~$y_i$ flow from~$v_s$ and each vertex $v_i^-$ will send at most $x_i$ flow to $v_t$, we cannot insert more than~$s$ arcs (\cref{cond:balance}).

We claim that for $s>2(\degpara)^2$ (\cref{cond:minsize}), the maximum flow in the network is indeed $s$.
To see this, let $V_N^+\coloneqq \{ v_i^+ \in V_N \mid i \in \{1,\dots,n\}\}$ and let $V_N^-\coloneqq \{ v_i^- \in V_N \mid i \in \{1,\dots,n\}\}$.
	In the following, a vertex $v_i^+ \in V_N^+$ ($v_j^- \in V_N^-$) is called \emph{saturated} with respect to a flow~$f\colon A_N \rightarrow \mathbb{R}^+$, if $f(v_s,v_i^+) = y_i$ ($f(v_j^-,v_t) = x_j$).
	Suppose that the maximum flow~$f$ has a value less than $s$.
	Then, there exist non-saturated vertices $v_i^+ \in V_N^+$ and $v_j^- \in V_N^-$.
	Let $X \subseteq V_N^-$  be the vertices to which $v_i^+$ has an outgoing arc in the residual graph and let $Y \subseteq V_N^+$ be the vertices which have an outgoing arc to $v_j^-$ in the residual graph.
	Observe that $\deg_N^+(v_i^+) = n-1-\deg_D^+(v_i)$ and $\deg_N^-(v_j^-) = n-1-\deg_D^-(v_j)$.
	Consequently, $|X| > n-1-\deg_D^+(v_i) - y_i \ge n-1-\degpara$ holds due to \cref{cond:outdegbound}.
	Since $v_i^+$ is not saturated, we know that $|X| \ge n-\degpara\ge 1$ (due to \cref{cond:maxdegbound}).
	By the same reasoning (using \cref{cond:indegbound,cond:maxdegbound}) it follows that $|Y| \ge n-\degpara\ge 1$.

	Remember that $f$ is a flow of maximum value. Hence, each vertex in~$X$ and each vertex in~$Y$ is saturated. 
	Otherwise, there would be an augmenting path in the residual graph, contradicting our assumption of $f$ being maximal. 
	If a vertex~$x \in X$ would receive flow from a vertex~$y \in Y$, then this would imply a backward arc in the residual graph resulting in an augmenting path $v_s \rightarrow v_i^+ \rightarrow x \rightarrow y \rightarrow v_j^- \rightarrow v_t$, again contradicting our maximality assumption for~$f$. 
	Thus, we can conclude that all the flow that goes into $X$ has to come from the remaining vertices in $V_N^+\setminus (Y\cup\{v_i^+\})$. This set has size at most $n - |Y| \le n-(n-\degpara)=\degpara$.
	But since $y_\ell \leq \degpara$ for all $\ell \in \{1,\dots,n\}$ (by \cref{cond:outdegbound}), those~$\degpara$ vertices can cover at most a flow of value~$(\degpara)^2$ and, hence, 
	\begin{align}
		\sum_{v_i^- \in X} x_i & \le \sum_{v_i^+ \in V_N^+ \setminus (Y \cup \{v_i^+\})} y_i \le (\degpara)^2. \label{eq:X-has-little-flow}
	\end{align}
	Since $X$ is saturated, and since also $x_\ell\le \degpara$ holds for all $\ell\in\{1,\ldots,n\}$ (\cref{cond:indegbound}), we obtain from \cref{cond:balance}
	\begin{align*} 
		s = \sum_{i=1}^n x_i &= \sum_{v_i^- \in X} x_i +  \sum_{v_i^- \in V_N^- \setminus X} x_i
		\overset{\eqref{eq:X-has-little-flow}}{\le} (\degpara)^2+\sum_{v_i^- \in V_N^- \setminus X} \degpara\\
                                     &= (\degpara)^2 + |V_N^-\setminus X|\cdot \degpara= (\degpara)^2 + n-|X|\cdot \degpara\\
          &\le (\degpara)^2 + \degpara \cdot \degpara.
	\end{align*}
	This contradicts $s > 2(\degpara)^2$ (\cref{cond:minsize}) and hence proves the claim.

	Now, let $f$ be a maximum flow in~$N$ (computable in $O(|V_N||E_N|)=O(n(n^2-m))$ time~\cite{Orlin13}) and let $A':=\{(v_i,v_j) \in V^2 \mid f((v_i^+,v_j^-))=1\}$. Note that $|A'|=s$ and~$A'\cap A =\emptyset$.
	Clearly, for the digraph $D':=D+A'$ it holds~$\deg_{D'}(v_i)=\deg_D(v_i)+(x_i,y_i)$ for all~$v_i\in V$.\qed
\end{proof}

We remark that similar flow-constructions as given in the proof above have been used before~\cite{Gal57,CMPPS14}.
The difference here is that we actually argue about the size of the flow and not only about polynomial-time solvability. Consequently, our proof uses quite different arguments. 

With \Cref{lem:factor} we have the key which allows us to transfer solutions of \nDDCSC to solutions of \DDCSC. The following lemma is immediate.

\begin{lemma}\label[lemma]{cor:LargePiTupleSolution}
  Let $I:=(D=(V,A),s,\tau,\Pi)$ with $V=\{v_1,\dots,v_n\}$ be an instance of \DDCSC with $s>2(\degpara)^2$.
  If there exists an $s'$ with $2(\degpara)^2 < s' \le s$ such that
  $I':=(\deg_D(v_1),\ldots,\deg_D(v_n),s',\tau',\Pi)$ with~$\tau'(i):=\tau(v_i)$ for all $v_i\in V$ is a yes-instance of \nDDCSC, then also~$I$ is a yes-instance of \DDCSC.
\end{lemma}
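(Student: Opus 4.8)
The plan is to translate the number-problem witness into concrete demands and then invoke \Cref{lem:factor} to realize those demands as inserted arcs. Since $I'$ is a yes-instance of \nDDCSC, I would first fix a witnessing sequence $\sigma'=(c_1',d_1'),\ldots,(c_n',d_n')$ satisfying $\sum_{i=1}^n(c_i'-c_i)=\sum_{i=1}^n(d_i'-d_i)=s'$, with $c_i\le c_i'$ and $d_i\le d_i'$, with $(c_i',d_i')\in\tau'(i)=\tau(v_i)$ for all $i$, and with $\sigma'$ fulfilling $\Pi$; here $(c_i,d_i)=\deg_D(v_i)$. I then set the demands $x_i:=c_i'-c_i\ge 0$ and $y_i:=d_i'-d_i\ge 0$, recording the bookkeeping identities $\deg_D^-(v_i)+x_i=c_i'$ and $\deg_D^+(v_i)+y_i=d_i'$.

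The central step is to check that $x_1,\ldots,x_n$, $y_1,\ldots,y_n$ satisfy the five hypotheses of \Cref{lem:factor} with solution size $s'$. \Cref{cond:balance} is exactly the balance equation $\sum x_i=\sum y_i=s'$ of the number solution, and \Cref{cond:minsize} is the given bound $s'>2(\degpara)^2$. For \Cref{cond:indegbound,cond:outdegbound} I would invoke that $\degpara$ measures the maximum in- or outdegree of the output digraph, so the targets obey $c_i'\le\degpara$ and $d_i'\le\degpara$, which by the identities above is precisely $\deg_D^-(v_i)+x_i\le\degpara$ and $\deg_D^+(v_i)+y_i\le\degpara$. Finally, \Cref{cond:maxdegbound} holds because these targets are to be realized in a simple digraph on $n$ vertices, where no in- or outdegree can exceed $n-1$, whence $\degpara\le n-1$.

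With the hypotheses in place, \Cref{lem:factor} supplies an arc set $A'\subseteq V^2\setminus A$ of size $s'$ such that $D':=D+A'$ has $\deg_{D'}(v_i)=\deg_D(v_i)+(x_i,y_i)=(c_i',d_i')$ for every $i$. It then remains to confirm that $A'$ is a genuine solution to $I$: we have $|A'|=s'\le s$, so at most $s$ arcs are inserted; each $\deg_{D'}(v_i)=(c_i',d_i')$ lies in $\tau(v_i)$ by the choice of $\sigma'$; and the degree sequence $\sigma(D')=\{(c_1',d_1'),\ldots,(c_n',d_n')\}$ coincides as a multiset with $\sigma'$, so $\sigma(D')$ inherits the property $\Pi$ already verified by the number solver. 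Hence $I$ is a yes-instance of \DDCSC.

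Consistent with the paper's labelling this lemma ``immediate,'' I do not anticipate a real obstacle; the only points that reward care are the two identities $\deg_D^-(v_i)+x_i=c_i'$ and $\deg_D^+(v_i)+y_i=d_i'$ used to discharge \Cref{cond:indegbound,cond:outdegbound}, and the observation that realizing the demands reproduces $\sigma'$ verbatim, so that $\Pi$ carries over untouched from the number problem to the graph problem.
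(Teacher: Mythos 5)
Your proposal is correct and follows exactly the route the paper intends: the paper declares this lemma ``immediate'' from \Cref{lem:factor}, and your write-up simply spells out that derivation---turning the \nDDCSC{} witness into demands $x_i,y_i$, verifying \cref{cond:maxdegbound,cond:indegbound,cond:outdegbound,cond:balance,cond:minsize} via the definition of~\degpara{} and the bound $s'>2(\degpara)^2$, and noting that the realized digraph has degree sequence $\sigma'$, so $\tau$ and $\Pi$ carry over. No discrepancy with the paper's argument; your explicit check of the five conditions is a faithful (and more detailed) account of what the paper leaves implicit.
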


We now have all ingredients for our first main result, namely transferring
fixed-parameter tractability with respect to the combined parameter~$(s,\degpara)$ to
fixed-parameter tractability with respect to the single parameter~$\degpara$, provided that \nDDCSC is fixed-parameter tractable with respect to the largest possible integer~$\xi$ in the output sequence.
The idea is to search for large solutions based on \Cref{cor:LargePiTupleSolution} using \nDDCSC.
If there are no large solutions (that is, $s \le 2(\degpara)^2$), then we run an FPT-algorithm with respect to~$(s,\degpara)$.

\begin{theorem}\label{thm:FPT_transfer}
  If \nDDCSC is fixed-parameter tractable with respect to the largest possible integer~$\xi$ in the output sequence and \DDCSC is fixed-parameter tractable with respect to $(s,\degpara)$, then \DDCSC is fixed-parameter tractable with respect to~$\degpara$.
\end{theorem}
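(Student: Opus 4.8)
The plan is to decide a given instance $I=(D,s,\tau,\Pi)$ of \DDCSC by a case distinction on the size of an optimal solution, handling ``small'' solutions with the hypothesized $(s,\degpara)$-FPT algorithm and ``large'' solutions with the number problem \nDDCSC together with \Cref{cor:LargePiTupleSolution}. Concretely, I would first test for a solution of size at most $2(\degpara)^2$ by running the $(s,\degpara)$-FPT algorithm (provided by the hypothesis, e.g.\ via \Cref{thm:PiEAfptDeltak}) on the instance obtained from $I$ by capping the budget to $\min\{s,2(\degpara)^2\}$; since the capped budget and $\degpara$ are then both bounded by functions of $\degpara$, this runs in $f(\degpara)\cdot|I|^{O(1)}$ time. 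If it succeeds I answer yes. Otherwise, in the remaining case $s>2(\degpara)^2$, I would search for a large solution: for every integer $s'$ with $2(\degpara)^2<s'\le s$, I build the \nDDCSC instance $(\deg_D(v_1),\dots,\deg_D(v_n),s',\tau',\Pi)$ with $\tau'(i):=\tau(v_i)$ and decide it with the assumed $\xi$-FPT algorithm, answering yes iff at least one of these instances is a yes-instance.

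For correctness I would argue both directions. If $I$ has a solution of size at most $2(\degpara)^2$, the capped run detects it. If the smallest solution $A^*$ has size $s^{*}>2(\degpara)^2$, then reading off the degree sequence $\sigma(D+A^{*})$ yields a witness for the \nDDCSC instance with $s'=s^{*}$: each inserted arc raises the total indegree and the total outdegree by exactly one, so $\sum_i(c_i'-c_i)=\sum_i(d_i'-d_i)=s^{*}$; the componentwise lower bounds and the constraints $(c_i',d_i')\in\tau(v_i)$ hold because $A^{*}$ is a solution, and $\sigma(D+A^{*})$ fulfills $\Pi$. Hence that instance is a yes-instance and is found. Conversely, whenever one of the constructed \nDDCSC instances is a yes-instance, \Cref{cor:LargePiTupleSolution} (which packages the flow realization of \Cref{lem:factor}) certifies that $I$ itself is a yes-instance, so no false positives occur. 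For the running time, there are at most $s\le n^2$ values of $s'$ and each \nDDCSC call costs $g(\xi)\cdot|I|^{O(1)}$ time; since every degree in a solution we care about is at most $\degpara$, I would intersect each $\tau'(i)$ with $\{0,\dots,\degpara\}^2$ so that $\xi\le\degpara$, making the whole large-solution phase run in $g(\degpara)\cdot|I|^{O(1)}$ time. Combining the two phases gives fixed-parameter tractability with respect to $\degpara$.

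The one genuinely delicate point is that $\degpara$ is not known in advance, so the threshold $2(\degpara)^2$ and the cap $\{0,\dots,\degpara\}^2$ cannot be used literally. I would resolve this by iterating over a guess $\delta=0,1,2,\dots$ (up to the a-priori bound $\degpara\le\min\{r,\Delta_D+s\}$), running the whole procedure above with $\delta$ in place of $\degpara$ and additionally capping $\tau$ to $\{0,\dots,\delta\}^2$, and stopping at the first $\delta$ yielding a yes-answer. Capping at $\delta$ restricts attention exactly to solutions of maximum degree at most $\delta$, and \Cref{lem:factor} remains applicable with $\degpara$ replaced by the cap $\delta$, so each round is correct; for the smallest successful guess $\delta\le\degpara$ all parameters stay bounded by functions of $\delta\le\degpara$, whence the cumulative running time stays $f(\degpara)\cdot|I|^{O(1)}$. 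I expect this bookkeeping around the unknown parameter, together with verifying that capping $\tau$ neither destroys a relevant solution nor lets $\xi$ exceed $\delta$, to be the main obstacle; the algebraic heart of the argument is already supplied by \Cref{cor:LargePiTupleSolution}.
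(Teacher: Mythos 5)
Your proposal is correct and follows essentially the same route as the paper's own proof: a case distinction on whether the solution size exceeds~$2(\degpara)^2$, handling large solutions by solving the number problem \nDDCSC{} for each~$s'\in\{2(\degpara)^2+1,\ldots,s\}$ and invoking \cref{cor:LargePiTupleSolution}, and handling small solutions by running the assumed $(s,\degpara)$-FPT algorithm with the budget capped at~$2(\degpara)^2$. Your two extra refinements---iterating over a guess~$\delta$ because \degpara{} is not known in advance, and intersecting~$\tau$ with~$\{0,\ldots,\delta\}^2$ so that the number problem's parameter~$\xi$ is guaranteed to be at most~$\delta$ (making \cref{lem:factor} applicable with the cap in place of~\degpara)---are sound additions that the paper's proof silently glosses over (it compares~$s$ against~$2(\degpara)^2$ and asserts the \nDDCSC{} calls run in~$h'(\degpara)\cdot n^{O(1)}$ time as if \degpara{} were known and~$\xi\le\degpara$ were automatic), so they tighten rather than deviate from the paper's argument.
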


\begin{proof}
  In the following, let $\mathcal{A}$ be the fixed-parameter algorithm solving \DDCSC in $h(s,\degpara)\cdot n^{O(1)}$ time
  and let $\mathcal{A}'$ be the fixed-parameter algorithm solving \nDDCSC in $h'(\xi)\cdot n^{O(1)}$ time.
  Let $I:=(D=(V,A),s,\tau,\Pi)$ be a \DDCSC instance.

  If $s\le 2(\degpara)^2$, then we can run algorithm~$\mathcal{A}$ on~$I$
  in $h(s,\degpara)\cdot n^{O(1)} \le g(\degpara)\cdot n^{O(1)}$ time for some function~$g$.

  Otherwise, we check for each~$s'\in\{2(\degpara)^2+1,\ldots,s\}$ whether the instance $I_{s'}:=(\deg_D(v_1),\ldots,\deg_D(v_n),s',\tau',\Pi)$ with $\tau'(i):=\tau(v_i)$ for all~$v_i\in V$ is a yes-instance of \nDDCSC using algorithm~$\mathcal{A}'$. Note that the running time is at most $s\cdot h'(\degpara)\cdot n^{O(1)}$.
  If we find a yes-instance~$I_{s'}$ for some~$s'$, then we know by \cref{cor:LargePiTupleSolution} that $I$ is also a yes-instance. If~$I_{s'}$ is a no-instance for all~$s'\in\{2(\degpara)^2+1,\ldots,s\}$, then we also know that there cannot exist a solution for~$I$ of size larger than~$2(\degpara)^2$ since the existence of a solution for a \DDCSC instance clearly implies a solution for the corresponding \nDDCSC instance.
  Therefore, $I$ is a yes-instance if and only if $I':=(D,2(\degpara)^2,\tau,\Pi)$ is a yes-instance.
  We can thus run algorithm~$\mathcal{A}$ on~$I'$ in~$h(2(\degpara)^2,\degpara)\cdot n^{O(1)}$ time.\qed
\end{proof}

Our second main result allows us to transfer a polynomial-size problem kernel with respect to~$(s,\degpara)$
to a polynomial-size problem kernel with respect to~$\degpara$ if \nDDCSC is polynomial-time solvable.
The proof is analogous to the proof of \cref{thm:FPT_transfer}.

\begin{theorem}\label{thm:Kernel_transfer}
  If \DDCSC admits a problem kernel containing $g(s,\degpara)$ vertices computable in $p(n)$ time and \nDDCSC is solvable in $q(n)$ time for polynomials~$p$ and~$q$, then \DDCSC admits a problem kernel with $g(2(\degpara)^2,\degpara)$ vertices computable in~$O(s\cdot q(n) + p(n))$ time.
\end{theorem}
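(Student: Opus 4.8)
The plan is to mirror the structure of the proof of \cref{thm:FPT_transfer}, but now track running times and kernel sizes instead of mere fixed-parameter tractability. Let $\mathcal{K}$ denote the kernelization algorithm which, given a \DDCSC instance, produces in $p(n)$ time an equivalent instance with at most $g(s,\degpara)$ vertices. Let $\mathcal{A}'$ be the algorithm solving \nDDCSC in $q(n)$ time. The overall goal is to first reduce the solution size parameter $s$ to at most $2(\degpara)^2$ (so that $g(s,\degpara)$ becomes $g(2(\degpara)^2,\degpara)$, a polynomial in $\degpara$), and only then invoke $\mathcal{K}$.

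First I would dispose of the easy case: if $s\le 2(\degpara)^2$ already holds, then running $\mathcal{K}$ directly on the input instance yields, in $p(n)$ time, a kernel with at most $g(s,\degpara)\le g(2(\degpara)^2,\degpara)$ vertices, as desired. For the interesting case $s>2(\degpara)^2$, I would use \cref{cor:LargePiTupleSolution} exactly as in \cref{thm:FPT_transfer}: for each $s'\in\{2(\degpara)^2+1,\ldots,s\}$ I would test, using $\mathcal{A}'$, whether the associated \nDDCSC instance $I_{s'}:=(\deg_D(v_1),\ldots,\deg_D(v_n),s',\tau',\Pi)$ with $\tau'(i):=\tau(v_i)$ is a yes-instance. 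Since any genuine \DDCSC solution of size $s'$ induces a \nDDCSC solution, if some $I_{s'}$ is a yes-instance then \cref{cor:LargePiTupleSolution} certifies that the original instance is a yes-instance, and I can simply output a trivial constant-size yes-instance as the kernel. If \emph{every} $I_{s'}$ with $s'>2(\degpara)^2$ is a no-instance, then no \DDCSC solution of size exceeding $2(\degpara)^2$ exists, so the input is equivalent to the instance $I':=(D,2(\degpara)^2,\tau,\Pi)$ in which the solution budget has been capped at $2(\degpara)^2$. I would then run $\mathcal{K}$ on $I'$, obtaining a kernel with at most $g(2(\degpara)^2,\degpara)$ vertices.

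The running time bookkeeping is the only genuinely new content relative to \cref{thm:FPT_transfer}. The loop over $s'$ makes at most $s$ calls to $\mathcal{A}'$, each costing $q(n)$ time, contributing $O(s\cdot q(n))$; the single call to $\mathcal{K}$ (either on the original instance or on $I'$) costs $p(n)$. Summing gives the claimed $O(s\cdot q(n)+p(n))$ bound. I would remark that the budget cap from $s$ down to $2(\degpara)^2$ does not increase the vertex count, so the kernel size is governed entirely by $g$ evaluated at the reduced budget.

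I do not expect a serious obstacle here, since the argument is essentially the kernelization analogue of the fixed-parameter tractability transfer already carried out. The one point requiring a little care is the equivalence claim in the final subcase: I must argue that capping the budget at $2(\degpara)^2$ preserves the yes/no status, which follows precisely because the loop has ruled out all solutions of size greater than $2(\degpara)^2$, while every solution of size at most $2(\degpara)^2$ remains a solution of $I'$ and vice versa. A second minor subtlety is ensuring that $\degpara$ is available as a parameter to the kernelization even though it is not known in advance; as noted earlier in the paper one works with the bound $\degpara\le\min\{r,\Delta_D+s\}$, so the construction is well defined.
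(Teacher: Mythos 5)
Your proposal is correct and follows essentially the same route as the paper's proof: the same case split on $s\le 2(\degpara)^2$, the same loop invoking the \nDDCSC{} solver for each $s'\in\{2(\degpara)^2+1,\ldots,s\}$ with \cref{cor:LargePiTupleSolution} justifying the trivial yes-instance output, the same capped instance $I'=(D,2(\degpara)^2,\tau,\Pi)$ in the all-no case, and the same $O(s\cdot q(n)+p(n))$ accounting. The extra remarks (monotonicity of $g$ in the easy case, availability of the bound $\degpara\le\min\{r,\Delta_D+s\}$) are harmless refinements of points the paper leaves implicit.
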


\begin{proof}
  Let $I:=(D=(V,A),s,\tau,\Pi)$ be a \DDCSC instance.
  If $s\le 2(\degpara)^2$, then we simply run the kernelization algorithm on~$I$
  obtaining an equivalent instance containing at most~$g(2(\degpara)^2,\degpara)$ vertices in~$p(n)$ time.
  Otherwise, we check, in~$q(n)$ time for each~$s'\in\{2(\degpara)^2+1,\ldots,s\}$, whether the instance $I_{s'}:=(\deg_D(v_1),\ldots,\deg_D(v_n),s',\tau',\Pi)$ with $\tau'(i):=\tau(v_i)$ for all~$v_i\in V$ is a yes-instance of \nDDCSC. Note that the running time is thus at most~$s\cdot q(n)$.
  If we find a yes-instance~$I_{s'}$ for some~$s'$, then we know by \cref{cor:LargePiTupleSolution} that also~$I$ is a yes-instance, and thus we return a trivial \DDCSC yes-instance. If~$I_{s'}$ is a no-instance for all~$s'\in\{2(\degpara)^2+1,\ldots,s\}$, then we also know that there cannot exist a solution for~$I$ of size larger than~$2(\degpara)^2$ since the existence of a solution for a \DDCSC instance clearly implies a solution for the corresponding \nDDCSC instance.
  Therefore, $I$ is a yes-instance if and only if $I':=(D,2(\degpara)^2,\tau,\Pi)$ is a yes-instance. Again, we run the kernelization algorithm on~$I'$ and return an equivalent instance with at most~$g(2(\degpara)^2,\degpara)$ vertices in~$p(n)$ time.
  The overall running time is thus in~$O(s \cdot q(n)+p(n))$ and we obtain a problem kernel with respect to~$\degpara$.\qed
\end{proof}

\section{Applications}

In the following, we show how the framework described in \cref{sec:GeneralSetting} can be applied to three special cases of \DDCSC{}.
These special cases naturally extend known problems on undirected graphs to the digraph setting.

\subsection{Digraph Degree Constraint Completion} \label{sec:DDCC}

In this section, we investigate the NP-hard special case of \DDCSC\footnote{This special case was investigated more specifically
in the Bachelor thesis of \citeauthor{Kos15}~\cite{Kos15} (available online).} where the property~$\Pi$ allows any possible degree sequence, see \Cref{fig:flownw18} for two illustrating examples.

\begin{problemdef}
  \problemtitle{\DDCClong (\DDCC)}
    \probleminput{A digraph $D=(V,A)$, a positive integer $s$, and a ``degree list function'' $\tau\colon V \rightarrow 2^{\{0,\dots,r\}^2}$.}
    \problemquestion{Is it possible to obtain a digraph $D'$ by inserting at most $s$ arcs in~$D$ such that $\deg_{D'}(v) \in \tau(v)$ for all $v \in V$?}
\end{problemdef}

\begin{figure}[t]
\centering
\begin{tikzpicture}[>=stealth, scale=0.9]

\node[gnode] (s-1) at (0,0) {};
\node[gnode] (s-2) at (2,0) {};
\node[gnode] (s-3) at (4,0) {};

\node at (0,0.5) {$\{(0,1)\}$};
\node at (2,0.5) {$\{(1,0),(2,0)\}$};
\node at (4,0.5) {$\{(0,1)\}$};

\node[gnode] (s-4) at (7,0) {};
\node[gnode] (s-5) at (9,0) {};
\node[gnode] (s-6) at (11,0) {};

\node at (7,0.5) {$\{(0,1)\}$};
\node at (9,0.5) {$\{(2,0)\}$};
\node at (11,0.5) {$\{(1,1),(2,1)\}$};

\draw[->] (s-1)  -- (s-2);
\draw[->, dashed] (s-3) -- (s-2);
\draw[->] (s-4) -- (s-5);
    
\end{tikzpicture}
\captionof{figure}{Two example instances of \DDCC with $s=1$.
The left instance is solvable by inserting the (dashed) arc from the right vertex to the middle vertex.
The right instance is a no-instance since one cannot add an outgoing arc to the left vertex or to the middle vertex but one has to add an incoming arc to the right vertex (loops are not allowed).}
\label{fig:flownw18}
\end{figure}
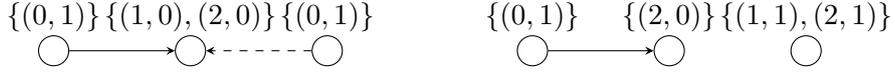
\DDCC is the directed (completion) version of the well-studied undirected \textsc{Degree Constraint Editing} problem~\cite{MS12, Gol15} for which an~$O(r^5)$-vertex problem kernel is known~\cite{FNN16}.
We subsequently transfer the polynomial-size problem kernel for the undirected case to a polynomial-size problem kernel for \DDCC with respect to~\degpara.
Note that the parameter~\degpara is clearly at most~$r$.
Since it is trivial to decide~$\Pi$ in this case, due to \cref{thm:PiEAfptDeltak} we obtain fixed-parameter tractability of \DDCC with respect to~$(s,\Delta_D)$.
The result is based on a bounded search space, namely a $2s(\Delta_D+1)$-type set (see \cref{def:alpha-block-type-set} and \cref{lem:DDCC-type-set}).
We further strengthen this result by removing all vertices that are not in the $2s(\Delta_D+1)$-type set and adjusting the degree list function~$\tau$ appropriately.
\cref{lem:DDCC-type-set} then yields the correctness of this approach resulting in a polynomial-size problem kernel with respect to~$(s,\degpara)$.

We start with the following simple reduction rule.
Recall that a vertex~$v$ is called unsatisfied if~$\deg_D(v)\not\in\tau(v)$.
  
\begin{rrule}\label{rr:DDCCtrivialNO}
  Let $(D=(V,A),s,\tau)$ be a \DDCC instance.
  If there are more than~$2s$ unsatisfied vertices, then return a trivial no-instance.
  Moreover, if there exists a vertex~$v\in V$ with $\deg_D^-(v) > \degpara$ or $\deg_D^+(v) > \degpara$, then also return a trivial no-instance.
\end{rrule}

\begin{lemma}\label[lemma]{lem:DDCCtrivialNO}
\cref{rr:DDCCtrivialNO} is correct and can be computed in $O(m+|\tau|)$ time.  
\end{lemma}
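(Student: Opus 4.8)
The plan is to verify the two claims in \cref{rr:DDCCtrivialNO} separately---correctness of returning a trivial no-instance in each of the two stated situations---and then to analyze the running time.

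For the first situation, I would argue exactly as in the proof of \cref{thm:PiEAfptDeltak}: inserting a single arc increases the indegree of one vertex and the outdegree of one vertex, so it can change the degree of at most two vertices. Hence inserting at most~$s$ arcs changes the degrees of at most~$2s$ vertices. Since every unsatisfied vertex~$v$ (that is, $\deg_D(v)\notin\tau(v)$) must have its degree changed to become satisfied, a yes-instance can contain at most~$2s$ unsatisfied vertices. Therefore, if there are more than~$2s$ unsatisfied vertices, the instance is a no-instance and returning a trivial no-instance is correct. For the second situation, I would observe that arc insertions only ever \emph{increase} degrees, so for the output digraph~$D'$ we have $\deg_{D'}^-(v)\ge\deg_D^-(v)$ and $\deg_{D'}^+(v)\ge\deg_D^+(v)$ for every~$v$. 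Since by definition~\degpara is the maximum in- or outdegree in any solution digraph, any vertex~$v$ with $\deg_D^-(v)>\degpara$ or $\deg_D^+(v)>\degpara$ would already violate this bound in~$D$ and thus in every~$D'$ obtained by insertions, so no solution can exist and a trivial no-instance is the correct output.

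For the running time, I would describe how to test both conditions within the claimed $O(m+|\tau|)$ bound. Computing all in- and outdegrees takes $O(m)$ time by a single pass over the arc set. To count the unsatisfied vertices, I would iterate over all vertices and, for each vertex~$v$, check membership $\deg_D(v)\in\tau(v)$; charging this test to the portion of~$\tau$ describing~$v$ gives $O(|\tau|)$ time in total, and I can stop early once more than~$2s$ unsatisfied vertices are found. The degree-bound test is a simple comparison per vertex against~\degpara, absorbed into the $O(m)$ (plus a~$O(n)$ pass, which is dominated since $n\le m+|\tau|$ in the relevant regime, or can be folded into the degree computation). Summing up yields $O(m+|\tau|)$ time.

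The conceptually delicate point---and the one I would state most carefully---is the correctness of the second condition, because it refers to~\degpara, which is not part of the input and is in general not known in advance. I would make explicit that the reduction rule is being applied with respect to a \emph{fixed} value of~\degpara (as discussed in the framework, one works with an upper bound such as $\min\{r,\Delta_D+s\}$), and that the argument only uses the monotone property that insertions never decrease degrees. Everything else is routine; the main subtlety is simply ensuring that the bound~\degpara is interpreted consistently as the target maximum degree of the solution digraph rather than as an input-derived quantity.
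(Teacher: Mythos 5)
Your proposal is correct and follows essentially the same two-step argument as the paper: an arc insertion can change the degrees of at most two vertices (hence at most $2s$ unsatisfied vertices in a yes-instance), insertions never decrease degrees (hence no vertex may already exceed~\degpara), and the running time follows from computing degrees in $O(n+m)$ time and one pass through~$\tau$. Your additional remark about interpreting~\degpara as a fixed upper bound (e.g., $\min\{r,\Delta_D+s\}$) is a fair point of care, but it matches the paper's framework discussion rather than constituting a different proof route.
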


\begin{proof}
  If there are more than~$2s$ unsatisfied vertices, then we can return a trivial no-instance since
  inserting an arc can satisfy at most two vertices.
  Also, by inserting arcs we can only increase in- and outdegrees of vertices. Hence, we can return a no-instance if the in- or outdegree of a vertex is larger than $\degpara$.
  This proves the correctness.
  
  The reduction rule is applicable in $O(m+|\tau|)$ time by computing the degree of each vertex in $O(n+m)$ time and subsequently iterating through~$\tau$.\qed
\end{proof}

Based on \cref{rr:DDCCtrivialNO}, we obtain a polynomial-size
problem kernel with respect to the combined parameter~$(s,\degpara)$ as follows.

\begin{theorem}\label[theorem]{thm:DDCCkrKernel}
\DDCC admits a problem kernel where the number of vertices is in $ O(s(\degpara)^3) \subseteq O(sr^3)$. 
It is computable in $O(m+|\tau|+r^2)$ time.
\end{theorem}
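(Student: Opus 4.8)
The plan is to build the kernel directly from \cref{lem:DDCC-type-set} by restricting attention to a $2s(\Delta_D+1)$-type set and discarding everything else. First I would apply \cref{rr:DDCCtrivialNO} to guarantee that there are at most~$2s$ unsatisfied vertices and that every vertex already respects the degree bound~$\degpara$; otherwise we output a trivial no-instance in $O(m+|\tau|)$ time. After the rule has been applied, I compute a $2s(\Delta_D+1)$-type set~$C$ using \cref{lem:alphaSetLinTime}, which takes $O(m+|\tau|+r^2)$ time. The key structural fact is \cref{lem:DDCC-type-set}: if the instance is a yes-instance, then some solution~$A^*\subseteq C^2$ exists, so it is safe to search for arcs only among the vertices of~$C$.

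The core of the proof is then to throw away all vertices outside~$C$ while preserving equivalence. The subtlety is that a satisfied vertex~$v\notin C$ may still be used by the original graph as an \emph{endpoint absorbing no new arcs}, i.e.\ it contributes to the degrees of vertices in~$C$ through existing arcs; since we only insert arcs (never delete), these existing arcs matter only insofar as they already determine $\deg_D$ of the retained vertices. Because $\deg_D(u)$ for $u\in C$ is computed in the original digraph and is unaffected by deleting vertices outside~$C$ together with their incident arcs---those arcs are not between two retained vertices only if at least one endpoint is removed---I must be careful: removing a vertex $w\notin C$ deletes arcs incident to~$w$, which would change the degree of a retained neighbour. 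The clean way around this is not to delete the arcs' effect but to record it: for each retained vertex~$u\in C$, I keep the induced subdigraph $D[C]$ and adjust the degree list function so that the ``missing'' indegree and outdegree contributed by deleted neighbours is baked into~$\tau$. Concretely, I set the new target function $\tau'(u)$ to shift $\tau(u)$ by the number of deleted in- and out-neighbours of~$u$, so that satisfying $\tau'$ in $D[C]$ is equivalent to satisfying $\tau$ in~$D$ restricted to arcs added within~$C$. By \cref{lem:DDCC-type-set} no inserted arc ever touches a deleted vertex, so this bookkeeping is exact.

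For the size bound I count the vertices of~$C$. There are at most~$2s$ unsatisfied vertices. For the satisfied vertices, $C$ contains at most~$\alpha=2s(\Delta_D+1)$ vertices of each type~$t\neq(0,0)$, and the number of relevant types is at most $(\degpara+1)^2$ since a useful type~$t=(t^-,t^+)$ must satisfy $\deg_D^-(v)+t^-\le\degpara$ and $\deg_D^+(v)+t^+\le\degpara$, giving $t^-,t^+\in\{0,\ldots,\degpara\}$. Hence the number of satisfied vertices is at most $2s(\Delta_D+1)\cdot(\degpara+1)^2$, and using $\Delta_D\le\degpara$ this is in $O(s(\degpara)^3)$. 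Since $\degpara\le r$, the bound lies in $O(sr^3)$ as claimed.

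The main obstacle is the degree-bookkeeping step: I must verify that after deleting vertices outside~$C$ and folding their contribution into~$\tau'$, the number of arcs we are allowed to insert is still bounded so that $\tau'$ values remain within $\{0,\ldots,r\}^2$ and that a solution of the reduced instance lifts back to a solution of the original one that inserts arcs only inside~$C$. The lifting direction is immediate because the reduced digraph is an induced subdigraph and inserted arcs stay within~$C$; the forward direction relies precisely on \cref{lem:DDCC-type-set} to guarantee that restricting any solution to~$C^2$ loses nothing. The final step is to confirm that all computations---applying the reduction rule, computing~$C$, building $D[C]$, and adjusting~$\tau$---fit within $O(m+|\tau|+r^2)$ time, which follows from \cref{lem:DDCCtrivialNO} and \cref{lem:alphaSetLinTime}.
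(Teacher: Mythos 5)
Your proposal is correct and follows essentially the same route as the paper's proof: apply \cref{rr:DDCCtrivialNO}, compute a $2s(\Delta_D+1)$-type set~$C$ via \cref{lem:alphaSetLinTime}, return $D[C]$ with the degree lists shifted by $(|N_D^-(v)\setminus C|,|N_D^+(v)\setminus C|)$, and prove equivalence using \cref{lem:DDCC-type-set} for the forward direction, with the same vertex count $2s + 2s(\Delta_D+1)(\degpara+1)^2 \in O(s(\degpara)^3)$. The only nitpick is phrasing: \cref{lem:DDCC-type-set} guarantees that \emph{some} solution lies in~$C^2$ (not that every solution avoids deleted vertices, nor that restricting an arbitrary solution to~$C^2$ works), but your argument uses it in exactly the correct way.
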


\begin{proof}
  Let $I = (D=(V,A),s,\tau)$ be an instance of \DDCC.
  First, we apply \cref{rr:DDCCtrivialNO} in $O(m+|\tau|)$ time. If a no-instance is returned, then we are done. Otherwise, we know that there are at most~$2s$ unsatisfied vertices.
  Also, we know that $\deg_D^-(v)\le \degpara$ and $\deg_D^+(v)\le \degpara$ for all~$v\in V$.
  We compute an $\alpha$-type set~$C$ (see \cref{def:alpha-block-type-set}) for $\alpha:=2s(\Delta_D+1)$ in $O(m+|\tau|+r^2)$ time (\cref{lem:alphaSetLinTime}) and return the instance $I' = (D':=D[C],s,\tau_C)$, where the adjusted degree list~$\tau_C(v)$, for each~$v\in C$, is defined as follows:
	\[
    \tau_C(v) \coloneqq \{(i,j)\in\{0,\ldots,\degpara\}^2 \mid (i,j)+(|N_D^-(v)\setminus C|,|N_D^+(v)\setminus C|)\in\tau(v)\}.
	\]
    The instance~$I'$ can be computed in~$O(m+|\tau|+r^2)$ time.
    We now show that~$I'$ is an equivalent instance of \DDCC.

    Assume that~$I'$ is a yes-instance, that is, there exists a set~$A'\subseteq C^2$ of size at most~$s$ such that $\deg_{D'+A'}(v)\in\tau_C(v)$ for each $v\in C$.
    Then, the set~$A'$ is also a solution for~$I$ since, for each vertex~$v\in C$, \[\deg_{D+A'}(v) =\deg_{D'+A'}(v)+(|N_D^-(v)\setminus C|,|N_D^+(v)\setminus C|)\in\tau(v)\] by definition of~$\tau_C(v)$.
    Moreover, for each vertex~$v\in V\setminus C$, we know that $\deg_{D+A'}(v)=\deg_{D}(v)\in\tau(v)$ since $V\setminus C$ contains only satisfied vertices. Hence, $I$ is a yes-instance.

    Conversely, let~$I$ be a yes-instance. Then, by \cref{lem:DDCC-type-set}, we know that there exists an arc set~$A^*\subseteq C^2$ of size at most~$s$ such that $\deg_{D+A^*}(v)\in\tau(v)$ for all~$v\in V$.
    Then, for each vertex~$v\in C$, \[\deg_{D'+A^*}(v) = \deg_{D+A^*}(v)-(|N_D^-(v)\setminus C|,|N_D^+(v)\setminus C|)\in\tau_C(v)\] by definition of~$\tau_C$.
    Hence, also~$I'$ is a yes-instance.
    
    Concerning the size of~$D'$, observe that~$C$ contains at most~$2s$ unsatisfied vertices and at most~$\alpha$ satisfied vertices for each of the~$(\degpara)^2$ possible types.
    Therefore,
    \[
      |C| \le 2s  + (\degpara+1)^2\cdot \alpha \le 2s + (\degpara)^2\cdot 2s(\Delta_D+1).
    \]
Since $\Delta_D\le \degpara$, we obtain a problem kernel with~$O(s(\degpara)^3)$ vertices.
  The overall running time is in~$O(m+|\tau|+r^2)$.\qed
\end{proof}

The goal now is to use our framework (\cref{thm:Kernel_transfer}) to transfer the polyno\-mial-size problem kernel with respect to~$(s,\degpara)$ to a polynomial-size problem kernel with respect to~$\degpara$ alone.
To this end, we show that the corresponding number problem \nDDCC (which is the special case of \nDDCSC without the sequence property~$\Pi$) is polynomial-time solvable.

\begin{problemdef}
    \problemtitle{\textsc{\nDDCClong} (\nDDCC)}
    \probleminput{A sequence $\sigma = (c_1,d_1),\ldots,(c_n,d_n)$ of~$n$ nonnegative integer tuples,
                  a positive integer~$s$, and
                  a ``tuple list function'' $\tau\colon \{1,\ldots,n\} \rightarrow 2^{\{0,\ldots,r\}^2}$.}
    \problemquestion{Is there a sequence $\sigma' = (c'_1,d'_1),\ldots,(c'_n,d'_n)$
                     such that $\sum_{i=1}^n (c'_i-c_i) = \sum_{i=1}^n (d'_i-d_i) = s$, and
                     $c_i \le c'_i$, $d_i \le d'_i$, and $(c'_i, d'_i)\in\tau(i)$ for all $1 \le i \le n$?}
\end{problemdef}

\nDDCC can be solved in pseudo-polynomial time by a dynamic programming algorithm. 
Note that pseudo-polynomial time is sufficient for our purposes since all occurring numbers will be bounded by~$O(n^2)$ when creating the \nDDCC instance from the given \DDCC instance.
(In fact, we conjecture that \nDDCC is weakly NP-hard and a reduction from \textsc{Partition} should be possible as in the case for \nDA in \cref{sec:DA}, \cref{thm:nDAhard}.)

\begin{lemma}\label[lemma]{lem:nDDCC-poly}
\nDDCC is solvable in $O(n(sr)^2)$ time.
\end{lemma}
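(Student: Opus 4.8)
The plan is to reduce \nDDCC to a two-dimensional bounded-knapsack-style feasibility question and solve it by dynamic programming. For each index~$i$, let $x_i := c_i' - c_i$ and $y_i := d_i' - d_i$ denote the amounts by which the two coordinates are increased. Since we require $c_i' \ge c_i$, $d_i' \ge d_i$, and $(c_i', d_i') \in \tau(i) \subseteq \{0,\ldots,r\}^2$, the pair $(x_i, y_i)$ ranges exactly over the set
\[
  P_i := \{(x,y) \in \{0,\ldots,r\}^2 \mid (c_i + x, d_i + y) \in \tau(i)\},
\]
which can be read off directly from $\tau(i)$ in $O(|\tau(i)|)$ time and has size at most $(r+1)^2$. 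In these terms, $(\sigma, s, \tau)$ is a yes-instance if and only if one can pick $(x_i, y_i) \in P_i$ for every~$i$ with $\sum_{i=1}^n x_i = s$ and $\sum_{i=1}^n y_i = s$.

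I would solve this with a Boolean table $f(i,a,b)$ for $i \in \{0,\ldots,n\}$ and $a,b \in \{0,\ldots,s\}$, where $f(i,a,b) = 1$ iff there is a choice $(x_j,y_j) \in P_j$ for $j \le i$ with $\sum_{j=1}^i x_j = a$ and $\sum_{j=1}^i y_j = b$. The base case is $f(0,0,0) = 1$ and $f(0,a,b) = 0$ otherwise, and the recurrence is
\[
  f(i,a,b) = \bigvee_{\substack{(x,y) \in P_i \\ x \le a,\ y \le b}} f(i-1,\, a-x,\, b-y).
\]
Capping the running sums at~$s$ is justified because all increments are nonnegative, so any partial sum exceeding~$s$ can never be corrected later. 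The instance is a yes-instance if and only if $f(n,s,s) = 1$; if some~$P_i$ is empty, the table stays identically false from level~$i$ onward, correctly reporting a no-instance.

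For the running time, there are $(n+1)(s+1)^2 = O(ns^2)$ table entries, and computing the entries at level~$i$ requires, for each of the $(s+1)^2$ pairs $(a,b)$, a scan over~$P_i$, i.e.\ $O(s^2 |P_i|)$ work; summing over the~$n$ levels and using $\sum_i |P_i| \le n(r+1)^2$ yields $O(n s^2 r^2) = O(n(sr)^2)$ time, as claimed. Reconstructing a witness sequence~$\sigma'$, should one later need it, is possible by standard backtracking within the same bound.

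There is no deep obstacle here: the only point requiring care is that the feasibility constraint \emph{couples} the two coordinates---the admissible $x_i$ and~$y_i$ are not chosen independently but jointly via~$\tau(i)$---which forces the dynamic program to track both partial sums simultaneously via two-dimensional states; naively separating the two coordinates into independent one-dimensional problems would be incorrect. The complementary point to verify is the size bound $|P_i| \le (r+1)^2$, which is exactly what keeps the per-entry transition cost at $O(r^2)$ and, together with the $O(ns^2)$ states, produces the stated $O(n(sr)^2)$ running time.
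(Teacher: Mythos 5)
Your proposal is correct and follows essentially the same route as the paper: a three-dimensional Boolean dynamic program indexed by the prefix length and the two partial increment sums (capped at~$s$), with transitions enumerating the admissible target tuples in~$\tau(i)$, giving $O(ns^2)$ states at $O(r^2)$ cost each. Your set~$P_i$ of increment pairs is just a notational repackaging of the paper's condition $(c_i',d_i')\in\tau(i)$ with $c_i'\ge c_i$, $d_i'\ge d_i$, and the shift of the base case from $i=1$ to an empty prefix $i=0$ is immaterial.
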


\begin{proof}
Let $I:=((c_1,d_1),\ldots,(c_n,d_n),s,\tau)$ be an instance of \nDDCC.
We solve~$I$ using a modified version of the dynamic programming algorithm for \NCEA due to \citet[Lemma~2]{FNN16}.
To this end, we define the Boolean table~$M[i,j,l]$ for $i\in\{1\ldots,n\}$, $j,l\in\{0,\ldots,s\}$,
where~$M[i,j,l]=\texttt{true}$ if and only if there exist tuples~$(c_1',d_1'),\ldots,(c_i',d_i')$ with $c_p'\ge c_p$, $d_p'\ge d_p$ and $(c_p',d_p')\in\tau(p)$ for all $p\in\{1,\ldots,i\}$ such that
$\sum_{p=1}^i(c_p'-c_p)=j$ and $\sum_{p=1}^i(d_p'-d_p)=l$.
Thus, $I$ is a yes-instance if $M[n,s,s]=\texttt{true}$.
We compute~$M$ based on the recurrence where we essentially consider all possibilities to fix the~$i$-th tuple and recurse:
\begin{align*}
  &M[i,j,l] = \texttt{true} \quad \Leftrightarrow \\
  &\exists (c_i',d_i') \in \tau(i) : (c_i' \geq c_i) \wedge (d_i' \geq d_i) \wedge M[i-1,j-(c_i'-c_i),l-(d_i'-d_i)],
\end{align*}
where we set
\[
M[1,j,l] := \begin{cases} \texttt{true,} & \text{if }(c_1+j,d_1+l) \in \tau(1),
  \\ \texttt{false,} & \text{otherwise.} \end{cases}
\]
The size of $M$ is in $O(ns^2)$. A single entry can be computed in~$O(r^2)$ time.\qed
\end{proof}

Combining \cref{thm:DDCCkrKernel,lem:nDDCC-poly} yields the following corollary of \cref{thm:Kernel_transfer}.

\begin{corollary}\label[corollary]{cor:DDCCkernel}
	\DDCC admits a problem kernel containing $O((\degpara)^5) \subseteq O(r^5)$ vertices.
	It is computable in~$O(m + ns^3r^2)$ time.
\end{corollary}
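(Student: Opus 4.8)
The plan is to obtain this result directly from the kernel-transfer machinery of \cref{thm:Kernel_transfer}, viewing \DDCC as the special case of \DDCSC in which the sequence property $\Pi$ accepts every degree sequence. The two ingredients demanded by \cref{thm:Kernel_transfer} are already in hand: \cref{thm:DDCCkrKernel} supplies a problem kernel with $g(s,\degpara)=O(s(\degpara)^3)$ vertices computable in $p(n)=O(m+|\tau|+r^2)$ time, and \cref{lem:nDDCC-poly} shows that the associated number problem \nDDCC is solvable in $q(n)=O(n(sr)^2)$ time. First I would check that these running times genuinely qualify as the polynomials $p,q$ required by the transfer theorem: since all target degrees lie in $\{0,\ldots,n-1\}$ we have $r\le n-1$, and we may assume $s\le n(n-1)$ (otherwise no arc set of that size exists in a digraph), so $q(n)$ is genuinely polynomial in $n$, exactly as remarked after \cref{lem:nDDCC-poly}.

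Given this, the vertex bound follows by substituting $s\mapsto 2(\degpara)^2$ into $g$: \cref{thm:Kernel_transfer} yields a kernel with $g(2(\degpara)^2,\degpara)=O\bigl((\degpara)^2\cdot(\degpara)^3\bigr)=O((\degpara)^5)$ vertices, and since $\degpara\le r$ this is contained in $O(r^5)$, as claimed.

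For the running time I would evaluate the bound $O(s\cdot q(n)+p(n))$ provided by \cref{thm:Kernel_transfer}. The loop over $s'\in\{2(\degpara)^2+1,\ldots,s\}$ invokes the \nDDCC algorithm at most $s$ times, and each call with solution-size parameter $s'\le s$ costs $O(ns^2r^2)$, giving $O(ns^3r^2)$ in total; adding the single kernelization pass of cost $p(n)=O(m+|\tau|+r^2)$ produces $O(ns^3r^2+m+|\tau|+r^2)$. I would then absorb the lower-order terms: both $|\tau|=O(nr^2)$ and $r^2$ are dominated by $ns^3r^2$ (since $s\ge1$), whereas $m=O(n^2)$ need not be, leaving the stated bound $O(m+ns^3r^2)$.

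Since the statement is a corollary of an already-established transfer theorem, there is no real structural obstacle. The only point requiring care is the bookkeeping of the running time---specifically noticing that the number-problem routine is called with solution-size parameter $s'$ ranging up to $s$, which is what turns the per-call cost $O(ns^2r^2)$ into the cubic dependence $O(ns^3r^2)$ rather than $O(ns^2r^2)$, and confirming that the terms $|\tau|$ and $r^2$ are swallowed by this dominant term while $m$ must be retained.
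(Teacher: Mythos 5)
Your proposal is correct and follows exactly the paper's intended route: the corollary is obtained by plugging the kernel of \cref{thm:DDCCkrKernel} and the $O(n(sr)^2)$-time algorithm for \nDDCC (\cref{lem:nDDCC-poly}) into the transfer result \cref{thm:Kernel_transfer}, which the paper states without further detail. Your additional bookkeeping---checking that $r\le n-1$ and $s\le n(n-1)$ make the number-problem routine genuinely polynomial, tracking the factor $s$ from the loop over $s'$, and absorbing $|\tau|$ and $r^2$ into $ns^3r^2$ while retaining $m$---is exactly the verification the paper leaves implicit.
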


\subsection{Digraph Degree Sequence Completion}
\label{sec:DDSC}

\begin{figure}[t]
  \centering
  \begin{tikzpicture}
    \node[gnode] (v1) at (0,0) {};
    \node[gnode] (v2) at (2,0) {};
    \node[gnode] (v3) at (4,0) {};
    \node[gnode] (v4) at (2,0.7) {};
    \node[right] at (4.5,0.4) {$\sigma = \{(0,3),(1,1),(2,0),(2,1)\}$};
    
    \draw[->] (v1) to (v2);
    \draw[->] (v2) to (v3);
    \draw[->] (v4) to (v2);
    \draw[->] (v4) to (v3);
    \draw[->, dashed] (v4) to (v1);
  \end{tikzpicture}
  \caption{Example instance of \DDSC.
           Inserting the dashed arc in the input digraph (solid arcs) with degree sequence~$\{(0,1),(0,2),(2,0),(2,1)\}$ yields a digraph with the given target sequence~$\sigma$.}
  \label{DDSCexample}
\end{figure}
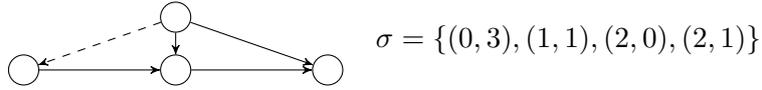

In this section, we investigate the NP-hard special case of \DDCSC\footnote{Although not stated explicitly, the NP-hardness follows from the proof of Theorem 3.2 of the Bachelor thesis of \citet{Mil15} (available online) as the construction therein allows for only one feasible target degree sequence.}
 where~$\tau$ does not restrict the allowed degree of any vertex and $\Pi$~is fulfilled by exactly one specific degree sequence~$\sigma$ (see \cref{DDSCexample} for an example).
The undirected problem variant is studied by \citet{GM17}.
\begin{problemdef}
	\problemtitle{\DDSClong (\DDSC)}
	\probleminput{A digraph $D = (V,A)$, a digraph degree sequence~$\sigma$ containing~$|V|$ integer tuples.}
	\problemquestion{Is it possible to obtain a digraph~$D'$ by inserting arcs in~$D$ such that~$\sigma(D')=\sigma$?}
\end{problemdef}

For \DDSC, the parameter~\degpara is by definition equal to~$\Delta_\sigma$.
Moreover, note that the number~$s$ of arcs to insert (if possible) is determined by the target sequence~$\sigma$ by~$s := \sum_{(c,d)\in\sigma}c - \sum_{v\in V(D)}\deg_D^-(v)$.
We henceforth assume that \[s = \sum_{(c,d)\in\sigma}c - \sum_{v\in V(D)}\deg_D^-(v) = \sum_{(c,d)\in\sigma}d - \sum_{v\in V(D)}\deg_D^+(v) \ge 0\]
holds since otherwise we have a trivial no-instance.

Since deciding~$\Pi$ (that is, deciding whether $\sigma(D')=\sigma$) can be done in polynomial time, we immediately obtain fixed-parameter tractability of \DDSC with respect to~$(s,\Delta_D)$ due to \cref{thm:PiEAfptDeltak}.
We further strengthen this result by developing a polynomial-size problem kernel for \DDSC with respect to~$(s,\Delta_\sigma)$.
The kernelization is inspired by the~$O(s\Delta_\sigma^2)$-vertex problem kernel for the undirected problem by~\citet{GM17}.
The main idea is to only keep the vertices of a $2s(\Delta_D+1)$-block set (see \cref{def:alpha-block-type-set}) together with some additional ``dummy'' vertices and to adjust the digraph degree sequence~$\sigma$ properly.

\begin{theorem}
  \label{thm:DDSC_kDelta-kernel}
  \DDSC admits a problem kernel containing~$O(s\Delta_\sigma^3)$ vertices computable in~$O(n+m+\Delta_\sigma^2)$ time.
\end{theorem}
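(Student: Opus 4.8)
The plan is to combine the block-set machinery of \cref{sec:GeneralFPTsr} with a deletion-and-padding reduction, in direct analogy to the \DDCC kernel in \cref{thm:DDCCkrKernel} but replacing the per-vertex list~$\tau$ by a single global target sequence. First I would discard trivial no-instances: if $s<0$ or if $\Delta_D>\Delta_\sigma$ (degrees only grow under arc insertion, so no input degree may exceed $\Delta_\sigma=\degpara$), output a trivial no-instance; henceforth $\Delta_D\le\Delta_\sigma$. Then compute a $2s(\Delta_D+1)$-block set~$C$ via \cref{lem:alphaSetLinTime}. Since here $\tau$ is trivial, every vertex is ``satisfied'', so the set $U$ of unsatisfied vertices is empty and $C$ simply keeps $\min\{|B_D(d)|,\,2s(\Delta_D+1)\}$ vertices of each input degree $d\in\sigma(D)$. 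By \cref{lem:PiDDSC-block-set}, if the instance is a yes-instance then some solution $A^\ast\subseteq C^2$ exists, and it leaves every vertex of $R:=V\setminus C$ untouched.

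The reduction deletes~$R$. The one obstruction is that deleting~$R$ destroys the arcs between $C$ and $R$ and hence changes the degrees of $C$-vertices, which must be preserved exactly. I would therefore introduce \emph{dummy} vertices and redirect each $C$--$R$ arc onto a dummy (an arc $(u,w)$ with $u\in C$, $w\in R$ becomes $(u,p)$; an arc $(w,u)$ becomes $(p,u)$), distributing the redirected arcs over dummies of bounded degree so that each dummy receives at most one arc from any given $C$-vertex. This keeps $\deg_{D''}(u)=\deg_D(u)$ for all $u\in C$. Finally I build the target sequence $\sigma''$ by removing from $\sigma$ one copy of $\deg_D(w)$ for each $w\in R$ (returning a no-instance if $\sigma$ does not contain these, which is sound because an untouched $R$ forces those degrees to occur in $\sigma$) and inserting the degree of each dummy. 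A short calculation shows $\sigma''$ has exactly $|C|+\#\text{dummies}$ tuples, that $\Delta_{\sigma''}\le\Delta_\sigma$, and that the number of arcs the reduced instance must insert is still~$s$, since the dummy contributions cancel in $\sum_{(c,d)\in\sigma''}c-\sum_v\deg^-_{D''}(v)$.

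For correctness I would argue both directions. The forward direction is immediate: a solution $A^\ast\subseteq C^2$ from \cref{lem:PiDDSC-block-set} applies verbatim to~$D''$ because $C$-degrees are preserved and the dummies are untouched, and $\sigma(D''+A^\ast)$ is precisely $(\sigma\setminus\{\deg_D(w):w\in R\})$ together with the dummy degrees, i.e.\ $\sigma''$. The backward direction is the crux and the main obstacle: I must rule out a solution of the reduced instance that exploits the dummies (inserting arcs incident to them so as to realize $C$-target degrees on dummy vertices), since such a solution need not translate back to~$D$, where $R$ is frozen. The clean way to close this is to make the dummies \emph{un-augmentable}: give each dummy the maximum degree $\Delta_\sigma$ in any coordinate in which it could otherwise grow, so that inserting a further incident arc would exceed $\Delta_{\sigma''}=\Delta_\sigma$ and is thus forbidden; then every reduced solution lies in $C^2$ and, as $\deg_{D''}(u)=\deg_D(u)$ and $D''[C]=D[C]$, maps back to a genuine solution for~$D$. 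The subtlety I expect to fight with is achieving this locking while simultaneously keeping the dummy count $O(|C|)$ and spending only $O(n+m+\Delta_\sigma^2)$ time, i.e.\ without inserting $\Theta(\#\text{dummies}\cdot\Delta_\sigma)$ padding arcs; if padding turns out to be too costly, the fallback is an exchange argument in the spirit of \cref{lem:AlphaSet} that reroutes each dummy-incident solution arc onto a spare $C$-vertex of the matching degree, of which the block set~$C$ supplies enough.

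Finally the bounds. The number of distinct input degrees is at most $(\Delta_D+1)^2$ and $C$ keeps at most $2s(\Delta_D+1)$ of each, so $|C|\le 2s(\Delta_D+1)^3=O(s\Delta_\sigma^3)$; with $O(|C|)$ dummies the kernel has $O(s\Delta_\sigma^3)$ vertices. For the running time, $C$ is computed in $O(m+n+\Delta_D^2)\subseteq O(m+n+\Delta_\sigma^2)$ time by \cref{lem:alphaSetLinTime}, the redirection touches only the $O(m)$ arcs incident to~$R$, and $\sigma''$ is assembled in $O(n)$ time, giving the claimed $O(n+m+\Delta_\sigma^2)$ bound.
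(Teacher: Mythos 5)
Your overall architecture (trivial no-checks, a $2s(\Delta_D+1)$-block set~$C$ via \cref{lem:alphaSetLinTime}, deleting $V(D)\setminus C$, padding with dummies so that the $C$-vertices keep their original degrees, adjusting $\sigma$, and invoking \cref{lem:PiDDSC-block-set} for the forward direction) is exactly the paper's, and your size and soundness accounting for the sequence adjustment is fine. The gap is precisely where you located it, in the backward direction, and neither of your two routes closes it. The ``clean'' route --- lock every dummy at degree exactly $(\Delta_\sigma,\Delta_\sigma)$ so that $\Delta_{\sigma''}=\Delta_\sigma$ forbids any arc touching a dummy --- is impossible in general: in any digraph the sum of all indegrees equals the sum of all outdegrees, and since the degrees of the $C$-vertices are fixed by construction, the dummies' total indegree minus their total outdegree is forced to equal $m_{CR}-m_{RC}$, where $m_{CR}$ (resp.\ $m_{RC}$) denotes the number of arcs of~$D$ from $C$ to $V(D)\setminus C$ (resp.\ from $V(D)\setminus C$ to $C$). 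Uniform dummy degrees $(\Delta_\sigma,\Delta_\sigma)$ force this difference to be zero, but $m_{CR}\neq m_{RC}$ already occurs in simple instances (e.g.\ when all arcs between $C$ and its complement point into~$C$), no matter how many dummies or padding arcs you use. Your fallback --- an exchange argument rerouting each dummy-incident solution arc onto a spare $C$-vertex --- also fails for an \emph{exact} target sequence: replacing a dummy endpoint~$p$ of current degree~$d_p$ by a spare $u\in C$ of degree~$d_u$ turns the entries $d_p+(a,b)$ and $d_u$ of the realized degree multiset into $d_p$ and $d_u+(a,b)$, which yields the same multiset only if $(a,b)=(0,0)$ or $d_p=d_u$; dummy degrees need not occur on any $C$-vertex, so matching spares need not exist. (This is why \cref{lem:AlphaSet}-style exchanges work for $\tau$-constraints but not for exact sequences.)

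The paper escapes this by dropping your self-imposed requirement $\Delta_{\sigma''}\le\Delta_\sigma$: it adds $h=\Delta_\sigma+2$ dummies $w_1,\ldots,w_h$ forming a complete digraph (so every dummy has in- and outdegree at least $\Delta_\sigma+1$), and routes the redirected arcs in a \emph{nested} fashion (a vertex $v\in C$ with $r^-_v$ lost inneighbors receives arcs from $w_1,\ldots,w_{r^-_v}$). Then the dummy degrees are coordinate-wise monotone, $\deg_{D'}(w_1)\ge\cdots\ge\deg_{D'}(w_h)$, and all of them strictly dominate every $C$-vertex and every tuple of~$\sigma$; since they need not be uniform, the imbalance $m_{CR}-m_{RC}$ is absorbed automatically. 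The backward direction then follows from a top-down domination argument: $\deg_{D'}(w_1)=(\Delta^-_{\sigma'},\Delta^+_{\sigma'})$, so any inserted arc incident to~$w_1$ would push some coordinate above the maximum of~$\sigma'$; hence $w_1$ is frozen, and recursively so are $w_2,\ldots,w_h$, giving $A'\subseteq C^2$. With $h=O(\Delta_\sigma)$ dummies and $O(\Delta_\sigma^2)$ clique arcs this still yields $O(s\Delta_\sigma^3)$ vertices in $O(n+m+\Delta_\sigma^2)$ time, which also disposes of your running-time worry.
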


\begin{proof}
  Let~$(D,\sigma)$ be a \DDSC instance. 
  Clearly, since we are only allowed to insert arcs in the digraph~$D$, we can never decrease the in- or outdegree of any vertex. Hence, if~$\Delta^-_D> \Delta^-_\sigma$ or $\Delta^+_D > \Delta^+_\sigma$, then we return a trivial no-instance.
  Otherwise, we know that~$\Delta_D\le\Delta_\sigma$.
  Moreover, since inserting one arc can change the degrees of at most two vertices, it also holds $\lambda_D(\deg_D(v))\le\lambda_\sigma(\deg_D(v))+2s$ for each~$v\in V(D)$.

  We now compute a $2s(\Delta_D+1)$-block set~$C$ (see \cref{def:alpha-block-type-set}) in~$O(n+m+\Delta_D^2)$ time (\cref{lem:alphaSetLinTime}) and return the instance~$(D',\sigma')$ which is defined as follows.
  The digraph~$D'$ is constructed from~$D$ by the following steps:
  \begin{itemize}
    \item Delete all vertices of~$V(D)\setminus C$.
    \item Add $h:=\Delta_\sigma+2$ new vertices~$W:=\{w_1,\ldots,w_h\}$ and insert all arcs~$W^2$.
    \item For each~$v\in C$ such that the number~$r^-_v:=|N_D^-(v)\setminus C|$ of inneighbors in~$V(D)\setminus C$ is at least one, insert the arcs~$\{(w_i,v)\mid 1\le i \le r^-_v\}$.
      \item For each~$v\in C$ such that the number~$r^+_v:=|N_D^+(v)\setminus C|$ of outneighbors in~$V(D)\setminus C$ is at least one, insert the arcs~$\{(v,w_i)\mid 1\le i \le r^+_v\}$.
      \end{itemize}
      The digraph~$D'$ can be constructed in~$O(n+m+\Delta_\sigma^2)$ time.
  Observe that $\deg_{D'}^-(w_i)\ge \Delta_\sigma+1$ and $\deg_{D'}^+(w_i)\ge\Delta_\sigma+1$ hold for all~$i\in\{1,\ldots,h\}$, and that~$\deg_{D'}(v)=\deg_D(v)\le \Delta_\sigma$ holds for all~$v\in C$.
  The number of vertices in~$D'$ equals~$|C|+h$. Note that~$C$ contains at most~$2s(\Delta_D+1)$ vertices of each of the~$(\Delta_D+1)^2$ possible vertex degrees in~$D$. Thus, $D'$ contains $O(s\Delta_\sigma^3)$ vertices.
  
  The digraph degree sequence~$\sigma'$ is constructed from~$\sigma$ as follows:
  \begin{itemize}
    \item For each vertex~$v\in V(D)\setminus C$ that was removed from~$D$, remove a copy of the tuple~$\deg_D(v)$ from~$\sigma.$
    \item For each $i\in\{1,\ldots,h\}$, add the tuple $\deg_{D'}(w_i)$.
    \end{itemize}
    Note that this construction is well-defined, that is, we can always apply the first step and remove a copy of~$\deg_D(v)$ from~$\sigma$ since we remove at most \[\lambda_D(\deg_D(v) - 2s(\Delta_D+1) < \lambda_D(\deg_D(v))-2s\le \lambda_\sigma(\deg_D(v))\] copies.
    The construction of~$\sigma'$ can be done in~$O(n)$ time. Hence, the overall running time of computing the problem kernel is in~$O(n+m+\Delta_\sigma^2)$.

    It remains to show that~$(D',\sigma')$ is a yes-instance if and only if~$(D,\sigma)$ is a yes-instance.
    Assume first that~$(D,\sigma)$ is a yes-instance. We know from \cref{lem:PiDDSC-block-set} that there exists a solution~$A^*\subseteq C^2$ with~$\sigma(D+A^*)=\sigma$.
    Using
    \begin{align*}
      \forall v\in V(D)\setminus C &:\deg_{D+A^*}(v)=\deg_{D}(v),\\
      \forall v\in C &: \deg_{D'+A^*}(v)=\deg_{D+A^*}(v), \text{ and}\\
      \forall  w_i\in W &:\deg_{D'+A^*}(w_i)=\deg_{D'}(w_i),
    \end{align*}
it is then easy to verify that~$\sigma(D'+A^*)=\sigma'$, and thus, $(D',\sigma',s)$ is a yes-instance.

Conversely, let~$A'\subseteq V(D')^2$ be a solution for~$(D',\sigma')$
with~$\sigma(D'+A')=\sigma'$.
We claim that~$A'\subseteq C^2$, that is,~$A'$ does not contain an arc incident to a vertex in~$W$. To see this, recall that by construction
\begin{align*}
  \deg^-_{D'}(w_1) &=\Delta^-_{\sigma'} \ge \ldots \ge \deg^-_{D'}(w_h) \ge \Delta_\sigma + 1 > \deg^-_{D'}(v) \text{ and}\\
  \deg^+_{D'}(w_1) &=\Delta^+_{\sigma'} \ge \ldots \ge \deg^+_{D'}(w_h) \ge \Delta_\sigma + 1 > \deg^+_{D'}(v)
\end{align*}
hold for all~$v\in C$.
That is, $\deg_{D'}(w_1)=(\Delta_{\sigma'}^-,\Delta_{\sigma'}^+)$, and thus a solution must not insert arcs incident to~$w_1$. It follows that $\deg_{D'+A'}(w_1)=\deg_{D'}(w_1)$.
This recursively also holds for $w_2,\ldots,w_h$ and thus, we have $\deg_{D'+A'}(w_i)=\deg_{D'}(w_i)$ for all~$w_i\in W$.
Hence, $A'$ does not contain any arcs incident to vertices in~$W$, that is, $A'\subseteq C^2$.
Thus, we can derive
\begin{align*}
  \forall v\in C &: \deg_{D'+A'}(v)=\deg_{D+A'}(v), \text{ and}\\
  \forall v\in V(D)\setminus C &:\deg_{D+A'}(v)=\deg_{D}(v).
\end{align*}
It is now straightforward to check that $\sigma(D+A')=\sigma$.\qed
\end{proof}
To apply our framework and derive a polynomial-size problem kernel with respect to~$\Delta_\sigma$ for \DDSC, we define a corresponding number problem and show its polynomial-time solvability.
The number problem \nDDSC is the special case of \nDDCSC asking for the specific target sequence~$\sigma$.

\begin{problemdef}
    \problemtitle{\textsc{\nDDSClong (\nDDSC)}}
    \probleminput{Two multisets $\sigma=\{(c_1,d_1),\ldots,(c_n,d_n)\}$ and~$\phi=\{(c_1',d_1'),\ldots,(c_n',d_n')\}$ containing~$n$ nonnegative integer tuples.}
    \problemquestion{Is there a bijection~$\pi:\{1,\ldots,n\} \to \{1,\ldots,n\}$ such that
     $c_i \le c'_{\pi(i)}$, and $d_i \le d'_{\pi(i)}$, for all $1 \le i \le n$?}
\end{problemdef}

\nDDSC can be solved in polynomial time by finding perfect matchings in an auxiliary graph.

\begin{lemma}
  \label[lemma]{lem:TSC-poly}
  \nDDSC is solvable in~$O(n^{2.5})$ time.
\end{lemma}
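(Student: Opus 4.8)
The plan is to reduce \nDDSC to the problem of finding a perfect matching in a bipartite graph, which is exactly the ``auxiliary graph'' alluded to in the statement. First I would build a bipartite graph $G=(L\cup R,E)$ where $L$ contains one vertex $\ell_i$ for each tuple $(c_i,d_i)\in\sigma$ and $R$ contains one vertex $r_j$ for each tuple $(c_j',d_j')\in\phi$. I insert the edge $\{\ell_i,r_j\}$ precisely when $c_i\le c_j'$ and $d_i\le d_j'$, that is, when the target tuple dominates the source tuple in both coordinates. Constructing $G$ takes $O(n^2)$ time, since it amounts to testing all $n^2$ candidate pairs.

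The central observation is that a bijection $\pi$ of the kind required by \nDDSC exists if and only if $G$ admits a perfect matching. Indeed, any feasible $\pi$ yields the $n$ edges $\{\ell_i,r_{\pi(i)}\}$, each present in $G$ by the feasibility constraints and pairwise disjoint because $\pi$ is a bijection, hence a perfect matching; conversely, a perfect matching saturates all of $L$ and $R$ and, consisting only of edges of $G$, directly defines a bijection $\pi$ satisfying $c_i\le c_{\pi(i)}'$ and $d_i\le d_{\pi(i)}'$ for all~$i$. Thus it suffices to compute a maximum matching in~$G$ and accept if and only if it has size~$n$.

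For the running time I would use the Hopcroft--Karp algorithm, which computes a maximum matching in a bipartite graph in $O(\sqrt{|V(G)|}\cdot|E(G)|)$ time. Here $|V(G)|=2n$ and $|E(G)|\le n^2$, so this step runs in $O(\sqrt{n}\cdot n^2)=O(n^{2.5})$ time, which dominates the $O(n^2)$ construction cost and yields the claimed bound.

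I do not expect a genuine obstacle in this proof: essentially the entire content is the recognition that the coordinate-wise domination constraints are exactly the adjacency relation of a bipartite graph, after which the task is standard bipartite matching. The only point meriting a little care is the running-time accounting---confirming that $|E(G)|=O(n^2)$ so that the Hopcroft--Karp bound indeed collapses to $O(n^{2.5})$ rather than something larger.
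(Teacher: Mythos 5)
Your proof is correct and follows essentially the same route as the paper: both construct a bipartite graph with one side per multiset, insert an edge exactly when the target tuple dominates the source tuple coordinate-wise, and reduce the question to the existence of a perfect matching computed via Hopcroft--Karp in $O(\sqrt{n}\cdot n^2)=O(n^{2.5})$ time. (If anything, your edge definition is slightly cleaner, since the paper's restriction to pairs $i\neq j$ appears to be an unnecessary---indeed erroneous---condition that your version rightly omits.)
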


  \begin{proof}
    We show how to solve the problem by computing a perfect matching in a bipartite graph.
    Let $(\{(c_1,d_1),\ldots,(c_n,d_n)\},\{(c_1',d_1'),\ldots,(c_n',d_n')\})$ be a \nDDSC instance. 
    We construct an undirected bipartite graph~$G:=(V\cup W,E)$. For each~$i\in\{1,\ldots,n\}$, there is a vertex~$v_i\in V$ corresponding to the tuple~$(c_i,d_i)$, and a vertex~$w_i\in W$ corresponding to~$(c_i',d_i')$.
    For each~$i,j\in\{1,\ldots,n\}$, $i\neq j$, the edge~$\{v_i,w_j\}$ is in~$E$ if and only if~$c_j'\ge c_i$ and~$d_j'\ge d_i$ hold.
    The graph~$G$ can be computed in~$O(n^2)$ time. Note that a perfect matching in~$G$
    defines a bijection that satisfies the condition in the problem definition. Hence, we can solve a \nDDSC instance by computing a perfect matching in a bipartite graph, which can be done in~$O(|E|\sqrt{|V\cup W|})=O(n^{2.5})$ time~\cite{HK73}.\qed
  \end{proof}

Combining \cref{thm:DDSC_kDelta-kernel} and \cref{lem:TSC-poly} yields the following corollary of \cref{thm:Kernel_transfer}.

\begin{corollary}\label[corollary]{cor:DDSCkernel}
	\DDSC admits a problem kernel containing $O(\Delta_\sigma^5)$ vertices. 
	It is computable in $O(sn^{2.5})$ time.
\end{corollary}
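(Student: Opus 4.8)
The plan is to obtain \cref{cor:DDSCkernel} as a direct instantiation of the transfer theorem \cref{thm:Kernel_transfer}, feeding it the two ingredients just established for \DDSC: the $(s,\Delta_\sigma)$-vertex kernel of \cref{thm:DDSC_kDelta-kernel} and the polynomial-time solvability of the associated number problem from \cref{lem:TSC-poly}. Since \DDSC is the special case of \DDCSC in which $\tau$ is trivial and $\Pi$ accepts exactly one sequence, the framework applies verbatim to this instantiation; the whole argument is then a substitution of parameters, with one identification that needs a sentence of justification.

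First I would record that for \DDSC we have $\degpara=\Delta_\sigma$ (as noted at the start of \cref{sec:DDSC}), so that \cref{thm:DDSC_kDelta-kernel} provides a kernel of $g(s,\degpara)=O(s\Delta_\sigma^3)$ vertices computable in $p(n):=O(n+m+\Delta_\sigma^2)$ time, a polynomial in~$n$. Second, I would verify that the number problem required by \cref{thm:Kernel_transfer} is precisely \nDDSC. Specializing \nDDCSC to \DDSC, the instance asks whether the degree tuples of~$D$ can be increased coordinate-wise, by a total of~$s$ in each component, so that the resulting multiset equals the target sequence; this is exactly a domination-bijection question between the multiset of input degrees and the target sequence, i.e., an instance of \nDDSC. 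Moreover, $s=\sum_{(c,d)\in\sigma}c-\sum_{v}\deg_D^-(v)$ is predetermined by the target, so the framework's search over~$s'$ collapses to this single value (for every other~$s'$ the total in-degree count cannot match the target, making $I_{s'}$ a trivial no-instance). Hence \cref{lem:TSC-poly} supplies the polynomial $q(n):=O(n^{2.5})$ demanded by \cref{thm:Kernel_transfer}.

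Plugging these into \cref{thm:Kernel_transfer} yields a kernel with $g(2(\degpara)^2,\degpara)=O(\Delta_\sigma^2\cdot\Delta_\sigma^3)=O(\Delta_\sigma^5)$ vertices, computable in $O(s\cdot q(n)+p(n))=O(sn^{2.5}+n+m+\Delta_\sigma^2)$ time. Since $n,m,\Delta_\sigma^2\in O(n^{2.5})$ and $s\ge 1$, all additive terms are dominated by $sn^{2.5}$, so the bound simplifies to $O(sn^{2.5})$, as claimed. I expect no genuine obstacle: the only step requiring care is the identification in the second paragraph that the number problem obtained by specializing \nDDCSC to \DDSC is exactly \nDDSC (and that $s$ is fixed by the target), so that \cref{lem:TSC-poly} is the correct polynomial to invoke; everything else is arithmetic substitution into the framework.
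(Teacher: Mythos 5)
Your proposal is correct and follows exactly the paper's route: the corollary is obtained by plugging the $O(s\Delta_\sigma^3)$-vertex kernel of \cref{thm:DDSC_kDelta-kernel} and the $O(n^{2.5})$-time algorithm of \cref{lem:TSC-poly} into the transfer theorem \cref{thm:Kernel_transfer}, using $\degpara=\Delta_\sigma$. Your extra care in checking that the specialization of \nDDCSC to \DDSC{} is precisely \nDDSC{} (with the budget~$s'$ forced to the value determined by~$\sigma$) is a detail the paper leaves implicit, but it is the same argument, not a different one.
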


\subsection{Degree Anonymity}
\label{sec:DA}

We extend the definition of \textsc{Degree Anonymity} in undirected graphs due to \citet{LT08} to digraphs and obtain the following NP-hard problem~\cite{Mil15} (\cref{example of DA} presents an example): 

\begin{problemdef}
	\problemtitle{\DAlong (\DA)}
	\probleminput{A digraph $D = (V,A)$ and two positive integers $k$ and $s$.}
	\problemquestion{Is it possible to obtain a digraph $D'$  by inserting at most~$s$ arcs in~$D$ such that~$D'$ is $k$-anonymous, that is, for every vertex $v \in V$ there are at least $k-1$~other vertices in $D'$ with degree~$\deg_{D'}(v)$?}
\end{problemdef}

\begin{figure}[t]
	\centering
	\begin{tikzpicture}
			\node[gnode] (a1) at (0,0.25) {};
			\node[gnode] (a2) at (1.5,0.25) {};
			\node[gnode] (b1) at (4,0.25) {};
			\node[gnode] (b2) at (5.5,0.25) {};

                        \draw[->, bend angle=30, bend left] (a1) to (a2);
                        \draw[->, bend angle=30, bend left] (a2) to (a1);
                        \draw[->, bend angle=30, bend left] (b1) to (b2);
                        \draw[->, bend angle=30, bend left] (b2) to (b1);
                        
			\node[gnode] (c1) at (8,0) {};
			\node[gnode] (c2) at (9,0.5) {};
			\node[gnode] (c3) at (10,0) {};

                        \draw[->] (c2) to (c1);
                        \draw[->] (c3) to (c2);
                        \draw[->, dashed] (c1) to (c3);
	\end{tikzpicture}
      \caption{Example instance of \DA. The input digraph with three components (solid arcs) is $1$-anonymous since there is only one vertex with degree $(0,1)$. By inserting the dashed arc, the digraph becomes $7$-anonymous since all vertices have degree~$(1,1)$.}
	\label{example of DA}
\end{figure}
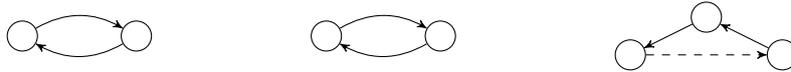

The (parameterized) complexity as well as the (in-)approx\-imability of the undirected version called \textsc{Degree Anonymity} are well-studied~\cite{CKSV13,HNNS15,BBHNW16}.
There also exist many heuristic approaches to solve the undirected version~\cite{CHT17,HHN14}. 
Notably, our generic approach shown in~\cref{sec:GeneralFPTr} originates from a heuristic of~\citet{LT08} for \textsc{Degree Anonymity}.
Later, \citet{HNNS15} used this heuristic to prove that ``large'' solutions of \textsc{Degree Anonymity} can be found in polynomial time and \citet{FNN16} extended this approach to a more general class of problems.
The property~$\Pi$ (that is, $k$-anonymity) can clearly be checked for a given input digraph degree sequence in polynomial time. Hence, \cref{thm:PiEAfptDeltak} yields fixed-parameter tractability of~\DA with respect to~$(s,\Delta_D)$.
Again, we develop a polynomial-size problem kernel with respect to~$(s,\Delta_D)$.
Somewhat surprisingly, we cannot transfer this problem kernel to a problem kernel with respect to~$\degpara$ since we are not able to solve the corresponding number problem in polynomial time.
In fact, we will show that it is at least weakly NP-hard.

To start with, we give a problem kernel based on \Cref{lem:PiDDSC-block-set} in a similar fashion as in the proof of \Cref{thm:DDSC_kDelta-kernel}.
More precisely, by \Cref{lem:PiDDSC-block-set} we know that we only need to keep a $2s(\Delta_D+1)$-block set~$C$, that is, $2s(\Delta_D+1)$ arbitrary vertices of each block.
Note that deleting all vertices that are not in~$C$ changes the degrees of the vertices in~$C$.
We repair this in a similar way as in the problem kernel stated in \Cref{thm:DDSC_kDelta-kernel}: 
After deleting the vertices that are not in~$C$, we add vertices adjacent to the vertices in~$C$ in such a way that the vertices in~$C$ keep their original degrees. 
Denoting the set of newly added vertices by~$P$, we also need to ``separate'' the vertices in~$P$ from the vertices in~$C$ so that their degrees do not interfere in the target degree sequence. 
We do this, similarly as in the proof of \Cref{thm:DDSC_kDelta-kernel}, by increasing the degrees of all vertices in~$P$ to at least~$\Delta_D + s + 1$. 
Furthermore, we need to ensure that a solution in the new instance does not insert arcs between vertices in~$C$ and vertices in~$P$ since we cannot map such solutions back to solutions for the original instance.
Solving this issue, however, is not as simple as for \DDSC
and requires some adjustment of the actual number of vertices we keep. 
As a result, we will prove that if there is a solution inserting arcs between~$C$ and~$P$, then there is also a solution not inserting such arcs (\Cref{lem:P-uneff}).

Another adjustment concerns the anonymity level~$k$: 
If~$k$ is large, then we need to shrink it since otherwise we would always create no-instances. 
The general idea is to keep the ``distance to size~$k$'', meaning that if in the original instance some block contains~$k+x$ vertices for some~$x \in \{-2s, \ldots, 2s\}$, then in the new instance this block should contain~$k'+x$ vertices where~$k'$ is the new anonymity level.
The reason for the specific range of values for~$x$ between~$-2s$ and~$2s$ is that if some block has size larger than~$k + 2s$ for example, then, after inserting~$s$ arcs, this block will still be of size larger than~$k$.
Similarly, if a block contains less than~$k - 2s$ vertices, then after inserting~$s$ arcs it will contain less than~$k$ vertices and it will violate the $k$-anonymity constraint unless it is empty. 
Hence, the interesting cases for~$x$ are between~$-2s$ and~$2s$.
In order to ensure that there is a solution not inserting arcs between~$C$ and~$P$, we need to increase this range from~$-2s$ to~$4s$, see the proof of \Cref{lem:P-uneff} for further details.

In the following, we describe the details of our kernelization algorithm, see \cref{alg:polyKernelDirDegAnon} for the pseudocode. 
Observe that our general approach is a non-obvious adaption of the polynomial-size problem kernel for the undirected \textsc{Degree Anonymity} problem by \citet{HNNS15}. 
\begin{algorithm}[t!]\small
  \caption{The pseudocode of the algorithm computing a polynomial-size kernel with respect to $(s,\Delta_D)$ for \DA.}
 \label{alg:polyKernelDirDegAnon}
	\KwIn{A digraph~$D = (V,A)$ and integers $k,s \in \N$.}
	\KwOut{A digraph~$D'$ and integers $k',s\in\N$.}
   \vspace{1em}
	\If(\tcp*[f]{$\beta$ is defined as~$\beta := (\Delta_D+2)2s$} \label{line:iniBound}){$|V| \le (\Delta_D+1)^2(\beta+2s)$}
	{
		\KwRet{$(D,k,s)$}
	}
	$k' \gets \min\{k,\beta\}$\; \label{line:setNewK} 
        $C \gets \emptyset$\;
	\ForEach(\label{line:keepVerticesStart}){distinct tuple $t$ occurring in $\sigma(D)$}
	{
		\If{$2s <|B_D(t)| < k - 2s$}
		{
			\KwRet{trivial no-instance} \tcp*{insufficient budget for $B_D(t)$} \label{line:trivialNoInstance}
		}
		\uIf(\tcp*[f]{determine number of retained vertices}){$k\le \beta$}
		{
			$x \gets \min \{|B_D(t)|, \beta+2s \}$ \tcp*{keep at most~$\beta+2s$ vertices} \label{line:NoOfVerticesCase1} 
		}
		\uElseIf(\tcp*[f]{``small'' block}){$|B_D(t)| \le 2s$}
		{
			$x \gets |B_D(t)|$ \tcp*{keep all vertices (``distance to size zero'')} \label{line:NoOfVerticesCase2.1} 
		}
		\Else(\tcp*[f]{``large'' block and~$k' = \beta$})
		{
			$x \gets k'+\min\{2s,(|B_D(t)|-k)\}$ \tcp*{keep ``distance to size~$k$''} \label{line:NoOfVerticesCase2.2} 
		}
		add~$x$ arbitrary vertices from~$B_D(t)$ to~$C$\; \label{line:keepVerticesEnd}
	}
	$D' \gets D[C]$\; \label{line:keepGA} 
	\ForEach(\tcp*[f]{insert new vertices to preserve degrees of vertices in~$C$}){$v \in C$}
	{
	 add $\deg_D^+(v) - \deg_{D'}^+(v)$ many vertices with an incoming arc from~$v$ to~$D'$\;\label{line:add-outdeg-fix-vertices}
	add $\deg_D^-(v) - \deg_{D'}^-(v)$ many vertices with an outgoing arc to~$v$ to~$D'$\;\label{line:add-indeg-fix-vertices}
	}
	let $P_\text{in}$ be the set of vertices added in \cref{line:add-outdeg-fix-vertices} \tcp*{$\forall v \in P_\text{in}\colon \deg_{D'}(v) = (1,0)$}
	let $P_\text{out}$ be the set of vertices added in \cref{line:add-indeg-fix-vertices} \tcp*{$\forall v \in P_\text{out}\colon \deg_{D'}(v) = (0,1)$}
	\While{$\min\{|P_\text{\textnormal{in}}|,|P_\text{\textnormal{out}}|\} < \max\{\Delta_D + s + 1, k'\}$}
	{
		add a new vertex~$v$ to~$P_\text{in}$ \label{line:add-P-in-vertex}\;
		add a new vertex~$u$ to~$P_\text{out}$ \label{line:add-P-out-vertex}\;
		insert the arc~$(u,v)$ in~$D'$ \;
	}
	insert all arcs~$P_\text{in}^2$ and~$P_\text{out}^2$ in~$D'$ \label{line:high-degree-in-P} \tcp*{ensure high degree difference from vertices in~$C$}
	insert all arcs from~$P_\text{in} \times P_\text{out}$ in~$D'$  \label{line:separate-P-in-and-P-out} \tcp*{separate $P_\text{in}$ from~$P_\text{out}$}
	\KwRet{$(D',k',s)$}
\end{algorithm}

\begin{lemma}\label[lemma]{lem:P-uneff}
	Let~$(D,k,s)$ be an instance of \DA and let~$(D',k',s)$ be the instance computed by \Cref{alg:polyKernelDirDegAnon}, where~$P:=P_\text{\textnormal{in}}\cup P_\text{\textnormal{out}}$ is the set of newly added vertices.
	If there is a solution~$S \subseteq V(D')^2$ with $|S| \le s$, then there is also a solution~$S' \subseteq V(D')^2$ with $|S'|\le|S|$ such that~$V(S')\cap P=\emptyset$.
\end{lemma}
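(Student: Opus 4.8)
The plan is to exploit a degree gap between the padding vertices $P$ and the kept vertices $C$, and then to reroute every arc of $S$ incident to $P$ into an arc inside $C$. First I would record the degrees produced by \Cref{alg:polyKernelDirDegAnon}: after the final insertions of $P_{\mathrm{in}}^2$, $P_{\mathrm{out}}^2$, and $P_{\mathrm{in}}\times P_{\mathrm{out}}$, every vertex of $P$ has both indegree and outdegree at least $\Delta_D+s+1$ (this is exactly what the threshold $\max\{\Delta_D+s+1,k'\}$ in the while-loop guarantees), whereas every $v\in C$ satisfies $\deg_{D'}(v)=\deg_D(v)$ and hence has both coordinates at most $\Delta_D$. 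Since $|S|\le s$ and inserting arcs only raises degrees, in $D'+S$ every $C$-vertex still has both coordinates at most $\Delta_D+s$, while every $P$-vertex has both coordinates at least $\Delta_D+s+1$. Consequently the degree tuples occurring on $C$ are disjoint from those occurring on $P$, so no block of $D'+S$ mixes the two sets, and the $C$-part and the $P$-part must each be $k'$-anonymous on their own; moreover, as long as we leave $P$ untouched it reverts to its two $D'$-blocks $P_{\mathrm{in}}$ and $P_{\mathrm{out}}$, each of size at least $k'$, which is automatically $k'$-anonymous.

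Next I would split $S$ into $S_{CC}, S_{CP}, S_{PC}, S_{PP}$ according to where the endpoints lie and eliminate the $P$-incident parts in three steps. The within-$P$ arcs $S_{PP}$ only change degrees of $P$-vertices and are simply discarded. Each $C\!\to\!P$ arc $(c,p)$ contributes a single extra outdegree to $c$ and each $P\!\to\!C$ arc $(p',c')$ a single extra indegree to $c'$; pairing one of each kind and replacing the pair by the single arc $(c,c')\in C^2$ reproduces both one-sided boosts with one fewer arc, so after performing $\min\{|S_{CP}|,|S_{PC}|\}$ such merges we have removed two $P$-touches per merge without increasing the arc count. This leaves only the one-sided surplus, say $|S_{CP}|-|S_{PC}|$ arcs $C\!\to\!P$, each demanding one extra outdegree on its $C$-endpoint with no matching indegree demand inside $C$.

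The remaining step is to realize this surplus inside $C$, and here I would use the abundance of retained vertices. The key structural observation is that, because a touched $P$-vertex acquires a brand-new degree, $k'$-anonymity of the $P$-part forces the touched vertices to occur in batches of at least $k'$ identically modified vertices; hence the surplus one-sided demands themselves come in groups of size at least $k'$ (and if $k'>2s$ there can be no $P$-touch at all, settling that case immediately). I would route each such group of outdegree-demands to a common group of at least $k'$ slack vertices of identical degree, so that the boosted slack vertices form a fresh $k'$-anonymous block; the surplus of $4s$ vertices kept in every large block (the reason the retained ``distance to size~$k$'' is widened from $2s$ to $4s$ in \Cref{alg:polyKernelDirDegAnon}) guarantees enough such slack targets, and the sheer number $k'+4s$ of candidates in a block, against a maximum degree at most $\Delta_D+s$, guarantees that the rerouted arcs are absent from $D'$ and pairwise admissible. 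The resulting $S'\subseteq C^2$ then satisfies $|S'|\le|S|$, leaves $P$ at its $D'$-degrees, and keeps both the $C$-part and the $P$-part $k'$-anonymous, hence is the desired solution with $V(S')\cap P=\emptyset$.

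The hard part will be this last step: controlling the in-/out-degree imbalance created by the unpaired $C$--$P$ arcs while preserving $k'$-anonymity and not exceeding $|S|$ arcs. Everything hinges on simultaneously (i) proving that this surplus really does decompose into $k'$-sized batches coming from the forced $k'$-sized $P$-touch groups, and (ii) showing that the widened $4s$ retention leaves enough identical slack vertices to absorb those batches into new blocks while the rerouted arcs avoid the few already-present arcs. The degree-gap separation of the first paragraph is what makes this bookkeeping local to $C$ and keeps the untouched $P$-part trivially anonymous throughout.
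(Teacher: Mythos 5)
Your opening paragraph (the degree gap separating $P$ from $C$, which forces every touched $P$-vertex into a brand-new block and hence forces touched $P$-vertices to come in groups of at least $k'$ identically modified vertices) is exactly the paper's key observation. The genuine gap is the merge step and the resulting ``surplus'' bookkeeping. What $k'$-anonymity of $D'+S$ actually gives you is that the number of arcs of $S$ with head in $P$ is either $0$ or at least $k'$, and likewise, \emph{separately}, for the number of arcs with tail in $P$; it gives you nothing about their difference. Concretely, take $k'=k=5$ and $s$ sufficiently large: a valid solution $S$ may contain $k'+3$ arcs from distinct $C$-vertices into $k'+3$ distinct $P_{\text{in}}$-vertices (these form one touched block of size $k'+3$, which is fine) and $k'$ arcs from $k'$ other $P_{\text{in}}$-vertices into $C$ (a second touched block of size exactly $k'$, also fine). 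After your $\min\{|S_{CP}|,|S_{PC}|\}=k'$ merges you are left with a surplus of $3<k'$ one-sided outdegree demands. Routing these three arcs to slack vertices of a common degree $t$ creates a block of exactly three vertices of degree $t+(1,0)$, and if that degree is otherwise unpopulated (which it can be), $k'$-anonymity is destroyed. So claim (i), which you yourself flagged as the crux, is false: the surplus does not decompose into $k'$-sized batches, and no ``fresh block'' routing of it can work.

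The paper avoids this trap by never merging: since the lemma only requires $|S'|\le|S|$, saving arcs is unnecessary and here actively harmful. Each arc of $S$ with a $P$-endpoint is replaced \emph{one-for-one}, substituting each $P$-endpoint by a fresh vertex taken from a single retained block of size at least $\beta+2s=(\Delta_D+3)2s$. Then the number of fresh vertices receiving an extra incoming arc equals the number of arcs of $S$ with head in $P$, hence is $0$ or at least $k'$, and symmetrically for outgoing arcs; so the two new blocks of degrees $t+(1,0)$ and $t+(0,1)$ are automatically large enough, while the donor block keeps at least $\beta\ge k'$ untouched vertices, and all other $C$-vertices keep the same degrees as in $D'+S$. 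Two smaller problems with your plan would also need attention: a merged arc $(c,c')$ may already be present in $D'$ or be a loop, which you never exclude (the paper explicitly picks replacement endpoints that are non-neighbors and unused); and your correct parenthetical remark that $k'>2s$ forbids any $P$-touch is in fact what disposes of the regime $k'=\beta$, so the whole argument is only needed when $k'=k\le\beta$, where the algorithm really does retain a block of size $\beta+2s$. If you drop the merging and route every $P$-incident endpoint one-for-one into that single large block, your degree-gap analysis carries the rest of the proof through essentially as the paper does.
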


\begin{proof}
	Let~$S \subseteq V(D')^2$ be a solution for~$(D',k',s)$ such that~$V(S') \cap P \neq \emptyset$.
	We construct a new solution~$S' \subseteq V(D')^2$ such that~$|S'| \le |S|$ and~$V(S') \cap P = \emptyset$.
	The idea is to replace the endpoints of arcs that are in~$P$ by new endpoints from one ``large'' block (of size at least~$\beta+2s$, where $\beta := (\Delta_D+2)2s$) in~$C$.
	To this end, observe that if~$V(D) \le (\Delta_D+1)^2(\beta + 2s)$, then \Cref{alg:polyKernelDirDegAnon} returns the original instance (see \cref{line:iniBound}) and we are done. 
	Hence, there is at least one block~$B_{D'}(t)$ for some~$t\in\sigma(D')$ of size at least~$\beta + 2s$ since there are at most~$(\Delta_D+1)^2$ blocks.
	We will use vertices in~$B_{D'}(t)$ as a replacement for the vertices in~$P$ within the arcs of~$S$.
	
	We now construct~$S'$.
	To this end, initialize~$S' := S \cap C^2$ and insert further arcs in the following way.
	First, consider those arcs in~$S$ that have exactly one endpoint in~$P$.
	For each arc~$(u,v) \in S$ with $u \in C$ and~$v \in P$, insert the arc~$(u,w)$ in~$S'$ where~$w \in B_{D'}(t)$ such that~$w$ is not incident to any arc in~$S'$ and is not an outneighbor of~$u$.
	Since~$|B_{D'}(t)| \ge \beta + 2s = (\Delta_D+3)2s$ and~$|S'| \le s$, it follows that~$B_{D'}(t)$ contains such a vertex~$w$. 
	Similarly, for each arc~$(v,u) \in S$ with $u \in C$ and~$v \in P$, insert the arc~$(w,u)$ in~$S'$ where~$w \in B_{D'}(t)$ is a vertex not incident to any arc in~$S'$ and not an inneighbor of~$u$.
	Again, due to the size of~$B_{D'}(t)$, such a vertex exists.
	
	Second, consider those arcs in~$S$ having both endpoints in~$P$.
	For each arc~$(u,v) \in S$ with $u,v \in P$, insert the arc~$(u',v')$ in~$S'$ where~$u',v' \in B_{D'}(t)$ such that neither~$u'$ nor~$v'$ is incident to any arc in~$S'$ and~$(u',v') \notin A(D')$.
	Since~$|B_{D'}(t)| \ge \beta + 2s = (\Delta_D+3)2s$ and~$|S'| \le s$, it follows that these vertices~$u'$ and~$v'$ exist.
	Observe that after all these modifications, there are still at least~$\beta$ vertices left in~$B_{D'}(t)$.
	
	Clearly, we have~$|S'| \le |S|$.
	It remains to prove that~$D'+S'$ is $k'$-anonymous.
	To this end, observe that, since the outdegree of each vertex in~$P_\text{in}$ is at least $|P_\text{out}|-1 \ge \Delta_D + s + 1$ (see \cref{line:separate-P-in-and-P-out}) larger than the outdegree of any vertex in $P_\text{out}$, it follows that the vertices in~$P$ which are incident to an arc in~$S$ end up in blocks of~$D'+S$ that are empty in~$D'$.
	Thus, at least~$k'$ vertices in~$P$ are the head of an arc in~$S$ and at least~$k'$ vertices in~$P$ are the tail of an arc in~$S$.
	Hence, we used at least~$k'$ vertices from~$B_{D'}(t)$ as an replacement in~$S'$ and thus the blocks~$B_{D'+S'}(t + (1,0))$ and~$B_{D'+S'}(t + (0,1))$ contain at least~$k'$ vertices.
	Furthermore, all other vertices in~$C$ have the same degree in~$D'+S$ and in~$D'+S'$ and the vertices in~$P$ are not incident to any arc in~$S'$.
	Since~$S$ was a solution, it follows that also~$D'+S'$ is $k'$-anonymous.\qed
\end{proof}

We remark that parts of the proof of \cref{lem:P-uneff} are an adaption of the proof of the corresponding lemma in the undirected case~\cite[Lemma 6]{HNNS15}.

\begin{theorem}
	\label{thm:DirDegAnon-kernel-sDelta}
	\DA admits a problem kernel containing~$O( \Delta_D^5 s )$ vertices. 
	It is computable in~$O( \Delta_D^{10} s^2 + \Delta_D^3 s n )$ time.
\end{theorem}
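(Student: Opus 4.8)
The plan is to prove that \Cref{alg:polyKernelDirDegAnon} is a correct kernelization meeting the stated bounds, so I would split the argument into (i) a size bound of $O(\Delta_D^5 s)$ on the returned digraph~$D'$, (ii) the running time, and (iii) the equivalence of the original instance $(D,k,s)$ with the returned instance $(D',k',s)$. The structure mirrors the proof of \Cref{thm:DDSC_kDelta-kernel} (delete all but a block set, reattach degree-fixing dummies, separate them by high degree), but with two additional ingredients: the rescaling of the anonymity level to $k':=\min\{k,\beta\}$ with $\beta:=(\Delta_D+2)2s$, and the use of \Cref{lem:P-uneff} to confine solutions to~$C$. The guards are the easy part: if \cref{line:iniBound} fires, the instance is returned unchanged and is already small; the trivial no-instance rule (\cref{line:trivialNoInstance}) is sound because a block $B_D(t)$ with $2s<|B_D(t)|<k-2s$ can neither be grown to size~$k$ (more than~$2s$ vertices would have to enter it) nor emptied (the degree of more than~$2s$ vertices would have to change), whereas inserting~$s$ arcs changes the degree of at most~$2s$ vertices.

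For the size bound I would count~$C$ and the dummy set $P:=P_{\mathrm{in}}\cup P_{\mathrm{out}}$ separately. There are at most $(\Delta_D+1)^2$ distinct tuples in~$\sigma(D)$, and from each block we retain at most $\beta+2s=O(s\Delta_D)$ vertices, so $|C|=O(s\Delta_D^3)$. Restoring the degrees of the vertices of~$C$ (\cref{line:add-outdeg-fix-vertices,line:add-indeg-fix-vertices}) adds at most $2\Delta_D$ dummies per vertex of~$C$, and the \textbf{while}-loop adds only $O(\max\{\Delta_D+s,k'\})=O(s\Delta_D)$ further vertices, giving $|V(D')|=O(s\Delta_D^4)\subseteq O(s\Delta_D^5)$. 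The running time then follows from a direct implementation: the input-dependent steps (computing~$\sigma(D)$, the block sizes, the no-instance test, and selecting and reconnecting the vertices of~$C$) are covered by the $O(\Delta_D^3 s n)$ term, while materializing the dense cliques $P_{\mathrm{in}}^2$, $P_{\mathrm{out}}^2$, and $P_{\mathrm{in}}\times P_{\mathrm{out}}$ on the $O(\Delta_D^5 s)$ vertices dominates and yields the $O(\Delta_D^{10}s^2)$ term.

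The heart of correctness is a block-size accounting. First I would record that, by construction, all vertices of $P_{\mathrm{in}}$ share one degree and all of $P_{\mathrm{out}}$ share another, both with in- and outdegree at least $\Delta_D+s+1$, and that each of these blocks has at least~$k'$ vertices; since every $v\in C$ has $\deg_{D'}(v)=\deg_D(v)$ with both coordinates at most~$\Delta_D$, inserting at most~$s$ arcs can never move a vertex of~$C$ into a $P$-degree. The key identity is that for every arc set $A\subseteq C^2$ and every tuple~$t'$ one has $\lambda_{D'+A}(t')=\lambda_{D+A}(t')-R(t')$, where $R(t'):=\lambda_D(t')-x_{t'}\ge 0$ counts the removed (non-$C$) vertices of original degree~$t'$; this holds because $A$ changes the degrees of the very same vertices of~$C$ in both digraphs, the removed vertices keep their degree, and no moved vertex reaches a $P$-degree. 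For the forward direction, $C$ contains a $2s(\Delta_D+1)$-block set (each retained count is at least $\min\{\lambda_D(t),2s(\Delta_D+1)\}$, as $\beta>2s(\Delta_D+1)$), so \Cref{lem:PiDDSC-block-set} (applied with trivial~$\tau$ and $\Pi$ being ``$k$-anonymous'') gives a solution $A^*\subseteq C^2$, which I apply in~$D'$ and analyze via the identity and $k'=\min\{k,\beta\}$. The backward direction is symmetric, using \Cref{lem:P-uneff} to obtain a solution avoiding~$P$, hence an arc set in~$C^2$ that I copy into~$D$.

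I expect the main obstacle to be checking that the retention counts~$x_t$ and the rescaled level~$k'$ preserve the predicate ``every block is empty or of size at least~$k$'' in \emph{both} directions, under an arbitrary solution touching at most~$2s$ vertices. I would carry this out by a per-tuple case analysis on the original block size: empty blocks and small blocks ($\lambda_D(t')\le 2s$, where $R(t')=0$), and large blocks ($\lambda_D(t')\ge k-2s$), the dangerous intermediate range having been discarded by the no-instance rule. The recurring tools are $|\lambda_{(\cdot)+A}(t')-\lambda_{(\cdot)}(t')|\le 2s$ and the fact that a retained block of size larger than~$2s$ cannot be emptied by at most~$2s$ degree changes, which forces $\lambda_{D'+A}(t')\ge\beta$ whenever $\lambda_{D'+A}(t')>0$. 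The rule on \cref{line:NoOfVerticesCase2.2} is tailored so that, when $|\lambda_D(t')-k|\le 2s$, the image block in~$D'$ has size $\beta+(\lambda_D(t')-k)$ and $R(t')=k-\beta$, converting ``$\ge k$ or $=0$'' into ``$\ge\beta$ or $=0$''; verifying that this substitution makes $\lambda_{D+A}(t')\ge k$ equivalent to $\lambda_{D'+A}(t')\ge\beta$ across all cases is the calculation I expect to require the most care.
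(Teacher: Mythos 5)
Your proposal is correct and follows essentially the same route as the paper's proof: both analyze \Cref{alg:polyKernelDirDegAnon}, confine solutions to~$C^2$ via \Cref{lem:P-uneff} (backward direction) and \Cref{lem:PiDDSC-block-set} (forward direction), and reduce correctness to the same block-size accounting identity $|B_{D'+A}(t)|-|B_{D'}(t)|=|B_{D+A}(t)|-|B_{D}(t)|$ for arc sets $A\subseteq C^2$, with the same case analysis on small versus large blocks and $k\le\beta$ versus $k>\beta$. Your size and running-time analyses also match the paper's (your $O(s\Delta_D^4)$ count of dummy vertices is in fact slightly tighter than needed, but well within the claimed $O(\Delta_D^5 s)$ bound).
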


\begin{proof}
	We use \cref{alg:polyKernelDirDegAnon} to compute the problem kernel. 
	The correctness of the kernelization follows from the following two lemmas.
        Their proofs are, however, adaptions of the corresponding undirected counterparts~\cite[Lemmas~7 and~8]{HNNS15}

	Let~$D = (V,A)$ be a digraph and~$k \in \N$. 
	An arc set~$S \subseteq V^2$ is called \emph{$k$-insertion set} for~$D$, if~$D + S$ is~$k$-anonymous.
        
	\begin{lemma}\label[lemma]{lem:polyKernelRueckrichtung-DirDegAnon}
		If the instance~$(D',k',s)$ constructed by \Cref{alg:polyKernelDirDegAnon} is a yes-instance, then $(D,k,s)$ is a yes-instance.
	\end{lemma}

	\begin{proof}[of \cref{lem:polyKernelRueckrichtung-DirDegAnon}]
		First, observe that if $k\le \beta$, then $k'=k$ and each $k$-insertion set for~$D'$ is a $k$-insertion set for~$D$ as all blocks with less than~$\beta + 2s$ vertices remain unchanged. 
		Hence, it remains to consider the case that~$k > \beta$ and thus~$k' = \beta$.
		
		Let~$S'$ be an arc set such that~$|S'| \le s$ and $D'+S'$ is $k'$-anonymous.
		By~\cref{lem:P-uneff}, we can assume that each arc in~$S'$ has both endpoints in~$C$.
		We show that~$D+S'$ is $k$-anonymous, that is, for each block~$B_{D+S'}(t)$ we have~$|B_{D+S'}(t)| \ge k$ or $|B_{D+S'}(t)| = 0$.
		To this end, we distinguish two cases on whether the corresponding block in~$D'+S'$ is empty or contains at least~$k'$ vertices.
		
		First, consider the case~$|B_{D'+S'}(t)| = 0$. 
		Since it holds that $|S'| \le s$, it follows that~$|B_{D'}(t)| \le 2s$. 
		By \cref{line:NoOfVerticesCase2.1,line:NoOfVerticesCase2.2}, it follows that~$D$ and~$D'$ contain the same vertices of degree~$t$, that is, $B_{D'}(t) = B_{D}(t)$.
		Hence, we have~$|B_{D+S'}(t)| = 0$.
		
		Second, consider the case~$|B_{D'+S'}(t)| \ge k$.
		If~$|B_{D}(t)| \ge k + 2s$, then it clearly holds that $|B_{D+S'}(t)| \ge k$ and we are done.
		Otherwise, by \cref{line:NoOfVerticesCase2.2}, we have~$|B_{D}(t)| - k = |B_{D'}(t)| - k'$.
		Since~$S'$ only contains arcs with both endpoints in~$C$, it follows that by inserting~$S$, the same vertices will be added and removed from~$B_{D}(t)$ and~$B_{D'}(t)$, that is, $|B_{D+S'}(t)| - k = |B_{D'+S'}(t)| - k'$.
		Since~$|B_{D'+S'}(t)| \ge k'$ it follows that~$|B_{D+S'}(t)| \ge k$. 
		Thus, $D+S'$ is $k$-anonymous and $(D,k,s)$ is a yes-instance.
		\hfill (Proof of \cref{lem:polyKernelRueckrichtung-DirDegAnon})
	\end{proof}

	\begin{lemma}\label[lemma]{lem:polyKernelHinrichtung-DirDegAnon}
		If $(D,k,s)$ is a yes-instance, then the instance~$(D',k',s)$ constructed by \Cref{alg:polyKernelDirDegAnon} is a yes-instance.
	\end{lemma}

	\begin{proof}
		Observe that, in the instance~$(D',k',s)$ constructed by \Cref{alg:polyKernelDirDegAnon}, for each degree~$t \in \N^2$ we have that either~$B_D(t) = B_{D'}(t)$ (in case that~$B_D(t)$ contains few vertices, see \cref{line:NoOfVerticesCase1,line:NoOfVerticesCase2.1}) or~$B_{D'}(t) \subseteq B_D(t)$ contains at least~$\beta - 2s$ vertices ($|B_{D'}(t)| \ge \beta - 2s = (\Delta_D+1)2s$, see \cref{line:trivialNoInstance,line:NoOfVerticesCase1,line:NoOfVerticesCase2.2}).
		Thus, $D'$ contains a~$(\Delta_D+1)2s$-block set.
		Since $(D,k,s)$ is a yes-instance, it follows from \cref{lem:PiDDSC-block-set} that there is a $k$-insertion set~$S$ of size at most~$s$ for~$D$ such that~$S \subseteq C^2$.
		
		We next show that~$D'+S$ is $k'$-anonymous, and hence, $(D',k',s)$ is a yes-instance.
		First, consider the case that~$k \le \beta$ and thus $k' = k$.
		Observe that every block~$B_{D'}(t)$ containing at least~$k+2s$ vertices also contains at least~$k=k'$ vertices in~$D'+S$.
		For every block~$B_{D'}(t)$ containing less than~$k+2s$ vertices it holds that~$B_{D'}(t) = B_{D}(t)$ (see \cref{line:NoOfVerticesCase1}).
		Thus, $|B_{D+S}(t)| = |B_{D'+S}(t)|$ and therefore~$B_{D'+S}(t)$ fulfills the $k$-anonymity requirement.
		
		Second, consider the case that~$k > \beta$ and thus $k'=\beta$.
		Let~$B_{D'}(t)$ be some block of~$D'$.
		We show that~$|B_{D'+S}(t)| = 0$ or~$|B_{D'+S}(t)| \ge k'$.
		If~$|B_{D'}(t)| \le 2s$, then~$B_{D'}(t) = B_{D}(t)$ (see \cref{line:NoOfVerticesCase2.1}).
		Hence, $B_{D'+S}(t) = B_{D+S}(t) = \emptyset$ since~$k > \beta > 4s$.
		If~$|B_{D'}(t)| > 2s$, then~$|B_{D}(t)| > k - 2s$ (see \cref{line:trivialNoInstance}) and thus~$|B_{D'}(t)| = \beta + \min\{2s,(|B_D(t)|-k)\}$ (see \cref{line:NoOfVerticesCase2.2}).
		Observe that~$|B_{D'+S}(t)| - |B_{D'}(t)| = |B_{D+S}(t)| - |B_{D}(t)|$, and thus,
		\begin{align}
			|B_{D'+S}(t)| 	& = (|B_{D+S}(t)| - |B_{D}(t)|) + |B_{D'}(t)|. \label{eq:size1}
		\end{align}
		Since~$|S|\le s$, we have~$|B_{D+S}(t)| - |B_{D}(t)| \ge -2s$. 
		We now distinguish the two cases~$|B_D(t)|-k \ge 2s$ and~$|B_D(t)|-k < 2s$.
		In the first case, it follows that~$|B_{D'}(t)| = \beta + 2s$ and from \cref{eq:size1} it follows
		\begin{align*}
			|B_{D'+S}(t)| & \ge -2s + \beta + 2s = \beta = k'.
		\end{align*}
		In the second case, it follows that~$|B_{D'}(t)| = \beta + |B_D(t)| - k$ (see \cref{line:NoOfVerticesCase2.2}).
		Observe that~$|B_{D+S}(t)| \ge k$ since~$|B_{D}(t)| > k - 2s$.
		From \cref{eq:size1} we conclude that
		\begin{align*}
			|B_{D'+S}(t)| & \ge k-|B_D(t)| + \beta + |B_D(t)|-k = \beta = k'.
		\end{align*}
                \hfill (Proof of \cref{lem:polyKernelHinrichtung-DirDegAnon})
	\end{proof}	
	The size of the kernel can be seen as follows:
	For each of the at most~$(\Delta_D+1)^2$ different blocks in the input graph~$D$, the algorithm keeps at most~$\beta + 2s = (\Delta_D+3)2s$ vertices in the set~$C$ (see \cref{line:NoOfVerticesCase1,line:NoOfVerticesCase2.1,line:NoOfVerticesCase2.2}).
	Thus, $|C| \in O( \Delta_D^3 s )$.
	The number of newly added vertices in \crefrange{line:add-outdeg-fix-vertices}{line:add-P-out-vertex} is at most~$\max\{\Delta_D^2\cdot |C|,k',\Delta_D+s+1\}$.
	Hence, $|P| \in O(\Delta_D^5 s)$ and thus the instance produced by \cref{alg:polyKernelDirDegAnon} contains at most~$O(\Delta_D^5 s)$ vertices.
	
	The running time can be seen as follows:
	Using bucket sort, one can lexicographically sort the~$n$ vertices by degree in~$O(n)$ time.
	Furthermore, in the same time one can create~$(\Delta_D+1)^2$ lists---each list containing the vertices of some degree~$t \in \N^2$.
	Then, the selection of the~$O(\Delta_D^3 s)$ vertices of~$C$ can be done in~$O(\Delta_D^3 s n)$ time.
	Clearly, inserting the vertices in~$P$ can be done in~$O(\Delta_D^5 s)$ time.
	Finally, inserting the arcs between the vertices in~$P$ (\cref{line:high-degree-in-P,line:separate-P-in-and-P-out}) takes~$O(\Delta_D^{10} s^2)$ time.\qed
\end{proof}

In contrast to both number problems in \cref{sec:DDCC,sec:DDSC}, we were unable to find a polynomial-time algorithm for the number problem for \DA, which is the special case of \nDDCSC asking for a $k$-anonymous target sequence.

\begin{problemdef}
    \problemtitle{\nDAlong (\nDA)}
    \probleminput{A sequence $\sigma = (c_1,d_1),\ldots,(c_n,d_n)$ of~$n$ nonnegative integer tuples,
                  two positive integers~$s$ and $k$.}
    \problemquestion{Is there a sequence $\sigma'=(c'_1,d'_1),\ldots,(c'_n,d'_n)$
      such that
      \begin{compactenum}[(i)]
        \item $\sum_{i=1}^n (c'_i-c_i) = \sum_{i=1}^n (d'_i-d_i) = s$,
        \item $c_i \le c'_i$, and $d_i \le d'_i$ for all $1 \le i \le n$, and
        \item each tuple in~$\sigma'$ appears at least~$k$ times?
      \end{compactenum}}
\end{problemdef}

We can show that \nDA is weakly NP-hard by a polynomial-time many-one reduction from \textsc{Partition}.

\begin{problemdef}
	\problemtitle{\textsc{Partition}}
	\probleminput{A multiset~$A=\{a_1,\ldots,a_n\}$ of positive integers that sum up to~$2B$.}
	\problemquestion{Is there a subset~$A'\subset A$ whose elements sum up to~$B$?}
\end{problemdef}

\begin{theorem}
	\label{thm:nDAhard}
	\nDA is (weakly) NP-hard even if~$k=2$.
\end{theorem}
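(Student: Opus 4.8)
The plan is to give a polynomial-time many-one reduction from \textsc{Partition} to \nDA with $k=2$. Given a \textsc{Partition} instance $A=\{a_1,\dots,a_n\}$ with $\sum_i a_i = 2B$, the central idea is to build, for every element $a_i$, a small ``choice gadget'' of four tuples whose only cheap ways of becoming $2$-anonymous correspond to spending weight $a_i$ \emph{either} entirely on the first (indegree) coordinate \emph{or} entirely on the second (outdegree) coordinate. Putting $a_i$ ``to the left'' of the partition will cost the first coordinate, putting it ``to the right'' will cost the second, and a global budget of exactly $s=2B$ on each coordinate will force the left part and the right part to have equal weight~$B$.

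Concretely, fix a large integer $M$ (say $M:=8B$) and, for each $i\in\{1,\dots,n\}$, place the four tuples
\[
(iM,\,iM),\quad (iM+a_i,\,iM+a_i),\quad (iM+a_i,\,iM),\quad (iM,\,iM+a_i)
\]
into the sequence~$\sigma$; the output has $4n$ tuples. Set $k:=2$ and $s:=2B$. The offset $iM$ isolates the gadgets from one another. Inside a single gadget the four tuples admit exactly three ways to form two equal pairs (plus merging all four into one value); computing the minimum coordinatewise increase for each, the pairing $\{(iM,iM),(iM+a_i,iM)\}$, $\{(iM+a_i,iM+a_i),(iM,iM+a_i)\}$ costs $2a_i$ on the first coordinate and $0$ on the second (the ``left'' choice), the symmetric pairing costs $0$ and $2a_i$ (the ``right'' choice), and the remaining pairing as well as the fourfold merge cost $2a_i$ on \emph{both} coordinates. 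Hence the cheapest anonymization of gadget~$i$ uses total weight exactly $2a_i$, and every minimum-weight anonymization is a pure ``left'' or pure ``right'' choice.

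For correctness, a \textsc{Partition} solution $A'$ with $\sum_{i\in A'}a_i=B$ yields a valid $\sigma'$ by taking the ``left'' choice for $i\in A'$ and the ``right'' choice otherwise: the total first-coordinate increase is $\sum_{i\in A'}2a_i=2B=s$, the total second-coordinate increase is $\sum_{i\notin A'}2a_i=2B=s$, and every resulting tuple occurs twice. Conversely, the total increase over both coordinates in any solution is $s+s=4B=\sum_i 2a_i$; since each gadget needs increase at least $2a_i$ to become $2$-anonymous and these minima already exhaust the whole budget, every gadget must be anonymized at exactly its minimum, i.e.\ by a pure left/right choice with no wasted weight. Letting $A'$ collect the gadgets using the ``left'' choice, the first-coordinate budget gives $\sum_{i\in A'}2a_i=2B$, i.e.\ $\sum_{i\in A'}a_i=B$, a \textsc{Partition} solution. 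As all coordinates are bounded by $O(nB)$, the reduction is polynomial and yields \emph{weak} NP-hardness.

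The main obstacle is closing the two ``cheating'' escape routes so that the tight-budget counting is valid. First, a gadget might try to anonymize by merging one of its tuples with a tuple of another gadget; but with $M>6B$ such a merge forces an increase exceeding the entire budget $4B$ in some coordinate, so no cross-gadget merge can occur and the per-gadget lower bound $2a_i$ holds independently. Second, a solution might overshoot a target tuple to shift weight between coordinates; this is ruled out precisely because the global budget $4B$ equals $\sum_i 2a_i$ and each gadget already requires $2a_i$, leaving no slack for overshooting and none for the ``balanced'' pairing. Verifying these two points carefully---choosing $M$ correctly and confirming the enumeration of within-gadget pairings is exhaustive---is the crux; once established, the equivalence with \textsc{Partition} is immediate.
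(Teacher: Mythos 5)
Your reduction is correct, and it shares the paper's overall strategy---both are polynomial-time many-one reductions from \textsc{Partition} to \nDA with $k=2$ that encode the choice ``put $a_i$ on the left or right side'' as ``spend $a_i$ on the indegree or on the outdegree coordinate,'' with the per-coordinate budget forcing an even split---but the gadget and the verification are genuinely different. The paper uses five tuples per element: a lone tuple $x_i=(2B(i+1)-a_i,0)$ plus two pre-placed \emph{anchor} blocks of size two at $(2B(i+1),0)$ and $(2B(i+1)-a_i,a_i)$, with budget $s=B$ per coordinate; correctness then follows from a short locality argument ($x_i$ must join one of its two anchors because every other tuple is too far away, and each such move charges $a_i$ to exactly one coordinate), with no analysis of how tuples group among themselves. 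Your gadget is anchor-free and symmetric (four tuples per element, budget $s=2B$ per coordinate), so you must do more work: enumerate all within-gadget groupings (the two axis pairings, the diagonal pairing, the fourfold merge), establish the per-gadget lower bound of $2a_i$ total increase, and then use a tight-budget counting argument ($\sum_i 2a_i = 4B$ equals the total budget) to rule out overshooting and the expensive groupings. That enumeration is exhaustive---with cross-gadget merges excluded, the four tuples must split into value-classes of size at least two, hence $2{+}2$ or $4$---and your choice $M=8B$ does kill cross-gadget merges, since any such merge needs a single-coordinate increase of at least $M-a_i\ge 6B$, far above the per-coordinate budget $s=2B$ (your phrase ``exceeding $4B$'' compares against the two-coordinate total, which is even more than needed). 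A minor bonus of your construction is that it needs no assumption that each $a_i<B$, which the paper invokes for its ``too far away'' argument; the price is that the correctness hinges on the grouping case analysis, which you correctly identify as the crux and carry out.
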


\begin{proof}
	Given a multiset $A=\{a_1,\dots,a_n\}$, observe that we can assume without loss of generality that each integer in~$A$ is smaller than~$B$ (otherwise we could solve the instance in polynomial time).
 
	We create the following \nDA-instance with $s:=B$, $k:=2$, and the sequence $\sigma$ containing the following tuples.
	For each $a_i \in A$ create five tuples: one tuple $x_i$ of type $(2B(i+1)-a_i,0)$, one block~$X_i$ that contains two tuples of type~$(2B(i+1),0)$, and one block~$X'_i$ that contains two tuples of type~$(2B(i+1)-a_i,a_i)$.
	This completes the construction.
 
	We show that there is a subset $A'\subset A$ whose elements sum up to exactly~$B$ if and only if there is a sequence $\sigma'=(c_1',d_1'),\ldots,(c_n',d_n')$ that fulfills Conditions (i)--(iii) of \nDA.
 
	First, assume that there is some $A'\subset A$ and $\sum_{a\in A'} a=B$.
	Then, we obtain the desired sequence~$\sigma'$ by first copying~$\sigma$ and changing~$x_i$ as follows:
	For each $a_i \in A'$ change the tuple $x_i$ from type~$(2B(i+1)-a_i,0)$ to type~$(2B(i+1),0)$ and for each $a_i \notin A'$ change the tuple $x_i$ from type~$(2B(i+1)-a_i,0)$ to type~$(2B(i+1)-a_i,a_i)$.
	It is not hard to verify that this~$\sigma'$ is indeed a solution:
	For Condition~(i), observe that $\sum_{i=1}^n (c_i'-c_i)=s=B=\sum_{i=1}^n (d_i'-d_i)$ since the elements in~$A'$ as well as the elements in~$A \setminus A'$ sum up to~$B$.
	Condition (ii)~is clearly ensured by construction of~$\sigma'$.
	For Condition~(iii), note that in sequence~$\sigma'$ block~$X_i$ contains either $k=2$~tuples (if $a_i \notin A'$) or $k+1$~tuples (if $a_i \in A'$) and, analogously, note that block~$X'_i$ contains either $k=2$~tuples (if $a_i \in A'$) or $k+1$~tuples (if $a_i \notin A'$); $\sigma'$ contains no further tuples.
	
	Second, assume that there is a sequence $\sigma'=(c_1',d_1'),\ldots,(c_n',d_n')$ that is a solution for our constructed \nDA instance.
	First note that $\sigma'$ does not differ from~$\sigma$ ``a lot'' in the following sense.
	Since $s=B$ and $k=2$, in sequence~$\sigma'$ the first component and the second component of all tuples can in total be increased by at most~$B$, respectively.
	Next, observe that each tuple~$x_i$ must either be of type~$(2B(i+1),0)$ or of type~$(2B(i+1)-a_i,a_i)$, since  every other tuple is too far away (recall that $a < B$ for all~$a \in A$).
	This means that each tuple~$x_i$ contributes with~$a_i$ to the total sum over the differences in either the first component ($\sum_{i=1}^n (c_i'-c_i)$), or the second component ($\sum_{i=1}^n (d_i'-d_i)$).
	Since $\sum_{a \in A} a = 2B$, it follows that the tuples~$x_i$ require at least a budget of~$B$ in either the the first or the second component. 
	Let $A':=\{a_i \mid \text{$x_i$ is of type~$(2B(i+1),0)$ in~$\sigma'$}\}$.
	We show that $\sum_{a\in A'} a=B$.
	Assume towards a contradiction that $\sum_{a\in A'} a\neq B$.
	Since $\sum_{i=1}^n (c_i'-c_i)=\sum_{a\in A'} a$ and $\sum_{i=1}^n (d_i'-d_i)=\sum_{a\notin A'} a$, either $\sum_{i=1}^n (c_i'-c_i)$ or $\sum_{i=1}^n (d_i'-d_i)$ would be greater than~$B$---a contradiction to our budget.\qed
\end{proof}

Note that the hardness from \cref{thm:nDAhard} does not translate to instances of \nDA originating from digraph degree sequences because in such instances all numbers in the input sequence~$\sigma$ and also in the output sequence~$\sigma'$ are bounded by~$n-1$ where~$n$ is the number of tuples in~$\sigma$.
Since there are pseudo-polynomial-time algorithms for \textsc{Partition}, \cref{thm:nDAhard} leaves open whether \nDA is  strongly NP-hard or can be solved in polynomial time for instances originating from digraphs.

To again apply our framework (\cref{thm:FPT_transfer}), we show that \nDA is at least fixed-parameter tractable with respect to the largest possible integer~$\xi$ in the output sequence.
To this end, we develop an integer linear program that contains at most~$O(\xi^4)$~integer
variables and apply a famous result due to \citeauthor{Len83}~\cite{Len83}. 

\begin{theorem}\label{thm:nDA-FPT}
	\nDA is fixed-parameter tractable with respect to the largest possible integer~$\xi$ in the output sequence.
\end{theorem}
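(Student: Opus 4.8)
The plan is to follow the hint and model \nDA as an integer linear program (ILP) with $O(\xi^4)$ variables, then invoke Lenstra's algorithm~\cite{Len83}, which decides feasibility of an ILP in time depending only on the number of variables times a polynomial in the input size. The starting observation is that, since $c_i \le c_i' \le \xi$ and $d_i \le d_i' \le \xi$ for every $i$, both the input and the output tuples live in the grid $\{0,\ldots,\xi\}^2$, so there are at most $(\xi+1)^2$ distinct \emph{types}. I would group the input tuples by type, letting $n_p$ denote the number of input tuples equal to $p\in\{0,\ldots,\xi\}^2$; the whole instance is then described by the $O(\xi^2)$ counts $n_p$ together with $s$ and $k$.

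The ILP uses one integer variable $z_{p,q}\ge 0$ for each ordered pair of types $(p,q)$ with $q\ge p$ componentwise (this already encodes Condition (ii)), representing how many input tuples of type $p$ are raised to output type $q$. This gives at most $(\xi+1)^2\cdot(\xi+1)^2 = O(\xi^4)$ variables. The conservation constraints $\sum_q z_{p,q} = n_p$ force every input tuple to receive exactly one output type, and, writing $p=(a,b)$ and $q=(a',b')$, the two budget equalities $\sum_{p,q} z_{p,q}(a'-a) = s$ and $\sum_{p,q} z_{p,q}(b'-b) = s$ encode Condition (i). From any feasible integer point one recovers a valid sequence $\sigma'$ by distributing the tuples accordingly, so the ILP is equisatisfiable with the instance apart from the anonymity requirement.

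The step I expect to be the main obstacle is Condition (iii): the number $m_q := \sum_p z_{p,q}$ of output tuples of type $q$ must lie in $\{0\}\cup\{k,k+1,\ldots\}$, a disjunction that a single ILP cannot express directly. I would resolve this by branching: guess the set $S\subseteq\{0,\ldots,\xi\}^2$ of \emph{active} (nonempty) output types before solving. There are at most $2^{(\xi+1)^2}$ such guesses, which is a function of $\xi$ only. For a fixed guess $S$ I add the linear constraints $m_q\ge k$ for $q\in S$ and $z_{p,q}=0$ for all $p$ and all $q\notin S$, turning (iii) into ordinary linear inequalities. The original instance is a yes-instance if and only if at least one of these $2^{(\xi+1)^2}$ ILPs is feasible.

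Finally, I would bound the running time. Each ILP has $O(\xi^4)$ variables, so by Lenstra's theorem~\cite{Len83} its feasibility is decidable in $\xi^{O(\xi^4)}$ time times a polynomial in the bit-length of the input. Multiplying by the $2^{O(\xi^2)}$ branches yields an overall bound of $f(\xi)\cdot |I|^{O(1)}$ for a suitable function $f$, establishing fixed-parameter tractability with respect to $\xi$. The one point still needing a short check is that the encoding length of each ILP is polynomial in the original input size---the coefficients are bounded by $\xi$ and the right-hand sides by $s$ and $n$---so that the polynomial factor in Lenstra's bound hides no additional dependence on $\xi$.
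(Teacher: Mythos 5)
Your proposal is correct and takes essentially the same approach as the paper's proof: group the tuples into the $O(\xi^2)$ possible types, introduce $O(\xi^4)$ nonnegative integer variables counting how many input tuples of type $p$ become output tuples of type $q$, impose the conservation constraints $\sum_q z_{p,q}=n_p$ and the two budget equalities for Condition (i), encode Condition (ii) by restricting to componentwise-dominating pairs, and finish with Lenstra's theorem. The only deviation is the treatment of the disjunctive Condition (iii): the paper stays within a single ILP by adding binary indicator variables $u_{t'}$ together with the constraints $\sum_t x_{t,t'} + k(1-u_{t'}) \ge k$ and $\sum_t x_{t,t'} \le u_{t'}\cdot n$, whereas you branch over the $2^{(\xi+1)^2}$ possible sets of nonempty output types and solve one ILP per branch---both are sound, and since your branching factor depends only on $\xi$, the same fixed-parameter tractability bound follows.
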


\begin{proof}
 Let $(\sigma,s,k)$ be an instance of \nDA.
 The key idea is that knowing how many tuples of type~$t$ in~$\sigma$ are
 transformed into a tuples of type~$t'$ in~$\sigma'$ for each pair~$\{t,t'\}$ of tuples
 is sufficient to describe a solution of our \nDA instance.
 To this end, observe that there are at most $(\xi+1)^2$ tuple blocks
 in~$\sigma$ and in~$\sigma'$, respectively.
 
 We describe an integer linear problem and create one variable $x_{t,t'}$ for each
 pair $t,t' \in \{0,\dots,\xi\}^2$ which denotes the number of tuples
 of type~$t$ in sequence~$\sigma$ that become tuples of type~$t'$ in sequence~$\sigma'$.
 We further use the binary variables~$u_{t}$ for each $t \in \{0,\dots,\xi\}^2$ being~$1$ if and only if some tuple of type~$t$ is used in the solution, that is,
 there is at least one tuple of type~$t$ in~$\sigma'$.
 We add a set of constraints ensuring that all tuples from $\sigma$ appear in $\sigma'$:
 $$\forall t \in \{0,\dots,\xi\}^2: \sum_{t' \in  \{0,\dots,\xi\}^2} x_{t,t'}=\lambda_\sigma(t).$$
 Then, we ensure that (i) holds by:
 $$\sum_{(t_1,t_2),(t_1',t_2') \in \{0,\dots,\xi\}^2} (t_1'-t_1) \cdot x_{(t_1,t_2),(t_1',t_2')} = s$$ 
 and by:
 $$\sum_{(t_1,t_2),(t_1',t_2') \in \{0,\dots,\xi\}^2} (t_2'-t_2) \cdot x_{(t_1,t_2),(t_1',t_2')} = s.$$
 We ensure that (ii) holds by:
 $$\forall(t_1,t_2),(t_1',t_2') \in \{0,\dots,\xi\}^2 \text{ with $t'_1<t_1$ or $t'_2<t_2$: } x_{(t_1,t_2),(t'_1,t'_2)}=0.$$
 We ensure that (iii) holds by:
 $$\forall t' \in \{0,\dots,\xi\}^2: \sum_{t \in  \{0,\dots,\xi\}^2} x_{t,t'} + k \cdot (1 - u_{t'}) \ge k.$$
 Finally, we add the following constraint set to ensure consistency between the~$u_t$ and $x_{t,t'}$~variables:\
 $$\forall t' \in \{0,\dots,\xi\}^2: \sum_{t \in  \{0,\dots,\xi\}^2} x_{t,t'} \leq u_{t'} \cdot n.$$

 Finally, fixed-parameter tractability follows by
 a result of \citet{Len83}
 (later improved by \citet{Kan87,FT87b})
 that says that an ILP with $\rho$ variables and $\ell$ input bits can be solved
 in~$O(\rho^{2.5\rho+o(\rho)}\ell)$ time.\qed
\end{proof}

Combining \cref{thm:PiEAfptDeltak,thm:FPT_transfer,thm:nDA-FPT} yields fixed-parameter tractability for \DA with respect to~$\degpara$.

\begin{corollary}\label[corollary]{cor:DDAfpt}
  \DA is fixed-parameter tractable with respect to~$\degpara$.
\end{corollary}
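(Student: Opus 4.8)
The plan is to realize \DA as the special case of \DDCSC in which the degree list function is trivial, $\tau(v)=\{0,\ldots,n-1\}^2$ for all $v\in V$, and the sequence property $\Pi$ is $k$-anonymity, that is, $\Pi(\sigma')=1$ precisely when every tuple occurring in $\sigma'$ occurs at least $k$ times. I would then invoke the three cited framework results in sequence. Since $k$-anonymity of a given digraph degree sequence can be decided in polynomial time (by counting the occurrences of each tuple), it is in particular fixed-parameter tractable with respect to the largest integer appearing in the sequence, so the hypothesis of \Cref{thm:PiEAfptDeltak} is satisfied. That theorem then gives fixed-parameter tractability of \DA with respect to $(s,\Delta_D)$.

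Next I would lift this to fixed-parameter tractability with respect to $(s,\degpara)$, which is the form demanded by \Cref{thm:FPT_transfer}. This step is immediate from the relation $\Delta_D\le\degpara$, which holds for every yes-instance because inserting arcs can only increase degrees; hence an $f(s,\Delta_D)\cdot n^{O(1)}$-time algorithm also runs in $f(s,\degpara)\cdot n^{O(1)}$ time once we assume (as we may) that $\Delta_D\le\degpara$. The number problem associated with this instantiation of \DDCSC is exactly \nDA, and \Cref{thm:nDA-FPT} states that \nDA is fixed-parameter tractable with respect to the largest possible integer $\xi$ in the output sequence. Both hypotheses of \Cref{thm:FPT_transfer} are therefore met, and applying it yields fixed-parameter tractability of \DA with respect to the single parameter $\degpara$, as claimed.

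The proof is essentially an assembly of three established results, so no genuinely hard analytic step remains. The main point requiring care is making the parameters line up: \Cref{thm:PiEAfptDeltak} produces tractability in $(s,\Delta_D)$ whereas \Cref{thm:FPT_transfer} consumes tractability in $(s,\degpara)$, and one must bridge the gap via $\Delta_D\le\degpara$ together with monotonicity of the running-time bound. I would also make explicit that the number problem arising from the $k$-anonymity property is indeed \nDA, ensuring that \Cref{thm:nDA-FPT} is the correct ingredient to plug into the framework.
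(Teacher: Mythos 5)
Your proposal is correct and follows essentially the same route as the paper, which likewise obtains this corollary by combining \Cref{thm:PiEAfptDeltak}, \Cref{thm:FPT_transfer}, and \Cref{thm:nDA-FPT}. Your explicit treatment of the parameter bridge $\Delta_D\le\degpara$ and the identification of the associated number problem as \nDA are exactly the (implicit) glue the paper relies on.
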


For undirected graphs, \citet{HNNS15} showed fixed-parameter tractability with respect to the maximum degree~$\Delta_G$ of the input graph.
This result was based on showing that~$\degpara \in O(\Delta_G^2)$.
For directed graphs, however, we can only show that~$\degpara \le 4 k (\Delta_D + 2)^2$ implying fixed-parameter tractability with respect to~$(k,\Delta_D)$.

\begin{lemma}\label[lemma]{lem:DA-kernel-k-delta-bound}
	Let~$D$ be a digraph and let~$S$ be a minimum size arc set such that~$D+S$ is $k$-anonymous.
	Then $\Delta_{D+S} \le 4 k (\Delta_D + 2)^2 + \Delta_{D}$.
\end{lemma}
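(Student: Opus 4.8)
The plan is to argue by contradiction from the minimality of~$S$: were $\Delta_{D+S}$ larger than the claimed bound, one could delete (and mildly reroute) arcs of~$S$ to obtain a strictly smaller arc set whose insertion still makes $D$~$k$-anonymous. First I would reduce to the outdegree. By the symmetry between in- and outdegrees, assume without loss of generality that $\Delta_{D+S}=\Delta^+_{D+S}=:M$ is attained by an outdegree; the indegree case is handled by the mirror-image argument. Write $D':=D+S$ and note that, since inserting arcs only increases degrees, every vertex one of whose coordinates exceeds~$\Delta_D$ must be incident to an arc of~$S$; call such vertices \emph{affected}. Suppose toward a contradiction that $M > \Delta_D + 4k(\Delta_D+2)^2$.

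The heart of the argument is an exchange that \emph{peels the topmost block downward}. Since $D'$ is $k$-anonymous, a maximum-outdegree vertex lies in a block $B:=B_{D'}((p,M))$ with $|B|\ge k$, and each vertex of~$B$ received at least $M-\Delta_D\ge 1$ outgoing arcs from~$S$. I would then select a set $R\subseteq S$ of arcs whose tails lie in~$B$, delete it, and repair the resulting damage so that, on the one hand, all vertices of~$B$ drop to a common smaller outdegree and thereby form (or join) a block of size at least~$k$ while the block $(p,M)$ becomes empty, and, on the other hand, the heads of the deleted arcs, whose indegrees drop, are relocated \emph{group-wise} so that every touched block stays either empty or of size at least~$k$. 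Because $|B|\ge k$ and each of its vertices carries at least $M-\Delta_D$ deletable outgoing arcs, there is ample freedom to choose~$R$ with $|R|\ge k$ and to carry out the repair using arcs already present, so that the resulting $k$-insertion set has size strictly below $|S|$, contradicting minimality.

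The step I expect to be the main obstacle---and the reason the directed bound carries the extra factor~$k$ that is absent from the undirected $O(\Delta_G^2)$ bound of \citet{HNNS15}---is the coupling of the two coordinates: lowering outdegrees forces indegree decreases at the heads, which can cascade and violate $k$-anonymity on the indegree side. Controlling this cascade is exactly what forces one to move \emph{whole blocks} rather than single vertices, since at least~$k$ interchangeable vertices are needed in each target block to absorb a relocation without creating an under-full block; this supplies the $\Theta(k)$ slack. The factor $(\Delta_D+2)^2$ is an upper bound on the number of distinct degree tuples with both coordinates at most $\Delta_D+1$, hence on the number of ``original'' degree classes into which affected vertices can be folded back, and charging the slack to these classes yields $\Delta_{D+S}\le \Delta_D+4k(\Delta_D+2)^2$. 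Making the simultaneous, group-wise repair of both coordinates rigorous---in particular guaranteeing $|R|\ge k$ while keeping every block either empty or of size at least~$k$---is the delicate part, and it is precisely the place where a $k$-free bound cannot be recovered, which is why only fixed-parameter tractability with respect to~$(k,\Delta_D)$ follows.
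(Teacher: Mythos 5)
Your overall strategy---contradiction via minimality of~$S$---matches the paper, but the mechanism you propose (locally peeling the top block and repairing the damage) is genuinely different from what the paper does, and the step you yourself flag as ``the delicate part'' is a real gap, not a technicality. Deleting arcs of~$S$ whose tails lie in the top block~$B$ drops the indegrees of heads that are scattered across arbitrary blocks of~$D+S$; each such block can fall to a size strictly between~$1$ and~$k-1$, and there is no guarantee that the heads of any admissible choice of~$R$ can be grouped into whole blocks. Patching an under-full block by inserting replacement arcs forces new tails to gain outdegree, which damages further blocks, and your sketch gives no invariant or termination argument that controls this cascade. ``Ample freedom'' in choosing~$R$ does not resolve this: the freedom is in which out-arcs of~$B$ to delete, but the obstruction lives at the heads, over which you have essentially no control. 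Also, your account of where the factor~$k$ comes from (slack needed to relocate heads group-wise) does not correspond to anything that can be made to work along these lines.

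The paper's proof avoids local surgery on~$S$ entirely. It splits on~$|V(S)|$: if $|V(S)| < 4k(\Delta_D+2)^2$ the degree bound is immediate, and if $|V(S)| \ge 4k(\Delta_D+2)^2$ it \emph{discards~$S$ completely} and rebuilds a smaller insertion set from scratch. Concretely, it defines target degrees by capping every in- and outdegree of~$D+S$ at~$\Delta_D+1$ (capping is degree-uniform, so the capped sequence is still $k$-anonymous), then restores the balance condition $\sum_i x_i = \sum_i y_i$ by block-wise indegree adjustments---this is where $k$ enters: by pigeonhole over the at most~$(\Delta_D+2)^2$ target degrees among the $\ge 4k(\Delta_D+2)^2$ vertices of~$V(S)$, some target block has size at least~$4k$ and can absorb an adjustment of size $<3k$ without breaking anonymity. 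Finally, it invokes the flow-based realization result (\cref{lem:factor}) to produce an arc set realizing the capped sequence; its size condition \cref{cond:minsize} is satisfied exactly because $|V(S)|$ is large, and the realized set is strictly smaller than~$S$, giving the contradiction. This global realization lemma is the missing ingredient in your plan: it is what lets one prescribe a $k$-anonymous target sequence and realize it outright, instead of trying to repair the coupled in-/outdegree damage arc by arc. Without a tool of this kind, the exchange argument you outline cannot be completed as stated.
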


\begin{proof}
 Let~$D = (V,A)$ be a digraph with maximum degree~$\Delta_D$ and let~$k$ be a positive integer.
 An arc set~$S \subseteq V^2$ is called \emph{$k$-insertion set} for~$D$ if~$D + S$ is~$k$-anonymous.
 Further, let~$S \subseteq V^2$ be a minimum size $k$-insertion set. 
 We will show that if~$|V(S)| \ge 4 k (\Delta_D + 2)^2$, then the maximum degree in~$D+S$ is at most~$\Delta_{D}+2$, and
 if~$|V(S)| < 4 k (\Delta_D + 2)^2$, then the degree in~$D+S$ is clearly at most~$4 k (\Delta_D + 2)^2 + \Delta_{D}$.
	
Now suppose that~$|V(S)| \ge 4 k (\Delta_D + 2)^2$ and assume towards a contradiction that~$D+S$ has a maximum degree~$\Delta_{D+S} > \Delta_D+2$.
We next construct a smaller $k$-insertion set~$S'$ in two steps.
In the first step, we define for each vertex~$v \in V$, a target degree~$\tau(v) = (\tau^-(v),\tau^+(v))$ such that the following (and further conditions that are discussed later) holds:
	\begin{enumerate}[(a)]
		\item\label[cond]{cond:i} $\deg_D^-(v) \le \tau^-(v) \le \Delta_D + 2$,
		\item\label[cond]{cond:ii} $\deg_D^+(v) \le \tau^+(v) \le \Delta_D + 2$, and
		\item\label[cond]{cond:iii} the multiset~$\sigma(\tau) := \{\tau(v) \mid v \in V\}$ is $k$-anonymous, that is~$\lambda_{\sigma(\tau)}(\tau(v)) \ge k$ for each~$v \in V$.
	\end{enumerate}
	As a second step, we use \cref{lem:factor} to provide an arc set~$S'$ such that~$\sigma(D+S') = \sigma(\tau)$.
	Since~$\sigma(\tau)$ is $k$-anonymous, it follows that~$S'$ is a $k$-insertion set and we will show that~$|S'|<|S|$. 

	We now give a detailed description of the two steps and start with defining the target degree function~$\tau$ as follows
	\begin{align}
		\tau(v) := (\min\{\Delta_D + 1,\deg_{D+S}^-(v)\},\min\{\Delta_D + 1,\deg_{D+S}^+(v)\}). \label{eq:def-tau}
	\end{align}
	Observe that~$\tau$ satisfies the above three \cref{cond:i,cond:ii,cond:iii}. 
	Furthermore, we have~$\sum_{v \in V} \deg_{D+S}^+(v) > \sum_{v \in V} \tau^+(v)$ since the maximum degree in $D + S$ is larger than $\Delta_D + 2$.
	If we can realize the target degrees~$\tau$ with a~$k$-insertion set~$S'$, then it follows that~$|S'| < |S|$.
	
	To apply \cref{lem:factor} with~$\Delta_{D'} := \Delta_D + 2$, $x_i:=\tau^+(v_i) - \deg_D^-(v_i)$ and $y_i:=\tau^+(v_i) - \deg_D^+(v_i)$, we need to satisfy \cref{cond:maxdegbound,cond:indegbound,cond:outdegbound,cond:balance} of \cref{lem:factor}.
	By assumption, $\Delta_{D'} = \Delta_D + 2 < \Delta_{D+S} \le |V|-1$ holds.
	Hence, \cref{cond:maxdegbound} is fulfilled.
	Moreover, $\tau^-(v) \le \Delta_{D'}$ and~$\tau^+(v) \le \Delta_{D'}$ holds for all $v\in V$.
	\cref{cond:indegbound,cond:outdegbound} are thus also satisfied.
	However, we also need to ensure~$\sum_{i=1}^nx_i=\sum_{i=1}^ny_i$ (\cref{cond:balance}), that is, we need to ensure that~$\tau$ changes the indegrees and outdegrees of the vertices in~$D$ by the same overall amount.
	This might not be true as we changed the indegrees and outdegrees independently.
	To overcome this problem, we subsequently adjust~$\tau$ again.
	
	Assume without loss of generality that compared to~$D+S$ the target degree function~$\tau$ reduced more indegrees than outdegrees, that is,
        $$\sum_{v \in V} (\tau^+(v) - \deg_D^+(v)) > \sum_{v \in V} (\tau^-(v) - \deg_D^-(v)).$$
	Denote by $\diff_\tau$ the difference between the two sums, that is,
	\begin{align*}
		 \diff_\tau := & \sum_{v \in V} \left((\tau^+(v) - \deg_D^+(v)) - (\tau^-(v) - \deg_D^-(v))\right) \\
                 = & \sum_{v \in V} \left(\tau^+(v)- \tau^-(v) + \deg_D^-(v) - \deg_D^+(v)\right)\\
                 = & \sum_{v \in V} \left(\tau^+(v)- \tau^-(v)\right) + \sum_{v \in V} \left(\deg_D^-(v) - \deg_D^+(v)\right)\\
                 = & \sum_{v \in V} \left(\tau^+(v)- \tau^-(v)\right).
	\end{align*}
	Further, denote by~$B_\tau(\tau(v))$ the \emph{block of~$v$ in~$\tau$}, that is the set vertices having the same target degree as~$v$.
	In the final adjustment of~$\tau$ we need~$\diff_\tau$ to be at least~$k$ and at most~$3k$.
	Hence, if~$\diff_\tau < k$, then we adjust~$\tau$ as follows:
	Pick an arbitrary vertex~$v$ such that the outdegree of~$v$ in~$D+S$ is larger than~$\Delta_D + 1$.
	Observe that such a vertex must exist: 
	We assumed to reduce the indegrees more than the outdegrees (thus $0 < \diff_\tau$), hence we reduced the indegrees of the vertices of at least one block, that is, of at least~$k$ vertices. 
	Since~$\diff_\tau < k$ it follows that we also reduced the outdegrees of at least one block and thus, such a vertex~$v$ exists.
	If the block of~$v$ contains at least~$2k$ vertices, then increase the target outdegree of exactly~$k$ of these vertices by one.
	Otherwise, if the block contains less than~$2k$ vertices, then increase the target outdegree of all these vertices by one.
	It follows that~$\diff_\tau > k$.
	Furthermore, observe that~$\sum_{v\in V} \tau^+(v) < \sum_{v \in V} \deg_{D+S}^+(v)$, that is, after realizing the target degrees~$\tau$, the corresponding $k$-insertion set~$S'$ is still smaller than~$S$.

	In the following, we increase the indegrees in two rounds.
	Observe that if we do not increase outdegrees, then it still holds that~$|S'|<|S|$.
	In the first round, while~$\diff_\tau \ge 3k$ do the following:
	\begin{enumerate}
		\item Pick an arbitrary vertex with~$\tau^-(v) \le \Delta_D$.
		\item\label{2nd} If~$|B_\tau(\tau(v))| \le 2k$, then increase the target indegree~$\tau^-(u)$ by one for each~$u \in B_\tau(\tau(v))$.
		\item\label{3rd} Else, it follows that~$|B_\tau(\tau(v))| > 2k$. 
				Let~$B' \subseteq B_\tau(\tau(v))$ be an arbitrary subset of size exactly~$k$ and increase the target indegree~$\tau^-(u)$ by one for each~$u \in B'$.
	\end{enumerate}
	Observe that in Step~\ref{2nd} as well as in Step~\ref{3rd} we increase the target indegree of at least~$k$ vertices that have the same target degree.
	Furthermore, in Step~\ref{3rd} we ensure that at least~$k$ vertices with the original target degree remain.
	Hence, the (changed) multiset~$\sigma(\tau)$ is still $k$-anonymous.
	Furthermore, it is easy to verify that the maximum target indegree is at most~$\Delta_D + 1$.
	Finally, observe that we decrease~$\diff_\tau$ in each iteration by at most~$2k$ and, hence, we have $\diff_\tau \ge k$.
	
	In the second round, we have that~$k \le \diff_\tau < 3k$.
	We simply pick a block~$B_\tau(\tau(v))$ with at least~$4k$ vertices and increase the target indegree of exactly~$\diff_\tau$ vertices.
	Since~$|V(S)| \ge 4 k (\Delta_D + 2)^2$ and there are at most~$(\Delta_D + 2)^2$ different degrees in~$\tau$ (in- and outdegrees between~$0$ and~$\Delta_D+1$), it follows that there exists such a block of size at least~$4k$.
	Furthermore, observe that, after this change in the second round,~$\sigma(\tau)$ is still $k$-anonymous and the maximum target indegree is at most~$\Delta_D + 2$.
	Hence, the adjusted target degree function~$\tau$ fulfills \cref{cond:maxdegbound,cond:indegbound,cond:outdegbound,cond:balance} of \cref{lem:factor}.
        
	It remains to show the last condition in \cref{lem:factor}, that is, \cref{cond:minsize} stating $s = \sum_{i=1}^nx_i \ge 2\Delta_{D'}^2 + \Delta_{D'}$.
	Due to the definition of~$\tau$ (see \cref{eq:def-tau}), it follows that we only decreased the degrees of vertices with in- or outdegree greater than~$\Delta_D+1$ in~$D+S$.
	Since the target degrees of these vertices is at least~$\Delta_D+1$ (the later changes to~$\tau$ only increased some degrees), it follows that~$V(S)$ is exactly the set of vertices whose target indegree (outdegree) is larger than their indegree (outdegree) in~$D$. 
	Hence, 
	\begin{align*}
		\sum_{v \in V} \tau^+(v)-\deg_D^+(v) &\ge |\{v \in V \mid \tau^+(v) > \deg_D^+(v)\}|\\
		&= |V(S)| \ge 4 k (\Delta_D + 2)^2.
	\end{align*}
	Since~$\Delta_{D'} = \Delta_D+2$ it follows that \cref{cond:minsize} is indeed fulfilled.
	Thus, the set~$S' := A'$ realizing~$\tau$ is a $k$-insertion set of size less than~$|S|$; a contradiction to the fact that~$S$ is a minimum size $k$-insertion set for~$D$.\qed
\end{proof}

Combining \cref{thm:PiEAfptDeltak,thm:FPT_transfer,thm:nDA-FPT,lem:DA-kernel-k-delta-bound}, we obtain the following.

\begin{corollary}\label[corollary]{cor:DA-FPT-deltastern}
	\DA is fixed-parameter tractable with respect to~$(k,\Delta_D)$.
\end{corollary}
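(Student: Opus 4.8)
The plan is to chain the three framework results (\cref{thm:PiEAfptDeltak}, \cref{thm:nDA-FPT}, \cref{thm:FPT_transfer}) to obtain fixed-parameter tractability of \DA with respect to~\degpara (which is exactly \cref{cor:DDAfpt}), and then to use the degree bound from \cref{lem:DA-kernel-k-delta-bound} to replace the implicit parameter~\degpara by the explicit combined parameter~$(k,\Delta_D)$.

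First I would recall that \DA is the special case of \DDCSC in which~$\tau$ is trivial and~$\Pi$ is $k$-anonymity, and whose associated number problem is \nDA. Since $k$-anonymity of a degree sequence can be checked in polynomial time (indeed in time depending only on the largest integer occurring in the sequence), \cref{thm:PiEAfptDeltak} yields that \DA is fixed-parameter tractable with respect to~$(s,\Delta_D)$. Moreover, \cref{thm:nDA-FPT} states that \nDA is fixed-parameter tractable with respect to the largest possible integer~$\xi$ in the output sequence. Feeding both facts into \cref{thm:FPT_transfer} then gives an algorithm~$\mathcal{B}$ solving \DA in $g(\degpara)\cdot n^{O(1)}$ time for some function~$g$; this is precisely the statement of \cref{cor:DDAfpt}.

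The crucial step is to bound~\degpara. Recall from the discussion of the framework that~\degpara is not known in advance and that the algorithms are designed to work with any upper bound on~\degpara that depends only on the input. For \DA, the relevant quantity is the maximum in- or outdegree of a solution digraph, and \cref{lem:DA-kernel-k-delta-bound} provides an upper bound $\hat\Delta := 4k(\Delta_D+2)^2 + \Delta_D$ on~$\Delta_{D+S}$ for a \emph{minimum-size} $k$-insertion set~$S$. Since a minimum-size solution is itself a solution, restricting the search to solutions of maximum degree at most~$\hat\Delta$ loses no yes-instance, so we may legitimately run~$\mathcal{B}$ with~$\hat\Delta$ in place of~\degpara. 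As $\hat\Delta$ depends only on~$k$ and~$\Delta_D$, the overall running time becomes $g(\hat\Delta)\cdot n^{O(1)} = g(4k(\Delta_D+2)^2+\Delta_D)\cdot n^{O(1)}$, which is fixed-parameter tractable with respect to~$(k,\Delta_D)$.

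The main obstacle I expect is this last point: verifying that substituting the explicit upper bound~$\hat\Delta$ for the a priori unknown~\degpara is sound throughout the framework. Concretely, one has to check that the large-solution transfer (\cref{lem:factor} and \cref{cor:LargePiTupleSolution}), applied with degree bound~$\hat\Delta\ge\degpara$, still produces a genuine graph solution, and that the fallback to the $(s,\Delta_D)$-algorithm on the truncated budget~$2\hat\Delta^2$ remains correct precisely because \cref{lem:DA-kernel-k-delta-bound} guarantees a minimum solution of degree at most~$\hat\Delta$. Once this soundness is in place, the bound from \cref{lem:DA-kernel-k-delta-bound} directly converts the~\degpara-parameterization into the~$(k,\Delta_D)$-parameterization, completing the proof.
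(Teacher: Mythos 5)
Your proposal is correct and follows essentially the same route as the paper: the paper obtains \cref{cor:DA-FPT-deltastern} precisely by combining \cref{thm:PiEAfptDeltak,thm:FPT_transfer,thm:nDA-FPT} (which together give \cref{cor:DDAfpt}, i.e., fixed-parameter tractability with respect to~\degpara) with the degree bound $\degpara \le 4k(\Delta_D+2)^2 + \Delta_D$ from \cref{lem:DA-kernel-k-delta-bound}. Your additional soundness check---that substituting this explicit upper bound for the a priori unknown~\degpara is legitimate because a minimum-size solution witnesses any yes-instance---is exactly the implicit justification the paper relies on.
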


It remains open whether \DA is fixed-parameter tractable with respect to~$\Delta_D$.
We remark that the problems \DDCC and \DDSC are both NP-hard for~$\Delta_D = 3$. 
This follows from an adaption of the construction given by \citet[Theorem 3.2]{Mil15}.

\section{Conclusion}
We proposed a general framework for digraph degree sequence completion
problems and demonstrated its wider applicability in case studies. 
Somewhat surprisingly, the presumably more technical case 
of digraphs allowed for some elegant tricks (based on
flow computations) that seem not to work for the 
presumably simpler undirected case. 
Once having established the framework (see Section~\ref{sec:GeneralSetting}), 
the challenges then
associated with deriving fixed-parameter tractability and kernelizability 
results usually boil down to the question for fixed-parameter tractability and (pseudo-)polynomial-time solvability of a simpler
problem-specific number problem. 
While in most cases we could develop polynomial-time algorithms solving these number problems, in the case of \DAlong the polynomial-time solvability of the associated number problem remains open.
Moreover, a widely open field is to attack weighted versions 
of our problems. 
Finally, we believe that due to the fact that many real-world networks 
are inherently directed 
(e.g., representing relations such as ``follower'', ``likes'', or ``cites'')
further studies (e.g., exploiting special digraph properties) 
of digraph degree sequence completion problems are desirable.

\bibliographystyle{abbrvnat}
\bibliography{bibfile}



\end{document}